\documentclass{article}
\usepackage{geometry}
\usepackage{hyperref}
\usepackage{lmodern}
\usepackage{animate}
\usepackage{tikz}
\usepackage{tikz-cd}
\usetikzlibrary{shapes.misc, positioning}
\usetikzlibrary{arrows,shapes}
\usetikzlibrary{babel}
\usepackage{mathrsfs,amsmath}
\usepackage{amsthm}
\usepackage{physics}
\usepackage{amsfonts}
\usepackage{amssymb}
\usepackage{graphicx}
\usepackage{wrapfig}
\usepackage{cancel}
\usepackage{braket}
\usepackage[all]{xy}
\usepackage{xcolor}
\usepackage{adjustbox}
\usepackage{subcaption}
\usepackage{capt-of}
\usepackage{tabu}
\makeatletter
\def\fdsy@scale{1.8}
\newcommand\fdsy@mweight@normal{Book}
\newcommand\fdsy@mweight@small{Book}
\newcommand\fdsy@bweight@normal{Medium}
\newcommand\fdsy@bweight@small{Medium}
\DeclareFontFamily{U}{FdSymbolC}{}
\DeclareFontShape{U}{FdSymbolC}{m}{n}{
    <-7.1> s * [\fdsy@scale] FdSymbolC-\fdsy@mweight@small
    <7.1-> s * [\fdsy@scale] FdSymbolC-\fdsy@mweight@normal
}{}
\DeclareFontShape{U}{FdSymbolC}{b}{n}{
    <-7.1> s * [\fdsy@scale] FdSymbolC-\fdsy@bweight@small
    <7.1-> s * [\fdsy@scale] FdSymbolC-\fdsy@bweight@normal
}{}
\makeatother
\DeclareFontFamily{U}{FdSymbolA}{}
\DeclareFontShape{U}{FdSymbolA}{m}{n}{<->FdSymbolA-Book}{}
\DeclareSymbolFont{extrasymbols}{U}{FdSymbolA}{m}{n}
\DeclareMathSymbol{\vardiamondsuit}{\mathord}{extrasymbols}{182}
\DeclareMathSymbol{\varheartsuit}{\mathord}{extrasymbols}{184}

\DeclareMathOperator{\sgn}{sgn}
\definecolor{color1}{RGB}{12,18,155}
\definecolor{color2}{RGB}{83,89,155}
\definecolor{color3}{RGB}{75,80,85}
\newtheorem{theorem}{Theorem}[section]
\newtheorem{corollary}{Corollary}[theorem]
\newtheorem{lemma}[theorem]{Lemma}
\newtheorem{remark}{Remark}

\newcommand\scalemath[2]{\scalebox{#1}{\mbox{\ensuremath{\displaystyle #2}}}}

\begin{document}

\title{\textbf{A note on Haag duality}}

\author{Alan Garbarz$^\diamondsuit{}^\perp$, Gabriel Palau$^\diamondsuit$}
\date{}
\maketitle

\vspace{.25cm}

\begin{minipage}{.9\textwidth}\small \it 
	\begin{center}
    $^\diamondsuit$ Departamento de F\'isica-FCEN-UBA ,
	Ciudad Universitaria, pabell\'on 1, 1428, Buenos Aires, Argentina.
     \end{center}
\end{minipage}

\vspace{.25cm}

\begin{minipage}{.9\textwidth}\small \it 
	\begin{center}
    $^\perp$ IFIBA-CONICET,
	Ciudad Universitaria, pabell\'on 1, 1428, Buenos Aires, Argentina.
     \end{center}
\end{minipage}
\vspace{.5cm}

\maketitle

\begin{abstract}
    Haag duality is a remarkable property in QFT stating that the commutant of the algebra of observables localized in some region of spacetime is exactly the algebra associated to the causally disconnected region. It is a strong condition on the local structure and has direct consequences on entanglement measures. It was first shown to hold for a free scalar field and causal diamonds by Araki in 1964 and later by many authors in different ways. In particular, Eckmann and Osterwalder (EO) used Tomita-Takesaki modular theory to give a direct proof. However, it is not straightforward to relate this proof to the works of Araki, since they rely on two forms of the canonical commutation relations (CCR), called Segal and Weyl formulations, while EO work as starting point assumes that duality holds in the so-called ``first quantization'' in the Weyl formulation. It is our purpose to first introduce the works of Araki in a more easy-to-read but still rigorous and self-contained fashion, and show how Haag duality is stated in the Segal and Weyl formulations and in both first and second quantizations (and their immediate combination). This permits to understand the setting of the EO proof of Haag duality. There is nothing essentially new in this manuscript, with the exception of what we consider a simplification of EO proof that uses the adjoint $S^*$ of the Tomita operator $S$ instead of introducing several auxiliary operators. We hope this note will be useful for those seeking to understand where Haag duality comes from in a free scalar QFT.
    
    \begin{flushleft}
\hrulefill\\
\footnotesize
{E-mails:  alan@df.uba.ar, gpalau@df.uba.ar}
\end{flushleft}
\end{abstract}

\pagebreak

\tableofcontents

\pagebreak

\section{Introduction}

We are mainly interested in the property of Haag duality in QFT, which is related to the local  structure of the theory and superselection sectors \cite{haag1996local}, and this has important consequences on entanglement measures in QFT \cite{Arias:2018tmw,Kawahigashi:1999jz}. It is known to hold for a free scalar field on causal diamonds ( see \cite{araki1964neumann,dell'antonio1968, hislop1986simple, eckmann1973application}), free fermions (in a twisted version, see \cite{dell1968structure}) and the elecromagnetic field (for simply connected regions \cite{Leyland:1978iv}). It also holds in conformal field theories (see \cite{brunetti1993modular}) and in a form relative to globally-hyperbolic submanifolds of Minkowski \cite{camassa2007relative}.
It has been famously shown to hold for Rindler wedges in \cite{Bisognano:1975ih,Bisognano:1976za}.

In order to state the duality, it is first adequate to introduce the general philosophy of Algebraic Quantum Field Theory (AQFT) in a nutshell: to each open region of spacetime, an algebra of operators is assigned. This assignment is called \textit{a net} $N$. If we have a region $\mathcal{O}\subseteq \mathbb{M}^{4}$, then we call $N(\mathcal{O})$ the algebra of bounded operators localized in $\mathcal{O}$ . Causality dictates that given a region $\mathcal{O}$, the operators in $N(\mathcal{O'})$, i.e. those localized in the causal complement\footnote{$\mathcal{O}'$ is the region causally disconnected from $\mathcal{O}$. See \eqref{prima} for a precise definition. } $\mathcal{O}'$, commute with those in $N(\mathcal{O})$. In other words, $N(\mathcal{O'}) \subset N(\mathcal{O})'$, where prime in the algebra means commutant\footnote{Given a subset $\mathcal{M}\subseteq B(H)$ of all linear bounded operators acting on a Hilbert space $H$, its commutant is the set $\mathcal{M}'=\{x'\in B(H): x' x=x x' \text{ for all } x\in \mathcal{M}\}$.}.    
Haag duality is the property that the commutant of $N(\mathcal{O})$ coincides with the algebra of operators localized in $\mathcal{O'}$, namely 
\[N(\mathcal{O})'=N(\mathcal{O'})\qquad\qquad \text{(Haag duality)}\]

To the best of our knowledge, the first proof of Haag duality for the free scalar field was given by Araki. In \cite{araki1963lattice} two possible forms of the canonical commutation relations (CCR) algebra are introduced. They are called  Weyl  and   Segal formulations. In \cite{araki1963lattice} several properties of these CCR algebras are studied, such as the existence of a cyclic vector in the vacuum representation. However, here we are going to motivate and explain their formulations and how they relate to each other, and only discuss the properties that we need to arrive at the different presentations of Haag duality and its proof. 

Let us give more detail on what the reader will find in this manuscript. 
First, in Section 2, we begin with the very basics: the free scalar field theory and its space of solutions. The one-particle space of square-integrable functions on the positive mass hyperboloid is immediately introduced and allows to formulate the ubiquitous CCR $[a(f),a(g)^{\ast}]=(f,g)$, which can be represented by the creation and annihilation operators acting on the Fock space of such one-particle space. Then we introduce Segal's formulation of CCR, where the one-particle space is made of  real spacetime functions, and gives rise via  a representation to unitaries\footnote{In many places in the literature these unitaries are referred as Weyl unitaries instead. We follow the nomenclature of Araki which reserves the name Weyl unitaries for other unitary operators. } $W_{F}(f)$ that act on the same Fock space as before and obey an exponentiated form of the CCR. From the Segal one-particle space of real spacetime functions, it can be obtained a vector space of initial conditions. This motivates the definition of the Weyl CCR algebra and a representation by unitaries $U_{F}(f)$ and $V_{F}(g)$ which also act on the same Fock space.  Such unitaries are generated by the value of the field at some fixed time (say $t=0$) $\varphi(0,\vec{x})$ and its normal future-directed derivative $\pi(0,\vec{x})=\frac{\partial\varphi(t,\vec{x})}{\partial t}|_{t=0}$. 

In Section 3 we begin to dig deeper and present the mathematical construction of the one-particle symplectic vector spaces for the Segal and Weyl  formulations. In particular we spend some time clarifying the relation between the spacetime functions in the Segal construction and the pairs of  initial conditions of the Weyl construction, which are functions on a spacelike hypersurface.   

In Section 4 we present the so-called \textit{first quantization map}. This is a map $\mathsf{S}$ that assigns to open spacetime regions $\mathcal{O}$ a closed real subspace $\mathsf{S}(\mathcal{O})$ of a one-particle space: 
\[\mathsf{S}: \text{Open spacetime regions} \rightarrow \text{Closed real subspace of 1-particle space} \] 
We define these maps for the Segal and Weyl contexts and explain how Araki proves the duality property for them, which roughly reads $\mathsf{S}(\mathcal{O})'=\mathsf{S}(\mathcal{O'})$. We will explain, for each formulation, what $'$ refers to when applied to a closed vector subspace.    

In Section 5 we introduce the so-called \textit{second quantization map}. This map assigns to the closed vector spaces $\mathsf{S}$ of the first quantization a von Neumann algebra $R(\mathsf{S})$ of bounded operators on the corresponding Fock space:
\[R:\text{Closed real subspace of 1-particle space}\rightarrow \text{von Neumann algebra} \] 
Here we leave the guide of Araki's works in order to make direct contact with the starting point of the article by Eckmann and Osterwalder \cite{eckmann1973application}. There they proved Haag duality for the second quantization map with the help of the power of Tomita's modular theory. This is a direct way to obtain the duality, although we believe some details in that reference deserve better explanation or could be simplified. So in section 5 we follow \cite{eckmann1973application} but  at some instances propose different arguments that seem more clear to us, and specially we do not introduce several auxiliary operators. Instead, once the modular operator $S$ is introduced and studied, we can do similar computations for its adjoint $S^*$ that allow to reach the final steps of the proof more directly. Finally, in Section 5 we include a description of a counterexample of the duality constructed originally by Araki in \cite{araki1964neumann}. 

The reader interested solely in the proof of Haag duality can start at section 4, or even at section 5 if only interested in the utilization of Tomita's modular theory.  Many computations and technical clarifications are relegated to appendices. We follow most of the times the notation and conventions of Araki in \cite{araki1964neumann}.

\section{CCR representations in a nutshell}

Starting from the equation of motion for the real scalar field, there are a number of ways of present the CCR. Araki in \cite{araki1963lattice} and \cite{araki1964neumann} uses the usual Fock space, and at the same time both Segal and Weyl formulations as well. It will be important for us to explain them in detail, including the corresponding vacuum representation, and in this Section we give an overview of their constructions.

The real scalar field of mass $m>0$ satisfies the Klein-Gordon equation 
\begin{equation}\label{kleing}
	(\square + m^{2})\phi(x)=0.
\end{equation}
Fourier transforming this equation (see \eqref{fouriertransform} below) 
\begin{equation*}
	(-p^2 + m^{2})\mathscr{F}(\phi)(p)=0,
\end{equation*}
shows $\mathscr{F}(\phi)(p)$ is supported on  the union of  hyperboloids $\{ p\in\mathbb{R}^{4}| p^{2}=m^{2}\}$ but if $\phi$ is real $\mathscr{F}(\phi)(-p)=\mathscr{F}(\phi)(p)^{\ast}$. Then $\phi$ is completely defined by the value of $\mathscr{F}(\phi)$ over the positive mass hyperboloid $H_m:=\left\{p^2=m^2 |\quad p^{0}>0\right\}$. This motivates the following lines in which we follow \cite{palmer1978symplectic}. If $\eta: H_{m}\cup-H_{m}\to \mathbb{C}$ satisfies $\eta(-p)=\eta(p)^{\ast}$, we can obtain the following weak solution\footnote{ A weak solution of \eqref{kleing} is a distribution  $\Phi$ (tempered for us) such that $\Phi((\square + m^{2})f)=0$ for all $f\in \mathcal{S}(\mathbb{R}^{4})$.} of \eqref{kleing}
\begin{equation}\label{efe}
 	F_{\eta}(x)=\frac{-i}{(2\pi)^{\frac{3}{2}}}\int_{\mathbb{R}^{4}} e^{-ipx}\delta(p^2-m^2)\sgn(p^{0})\eta(p) d^{4}p.
\end{equation} 
If in addition $\eta=E(h)$, with $E(h)=\mathscr{F}(h)|_{H_{m}}$ and $h\in \mathcal{S}(\mathbb{R}^4,\mathbb{R})$ (real Schwartz function), then
\begin{equation}\label{suave}
    F_{E(h)}(x)=\int_{\mathbb{R}^{4}} \Delta(x-y)h(y)dy=\frac{-i}{(2\pi)^{\frac{3}{2}}}\int_{\mathbb{R}^{4}} e^{-ipx}\delta(p^2-m^2)\sgn(p^{0})\mathscr{F}(h)(p) d^{4}p,
\end{equation}
where $\Delta$ is the causal propagator\footnote{It can be defined in several equivalent ways. One is the one given by the second equality in \eqref{suave}. Another one is as $2\text{Re}\Delta^{(+)}$, see below \eqref{prodintespaciales}.}. Note that if the Fourier transform of $h$ vanishes when restricted to the hyperboloid, we have a trivial solution. In addition, if $h$ is a compactly supported smooth function the solution is a smooth function, see Theorem 1 in \cite{dimock1980algebras}.

The image under the map $F$ above of 
\begin{equation*}
    \{ h:H_{m}\cup-H_{m}\longmapsto \mathbb{C} |\quad  h(-p)=h(p)^{\ast}\text{ and }\int_{\mathbb{R}^{4}} |h(p)|^2\delta(p^2-m^2)d^{4}p<\infty \}\simeq L^{2}\left(\mathbb{R}^{3},\frac{d^3p}{\omega_{p}}\right)
\end{equation*} 
(where $\simeq$ is given by taking $p^{0}=\omega_p:=\sqrt{m^2+\vec{p}^2}$)  defines a (complex) Hilbert space of weak solutions of the Klein-Gordon equation. In what follows we introduce in detail different (real and complex) Hilbert spaces  associated to the space of (weak) Klein-Gordon fields.

\subsection{The complex one-particle Hilbert space \texorpdfstring{$L$}{}}\label{oneparticlespacesss}
Let us consider the smooth and real spacetime Schwartz functions\footnote{They carry a representation $U(a,\Lambda)$  of ISO$(3,1)$ such that, $
    [U(a,\Lambda)f](x) =f(\Lambda^{-1}(x-a)), \quad f \in \mathcal{S}(\mathbb{R}^4,\mathbb{R})$. Equivalently $\label{unirrep}
    [U(a,\Lambda)\mathscr{F}(f)](p)=e^{ip\cdot a} \mathscr{F}(f)(\Lambda^{-1}p)$. With the complex inner product of \eqref{ele} the representations become unitary.\label{footnoteirrep}} $\mathcal{S}(\mathbb{R}^4,\mathbb{R})$. Let us also introduce their Fourier transform,
\begin{equation}
    \mathscr{F}(f)(p):=(2\pi)^{-3/2}\int_{\mathbb{R}^{4}} dx f(x)e^{i p\cdot x} 
    \label{fouriertransform}
\end{equation}
We can define the following  symmetric semi-definite positive bi-linear form on 
 $\mathcal{S}(\mathbb{R}^{4},\mathbb{R})$ with values in $\mathbb{C}$,
\begin{equation}\label{prodintespaciales}
	(h_{1},h_{2})_{\mathcal{S}}=2i\int_{\mathbb{R}^{4}\times\mathbb{R}^{4}} h_{1}(x)\Delta^{(+)}(x-y)h_{2}(y)dx dy,
\end{equation}
where $\Delta^{(+)}(x)=\frac{-i}{(2\pi)^{3}}\int e^{-ipx} \delta(p^2-m^2)\theta(p^0) d^4p$ and $p\cdot x=p^0x^0-\vec{p}\cdot\vec{x}$. This is nothing but the Lorentz-invariant product on the mass hyperboloid $H_m$. This can be seen by noticing
\begin{align}\label{prodintmomento}
	(h_{1},h_{2})_{\mathcal{S}} & =2\int_{\mathbb{R}^{4}} \mathscr{F}(h_{1})^{\ast}(p)\mathscr{F}(h_{2})(p) \delta(p^2-m^2)\theta(p^0) d^4p \\
	        & =\int_{\mathbb{R}^{3}} \mathscr{F}(h_{1})^{\ast}(\omega(\vec{p}),\vec{p})\mathscr{F}(h_{2})(\omega(\vec{p}),\vec{p}) \frac{d^3p}{\omega(\vec{p})},
\end{align}
where $\omega(\vec{p})=\sqrt{m^2+\vec{p}^2}=:\omega_p$. There is a set of functions such that $(h,h)_\mathcal{S}=0$, namely those with Fourier transform that vanishes on $H_m$. We denote this set by $\mathcal{S}_{0}$. In order to obtain the complex one-particle Hilbert space, we first of all quotient by $\mathcal{S}_0$. In $\mathcal{S}(\mathbb{R}^4,\mathbb{R})/\mathcal{S}_{0}$ it is possible to define a multiplication by $i$. This can be achieved by imposing  the condition $h'=ih$ if $\mathscr{F}(h')|_{H_m}=i \mathscr{F}(h)|_{H_m}$, where $h$ and $h'$ are real Schwartz functions on $\mathbb{R}^4$. We relegate the independence of this definition with  respect to the chosen representative element to the subsection \ref{good definition}.  Note that although it may seem that multiplying by $i$ the Fourier transform of a real function necessarily makes the new function complex-valued (when back-transformed to coordinate space), the fact that this condition is imposed on $H_m$ resolves this puzzle. We explain how this works with an example in Appendix \ref{ejemplo}.

A multiplication by $i$ as just  defined turns $(,)_{\mathcal{S}}$ into a Hermitian inner product in $\mathcal{S}(\mathbb{R}^4,\mathbb{R})/\mathcal{S}_{0}$. By completing $\mathcal{S}(\mathbb{R}^4,\mathbb{R})/\mathcal{S}_{0}$ with respect to this complex inner product we arrive to a complex Hilbert space denoted by $L$,
\begin{equation}\label{ele}
    L:=\overline{\mathcal{S}(\mathbb{R}^4,\mathbb{R})/\mathcal{S}_{0}}
\end{equation}
This is understood to be equipped with the multiplication by $i$ defined above, and we denote the complex inner product by $(,)_{L}$.

The precise way to relate $L$ with the space of square-integrable functions on the mass hyperboloid is as follows. The map $E:\mathcal{S}(\mathbb{R}^4,\mathbb{R})\to L_{2}\left(\mathbb{R}^{3},\frac{d^3p}{\omega_{p}}\right)$ given by $E(h)=\mathscr{F}(h)|_{H_{m}}$ has dense range (see \cite{reedsimonvol2} Chapter X) and Ker$(E)=\mathcal{S}_{0}$, then $E$ passes to the quotient as an injective map. So we have 
\begin{equation*}
   L=\overline{\mathcal{S}(\mathbb{R}^4,\mathbb{R})/\mathcal{S}_{0}}\simeq L^{2}\left(\mathbb{R}^{3},\frac{d^3p}{\omega_{p}}\right),    
\end{equation*}
where also the complex structures are compatible, due to the definition of $i$ in $L$. As a conclusion, given $[h]\in \mathcal{S}(\mathbb{R}^4,\mathbb{R})/\mathcal{S}_{0}$, or thought as an element of $L^2\left(\mathbb{R}^{3},\frac{d^3p}{\omega_{p}}\right)$, we have, through the map $F_{E(h)}(x)$ in \eqref{suave}, a weak solution to the Klein-Gordon equation.

\subsection{Test spaces \texorpdfstring{$H$}{} and \texorpdfstring{$K$}{}}

Before defining the CCR-algebras we present the real Hilbert test spaces on which they are modeled. Both real test spaces we construct here can be related to the complex one-particle Hilbert space $(L,(,)_{L})$. The first test space $H$ is the one of real spacetime functions which at the end will help to construct smeared fields over spacetime regions. The second one, $K$, will be shown in Section 3 to be intimately related to the space of conjugate momenta at $t=0$ (and $\beta K$ is related to the other initial condition, the field at $t=0$).

\subsubsection{The real one-particle Hilbert space \texorpdfstring{$H$}{}}\label{segalspace}

Starting from the complex Hilbert space\
\footnote{What follows is a general construction of a real Hilbert space with complex structure $\beta$ from an arbitrary complex Hilbert space. But in order to make contact from the start with what we have been discussing previously, we have in mind the complex Hilbert space $L$.} $L$ viewed as densely generated by Schwartz real functions modulo $\mathcal{S}_0$, a real Hilbert  space $H$ together with a complex structure $\beta$ can be defined\footnote{In \cite{araki1964neumann} first a real Hilbert space $\hat{L}$ is defined just as we did for $H$ and the complex structure is given by $i$, borrowed from $L$. Then $H$ and $\beta$ are defined as we did and in the end $(\hat{L},i)$ is isomorphic to $(H,\beta)$. We are not going to make this distinction.}, namely an operator acting on $H$ such that:
\begin{equation}\label{betadef}
	\beta^{\ast}=-\beta \hspace{2cm} \beta^{2}=-1.
\end{equation}
First of all we consider $L$ as a real vector space and the real inner product given by $\Re{(,)_{L}}$. It is important that $L$ is complete with respect to it. Then we call $H=L$ as real vector spaces and we use in $H$ the inner product given by $(,)_{H}=\Re{(,)_{L}}$.
Note that given $h\in L$ in the real Hilbert space $H$ the elements $h$ and $ih$ are linearly independent while they are clearly linearly dependent as elements of $L$ (seen as a complex space).  The real inner product is 
\begin{equation}\label{ddd}
    (h_1,h_2)_H:=\Re{(h_1,h_2)_L} 
\end{equation} 
and $\beta$ is defined by
\begin{equation}
    (h_1,\beta h_2)_H:=-\Im{(h_1,h_2)_L}
    \label{betadefH}
\end{equation}
We must convince ourselves that  $\beta^2=-1$. This is possible by noticing that $\beta$ is the operator in $H$ that maps $h$ to $ih$, as $(h_1,\beta h_2)_H=-\Im{(h_1,h_2)_L}=\Re{i(h_1,h_2)_{L}}=\Re{(h_1,i h_2)_{L}}=(h_1,ih_2)_{H}$, for any $h_1$ and $h_2$. Therefore, $\beta$ can be identified with multiplication by the scalar $i$ in $L$. Then $\beta^2 h=\beta i h=i^2 h=-h$. We can also show that $\beta$ is compatible with the inner product, namely that the adjoint $\beta^*$ is equal to $-\beta$: $(-\beta h_1,h_2)_H=-(\beta h_1,h_2)_H=-( h_2,\beta h_1)_H=\Im{(h_2,h_1)_L}=\Im{\overline{(h_1,h_2)_L}}=-\Im{(h_1,h_2)_L}=(h_1,\beta h_2)_H$. In short, $\beta^{-1}=-\beta=\beta^*$.  
The role of the $\beta$ operator is to implement a distinction between the ``position and velocity'' initial conditions which are behind the idea of the Weyl formulation, as will be explained in more detail in the following sections.

The imaginary part of the inner product in $L$ gives a symplectic structure $\sigma$ in $H$
\begin{equation}
    \sigma(h_1,h_2):=\text{Im}(h_1,h_2)_L=-(h_1,\beta h_2).
        \label{sigma}
\end{equation}
In short when we talk about $H$ we will be talking about a real Hilbert space with inner product denoted by $(,)_{H}$ with an operator $\beta$ satisfying \eqref{betadef}.

It is also true that given $(H,\beta)$ a real Hilbert space with $\beta$ an operator satisfying \eqref{betadef}, it is possible to define a complex Hilbert space $L$. As a set  $L=H$. Guided by \eqref{ddd} and \eqref{betadefH} we define 
\begin{equation}\label{siete}
	(h_{1},h_{2})_{L}=(h_{1},h_{2})_{H}-i(h_{1},\beta h_{2})_{H}.
\end{equation} 
The defining properties of $\beta$ are used to prove that the equation above  defines an inner product in $L$. But in order to show this we need first to define the action $\mathbb{C}\curvearrowright L$, which is $(a+bi)h:= ah+b\beta h$. 

Now we want to see \eqref{siete} is sesquilinear. It pulls scalars out because the inner product in $H$ is $\mathbb{R}$-bilinear. Let us see what happens with multiplication by $i$ inside the inner product of $L$, 
\begin{align}
	(ih_{1},h_{2})_{L}&=(\beta h_{1},h_{2})_{H}-i(\beta h_{1},\beta h_{2})_{H}\\
	&=(h_{1},-\beta h_{2})_{H}-i(h_{1},h_{2})_{H} \\
	&=(-i)\left[(h_{1},h_{2})_{H}-i(h_{1},\beta h_{2})_{H}  \right] =(-i)(h_{1},h_{2})_{L},	
\end{align}
where \eqref{betadef} and $\mathbb{R}$-bilinearity were used. Then the behavior of the multiplication by $i$ is the one we expect. Similarly one can see that definition \eqref{siete} is $\mathbb{C}$-linear in the second argument.  
 
\subsubsection{The real Hilbert space \texorpdfstring{$K$}{} of initial conditions} \label{ka}

The real Hilbert space $K$ is a subspace $K\subseteq H$ that satisfies $H=K\oplus_{\mathbb{R}} \beta K $ (and also $L=K+iK$ is its complexification). This property does not give us a unique $K$, for details on a general way to construct such a $K$ the reader can see \cite{araki1963lattice}. 

Instead of that here we make an explicit connection between Segal's real Hilbert space $H$ and the space $K$ that we are going to use when we talk about Weyl's CCR-algebra. We start by defining $K$:
 \begin{equation*}
     K:=\overline{\left\{ [h] \in H |\quad  h\in \mathcal{S}(\mathbb{R}^4,\mathbb{R}) \,\, \text{is even in }x^0\right\}}^{\| \|_{H}}
 \end{equation*}
We will show that $\beta K$ consists of the (completion of the) subspace of odd functions in $x^0$, so that $H=K+\beta K$. Even more, we will prove that $K\perp_{H}\beta K$, and thus $H=K\oplus_{\mathbb{R}}\beta K$, a fact that will be crucial to establish the validity of Haag duality. As subspaces of $H$, both $K$ and $\beta K$ inherit the 
inner product of $H$.

Let us start by recalling some elementary facts. Any $f\in \mathcal{S}(\mathbb{R}^4,\mathbb{R})$ can be written in a unique way as a sum of its even and odd parts in $x^0$, $f=f_{+}+f_{-}$, with 
\begin{equation*}
	f_{+}(x)=\frac{1}{2}\big( f(x^0,\vec{x})+f(-x^0,\vec{x}) \big), \hspace{1cm}	f_{-}(x)=\frac{1}{2}\big( f(x^0,\vec{x})-f(-x^0,\vec{x}) \big), 
\end{equation*} 
In terms of the Fourier transforms this means 
\begin{equation}\label{parimparmomento}
\mathscr{F}(f_{+})(p)=\frac{1}{2}\big( \mathscr{F}(f)(p^0,\vec{p})+\mathscr{F}(f)(p^0,-\vec{p})^{\ast} \big), \hspace{.8cm} \mathscr{F}(f_{-})(p)=\frac{1}{2}\big( \mathscr{F}(f)(p^0,\vec{p})-\mathscr{F}(f)(p^0,-\vec{p})^{\ast} \big). 
\end{equation} 

On the other hand, the operator $\beta$ is better understood in momentum space. Let us recall its definition
\begin{align*}
	\mathscr{F}(\beta f)(p)&=i\mathscr{F}(f)(p) \text{ for }p\in H_{m}  \text{ or } \\
    \mathscr{F}(\beta f)(p)&=i\mathscr{F}(f)(p)\eta(p^0), 
\end{align*}
where $\eta(p^{0})$ is an infinitely differentiable odd real function such that $\eta(p^0)=1$ if $p^0\geq m$. Note that setting $p\in H_{m}$ in the second line we recover the definition of $\beta$. We  will come back to this in section \ref{beta}.
When we consider an even function $f$ in $x^0$, namely a function in $K\cap \mathcal{S}(\mathbb{R}^4,\mathbb{R})$, and taking into account \eqref{parimparmomento} we have
\begin{align*}
	\mathscr{F}(\beta f)(p)&=i\mathscr{F}(f)(p)\eta(p^0)=i\eta(p^0)\frac{1}{2}\big( \mathscr{F}(f)(p^0,\vec{p})+\mathscr{F}(f)(p^0,-\vec{p})^{\ast} \big) \\
	 &=\frac{1}{2}\big( i\eta(p^0)\mathscr{F}(f)(p^0,\vec{p})-(i\eta(p^0)\mathscr{F}(f)(p^0,-\vec{p}))^{\ast} \big) \\
	 &=\frac{1}{2}\big( \mathscr{F}(\beta f)(p^0,\vec{p})-\mathscr{F}(\beta f)(p^0,-\vec{p})^* \big),
\end{align*}
where it was used that $\eta(p^0)$ is real.

A general element $[f]\in K$ is the limit of a sequence $\{f_n\}_{n\in \mathbb{N}}\subseteq\mathcal{S}(\mathbb{R}^4,\mathbb{R})$ of even functions in $x^{0}$, then by boundedness of $\beta$, $\beta [f]=\lim_{n}\beta f_{n}$. But as we saw $\{\beta f_{n}\}_{n\in \mathbb{N}}$ is a sequence of odd functions in $x^{0}$, then we have that  $\beta K\subseteq \overline{\{ [f]\in H |\quad f\in \mathcal{S}(\mathbb{R}^4,\mathbb{R})\text{ odd in } x^0 \}}^{\| \|_{H}}$.

In a similar  way one can show that  $ \{ [f]\in H | \quad f\in \mathcal{S}(\mathbb{R}^4,\mathbb{R}) \text{ odd in } x^0 \} \subseteq \beta K $. Let us assume that $f\in \mathcal{S}(\mathbb{R}^4,\mathbb{R})$ is odd in $x^0$,
\begin{align*}
	\mathscr{F}(f)(p)&=\frac{1}{2}\big( \mathscr{F}(f)(p^0,\vec{p})-\mathscr{F}(f)(p^0,-\vec{p})^{\ast} \big) \\
	&=-\frac{1}{2}\big( \mathscr{F}(\beta^2 f)(p^0,\vec{p})-\mathscr{F}(\beta^2 f)(p^0,-\vec{p})^{\ast} \big) \\
	 &=-\frac{i}{2}\eta(p^0)\big( \mathscr{F}(\beta f)(p^0,\vec{p})+\mathscr{F}(\beta f)(p^0,-\vec{p})^{\ast} \big) \\
	 &=i \eta(p^0) \mathscr{F}((-\beta f)_+)(p)= \mathscr{F}(\beta(-\beta f)_+)(p)
\end{align*}
which shows that $f\in \beta K$. Let us recall tha $\beta$ is a closed operator because its inverse is bounded, then $\beta K$ is a closed subspace of $H$ so we still have an inclusion if we take closure of $ \{ [f]\in H | \quad f\in \mathcal{S}(\mathbb{R}^4,\mathbb{R}) \text{ odd in } x^0 \} \subseteq \beta K $ in order to have $ \overline{\{ [f]\in H | \quad f\in \mathcal{S}(\mathbb{R}^4,\mathbb{R}) \text{ odd in } x^0 \}}^{\| \|_{H}} \subseteq \beta K $. Then  
\begin{align*}
K&=\overline{\left\{ [h] \in H |\quad  h\in \mathcal{S}(\mathbb{R}^4,\mathbb{R}) \,\, \text{is even in }x^0\right\}}^{\| \|_{H}} \\
    \beta K &= \overline{\{ [f]\in H | \quad f\in \mathcal{S}(\mathbb{R}^4,\mathbb{R}) \text{ odd in } x^0 \}}^{\| \|_{H}}
\end{align*}
From now on we will use the subspaces  $K$ and $\beta K$ to refer to even and odd functions of $x^0$ together with its limit points.

It is now time to show that $K\perp_{H} \beta K$, a very simple but important fact that will be extensively used later on. In order to do this let us first of all rewrite the inner product of $H$ in a more convenient way 
\begin{align}
	(h_{1},h_{2})_{H}&=\Re{(h_{1},h_{2})_{\mathcal{S}}}=\Im{i(h_{1},h_{2})_{\mathcal{S}}} \nonumber\\
	&=-\int_{\mathbb{R}^{4}\times \mathbb{R}^{4}} h_{1}(x) \Delta_{1}(x-y) h_{2}(y) dx dy, 
	\label{Hinner}
\end{align}
where we have defined $\Delta_{1}=2\Im{\Delta^{(+)}}$. Now, $\Delta_{1}$ can be expressed as follows,
\begin{align*}
	\Delta_{1}(z)&=2\Im{\Delta^{(+)}(z)}=-\frac{2}{(2\pi)^3}\int_{\mathbb{R}^{4}} \cos(pz)\delta(p^2-m^2)\theta(p^0) d^4 p \\
	   &=-\frac{1}{(2\pi)^3}\int_{\mathbb{R}^{3}} \frac{\cos(\omega_{p}z^0-\vec{p}\cdot \vec{z})}{\omega_p}d^3 p.
\end{align*}
Then the inner product between $h_{1}$ and $h_{2}$ is
\begin{align*}
	(h_{1},h_{2})_{H}&=-\int_{\mathbb{R}^4}\int_{\mathbb{R}^4} h_{1}(x)\Delta_{1}(x-y)h_{2}(y) dx dy \\
	&= \frac{1}{(2\pi)^3}\int_{\mathbb{R}^3}  \bigg[\int_{\mathbb{R}^4} \int_{\mathbb{R}^4} h_{1}(x)\cos(\omega_{p}(x^0-y^0)-\vec{p}\cdot (\vec{x}-\vec{y}))h_{2}(y) dx dy \bigg]  \frac{d^3 p}{\omega_{p}},\\
	&= \frac{1}{(2\pi)^3}\int_{\mathbb{R}^3}  \bigg[\int_{\mathbb{R}^4} \int_{\mathbb{R}^4} h_{1}(x)\cos(\omega_{p}(x^0-y^0))\cos(\vec{p}\cdot (\vec{x}-\vec{y}))h_{2}(y) dx dy \bigg]  \frac{d^3 p}{\omega_{p}},
\end{align*}
which, by separating the factor $\cos(\omega_{p}(x^0-y^0))$ in two terms, can be seen to be zero since either the integral in $x^0$ or the one in $y^0$ vanishes if $h_1$ is an even function in $x^0$ and $h_2$ an odd function in $x^0$. The assertion $(h_{1},h_{2})_{H}=0$ for $h_{1}\in K$ and $h_{2}\in \beta K$ follows by continuity of the inner product.

\begin{figure}[!h]
\begin{center}
\begin{adjustbox}{width=.8\textwidth}
\begin{tikzpicture}
	\filldraw[color=black, fill=white,  thick](2,-7) rectangle (12,-5);
	\filldraw[color=black, fill=white,  thick](0,-.5) rectangle (5,2);
	\filldraw[color=black, fill=white,  thick](5,-4) rectangle (9,-1.5);
	\filldraw[color=black, fill=white,  thick](9,-.5) rectangle (14,2);
	

	\path 
	    (7,-6.5) node(eom)  { $(\square+m^2)\Phi=0$}
	   (7,-5.5) node(sol) {\textbf{Space of real solutions of the Klein-Gordon equation}}
	      (2.5,1.5) node(weyl) {\textbf{Weyl}}
	      (2.5,.5) node(w) {$K=\overline{\left\{\text{even functions of } x^0  \right\}}$}
	      (2.5,0) node(w2) {$\subset H=K \oplus \beta K$}
	      (11.5,1.5) node(segal) {\textbf{Segal}}
	      (11.5,.5) node(s) {$H=\overline{\mathcal{S}(\mathbb{R}^4,\mathbb{R})/\mathcal{S}_0}$}
	      (11.5,0)  node(s2) {with c.s. $\beta$}
	      (7,-2) node(fock) {\textbf{$H_m$-space}}
	      (7,-3) node(f) {$ L^2(\mathbb{R}^3,\frac{d^3p}{\omega_p})$} 
	      (11.4,-1.6) node(E1) {$E$ after $\beta\equiv i$ }
	      (2.4,-1.6) node(E2) {$E$ on $K+iK$ }
	      (7,1) node(inc) {inclusion}
	      (7.2,-4.5) node(F) {$F$};
	
	\draw[thick, right hook ->]  (5,.75) -- (9,.75);
	\draw[thick, ->]  (11.5,-.5) -- (9,-2.5);
	\draw[thick, ->]  (2.5,-.5) -- (5,-2.5);
	\draw[thick, ->]  (7,-4) -- (7,-5);
\end{tikzpicture}
\end{adjustbox}
\end{center}
\caption{A schematic diagram showing that associated to the space of real (weak) solutions of the equation of motion, we can construct different Hilbert spaces. They will give different versions of the so-called first quantization. The 1-particle space on $H_m$ is a complex space, while the Segal and Weyl vector spaces are real, and Segal's has a complex structure (c.s.) $\beta$. In the definition of $H$, $\mathcal{S}_0  $ are the Schwartz functions whose Fourier transform vanishes on the mass hyperboloid. The map $E$ is given by $E(h)=\mathscr{F}(h)|_{H_m}$ which acts on the complex Hilbert space $L$, so we first need to set $\beta\equiv i$ as explained around \eqref{siete}. The map $F$ is defined in \eqref{efe}.}
\label{firstquantization}
\end{figure}
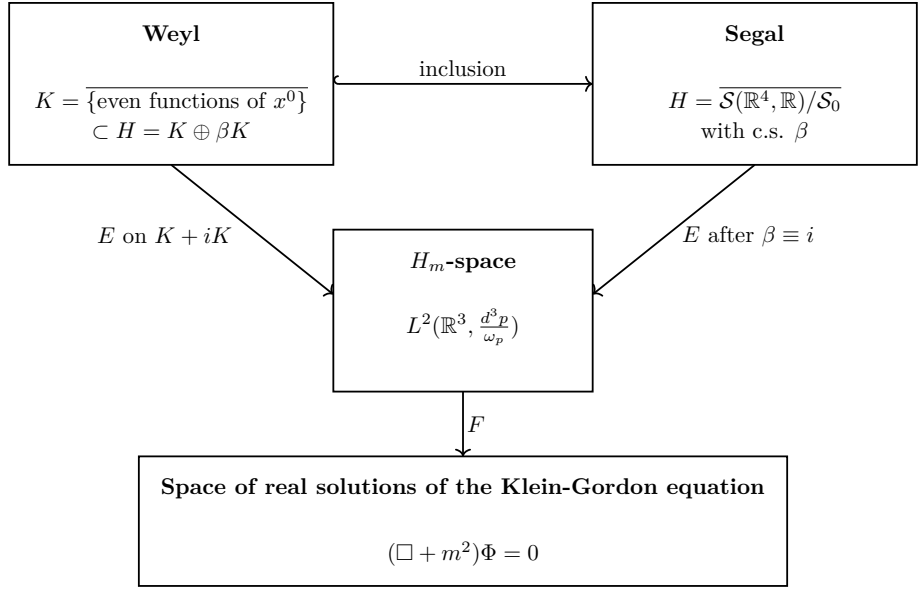

\subsection{Segal and Weyl CCR algebras}

The CCR-algebra in Segal form uses the real vector space $(H,\beta)$. It is the ${\ast}$-algebra generated by abstract elements $\{W(h): \text{ for all }h\in H\}$, with the involution given by $$W(f)^{\ast}=W(-f)$$ and satisfying the commutation relation 
\begin{equation}\label{ccrsegal}
    W(f)W(g)=e^{\frac{i}{2}(f,\beta g )_{H}}W(f+g)=W(f+g)e^{-\frac{i}{2}\sigma(f,g)},
\end{equation}
for all $f,g\in H$, where the identity element of the algebra is $W(0)=1_{S}$. Note that at this point the $W(f)$ are just abstract elements of Segal's CCR-algebra and are not acting on any additional Hibert space. We will construct a representation in the next section, but before that let us define similarly Weyl's CCR-algebra over $K$. 

The CCR-algebra in Weyl form, over $K$, is the ${\ast}$-algebra generated by abstract elements $\{U(f), V(g): \text{ for all }f,g\in K\}$, with the involution given by 
\begin{equation*}
    U(f)^{\ast}=U(-f) \hspace{1cm} V(g)^{\ast}=V(-g)
\end{equation*} and satisfying $U(0)=V(0)=1_{W} $ together with the commutation relation
\begin{equation}\label{weylrep}
    \begin{split}
       U(f_{1}+f_{2})&=U(f_{1})U(f_{2}) \\
	V(g_{1}+g_{2})&=V(g_{1})V(g_{2}) \\
	U(f_{1})V(g_{1})U(f_{2})V(g_{2})&=U(f_{1}+f_{2})V(g_{1}+g_{2})e^{i(f_{2},g_{1})_{K}},
    \end{split}
\end{equation}
with $1_{W}$ again the identity element of the algebra.

An important fact for us is that these two $\ast$-algebras are isomorphic at that level, with the isomorphism given by 
\begin{equation}
   \kappa: W(h) \longmapsto e^{\frac{i}{2}(f,g)_{K}}U(f)V(g),
   \label{isoWeylSegal}
\end{equation}
where $f$ and $g$, in $K$, are given by the relation $h=f+\beta g$, and \eqref{isoWeylSegal} is extended by $\mathbb{C}$-linearity. We will show how the commutation relations imply that $\kappa$ respects the product of the form $W(h_{1})W(h_{2})$, and for general products one must use $\mathbb{C}$-linearity.
So let us suppose that $h_{1}, h_{2}\in H$, then there are $f_1, g_1,f_2,g_2\in K$ such that $h_{1}=f_1+\beta g_1$ and $h_{2}=f_2+\beta g_2$
and obviously $h_{1}+h_{2}=(f_1+f_2)+\beta (g_1+g_2)$. Then using both commutation relations \eqref{ccrsegal} and \eqref{weylrep}, together with the fact that $K\perp_{H}\beta K$ and unitarity of $\beta$ one we can check by a straightforward computation that $\kappa(W(h_{1})W(h_{2}))=\kappa(W(h_{1}))\kappa(W(h_{2}))$. Moreover we can compute what $\kappa$ does with adjoints, $\kappa(W(h)^{\ast})=\kappa(W(h))^{\ast}$.
This shows that $\kappa$ is an isomorphism of $\ast$-algebras. The fact that $\kappa$ maps the identity element $1_{S}$ to $1_{W}$ is straightforward. It is possible to add more structure (a norm using the spectral radius, see \cite{haag1996local}) on these two algebras and show they are also isomorphic at this level, that is isomorphic as $C^{\ast}$-algebras, but we will not do such thing.

In order to give a representation of these two algebras as bounded operators acting on a complex Hilbert space, we first must construct such space and that is what we do in the following subsection. 

\subsection{The Fock space over \texorpdfstring{$L$}{}}\label{Fockrep}

From $L$ we construct the Fock space, denoted $\mathfrak{H}_{T}(L)$ following Araki. It is defined by  $\mathfrak{H}_{T}(L)=\overline{\bigoplus_{n=0}^{\infty} L^{\odot n}}$, where $\odot$ denotes the totally symmetric tensor product (i.e. $L^{\odot n}=$ Sym $L^{\otimes n}$). Given $h\in L$, we can build operators on $\mathfrak{H}_{T}(L)$ acting by
\begin{align}\label{creayaniq}
	a(h)(h_{1}\odot \cdots \odot h_{n})& =\frac{1}{\sqrt{n}}\sum_{j=1}^{n} (h,h_{j})_{L} h_{1}\odot \cdots \hat{h}_{j} \cdots \odot h_{n} \\
	a^{\ast}(h)(h_{1}\odot \cdots \odot h_{n})& =\sqrt{n+1} h\odot h_{1}\odot \cdots \odot h_{n}, 
\end{align}
taking the inner product in $L$ anti-linear in the first argument and where $\hat{h}_j$ means ``skipping the function $h_j$''. Note that over the union of the finite-occupation-number states they satisfy the well-known version of the CCR (see Figure \ref{secondquantization} below), 
\begin{equation}\label{conmut}
	[a(f),a^{\ast}(g)](\alpha)=(f,g)_{L}\alpha,
\end{equation}
for all $\alpha \in \cup_{N=0}^{\infty}\bigoplus_{n=0}^{N} L^{\odot n}$ (see appendix \ref{ap1}). Observe that $\cup_{N=0}^{\infty}\bigoplus_{n=0}^{N} L^{\odot n}\subseteq \mathfrak{H}_{T}(L)$ is a dense inclusion. 

Of course, $a(h)$ is the so-called annihilation operator and $a^{\ast}(h)$ the creation operator. Using these operators and the Fock space $\mathfrak{H}_{T}(L)$ we can proceed to construct the representations of the CCR-algebras in Segal and Weyl form. We will come back to the creation and anniliation operators in section \ref{secondq}. 

Note that what we have described  is the vacuum Fock space since the unitary time translation $U(\Delta t \delta^\mu_0)$ operator acts as $e^{ip^0 \Delta t}$ and we know the Fourier transform of the test functions is supported on the $\mathcal{H}_m$ hyperboloid, so the generator $P^0$ has positive spectrum.

\begin{figure}[!h]
\begin{center}
\begin{adjustbox}{width=.8\textwidth}
\begin{tikzpicture}
    
    \filldraw[color=black, fill=white,  thick](0,0) rectangle (4,2.5);
	\filldraw[color=black, fill=white,  thick](5,0) rectangle (9,2.5);
	\filldraw[color=black, fill=white,  thick](10,0) rectangle (14,2.5);
    \filldraw[color=black, fill=white,  thick, rounded corners=5mm](0,-3.5)  rectangle (4,-1);
	\filldraw[color=black, fill=white,  thick, rounded corners=5mm](5,-3.5) rectangle (9,-1);
	\filldraw[color=black, fill=white,  thick, rounded corners=5mm](10,-3.5) rectangle (14,-1);
\filldraw[color=black, fill=white,  thick, rounded corners=5mm, densely dashed](0,-7)  rectangle (4,-4.5);
	\filldraw[color=black, fill=white,  thick, rounded corners=5mm, densely dashed](5,-7) rectangle (9,-4.5);
	\filldraw[color=black, fill=white,  thick, rounded corners=5mm, densely dashed](10,-7) rectangle (14,-4.5);
\path 
(2,2) node(fock) {\textbf{$H_m$-space}}
	      (2,1) node(f) {$L \simeq L^2(\mathbb{R}^3,\frac{d^3p}{\omega_p})$}
	      (7,2) node(segal) {\textbf{Segal}}
	      (7,1) node(s) {$H=\overline{\mathcal{S}(\mathbb{R}^4,\mathbb{R})/\mathcal{S}_0}$}
	      (7,.5)  node(s2) {with c.s. $\beta$}
	      (12,2) node(weyl) {\textbf{Weyl}}
	      (12,1) node(w) {$K=\overline{\left\{\right. \text{even functions}}$}
	      (12,.5) node(w2) {$\overline{\text{of } x^0   \left.\right\}}\subset H=K \oplus \beta K$};
     \path 
      (2,-1.5) node(q1)    	  {\textbf{$H_m$ CCR algebra}}
	(2,-2.5) node(a2){ $[a(f),a(g)^*]=i(f,g)_L$}
	(7,-1.5) node(q2) {\textbf{Segal CCR algebra}}
	(7,-2.5) node(c) {$W(h_1)W(h_2)=$}
	(7,-3) node(a2){ $W(h_1+h_2)e^{\frac{i}{2}(h_1,\beta h_2)_H}$}
	(12,-1.5) node(q2) {\textbf{Weyl CCR algebra}}
	(12,-2.5) node(c) {$V(g)U(f)=$}
	(12,-3) node(a2){ $U(f)V(g)e^{i(f, g)_K}$} ;
    \path 
      (2,-5) node(q1)    	  {\textbf{Fock vacuum rep.}}
	(2,-6) node(a2){ $a(f)$ and $a(f)^*$ }
	(2,-6.5) node(a2){on the Fock of $L$}
	(7,-5) node(q2) {\textbf{Segal vacuum rep.}}
	(7,-6) node(c) {$W_F(h)=e^{\frac{i}{\sqrt{2}} (a^{\ast}(h)+a(h))}$}
	(7,-6.5) node(a2){on the Fock of $L$}
	(12,-5) node(q2) {\textbf{Weyl vacuum rep.}}
	(12,-6) node(c) {$U_F(f)=W_F(f)=:e^{i\varphi(f)}$}
	(12,-6.5) node(a2){ $V_F(g)=W_F(\beta g)=:e^{i \pi(g)}$} 
	(9.5,-2) node(iso) {iso};
	\draw[thick] (2,0)  --  (2,-1) ;
	\draw[thick] (7,0) --  (7,-1);
	\draw[thick] (12,0) -- (12,-1);
	\draw[thick, <->] (9,-2.25) -- (10,-2.25);
	\draw[thick, densely dashed, -> ] (2,-3.5) -- (2,-4.5);
	\draw[thick, densely dashed, ->] (7,-3.5) -- (7,-4.5);
	\draw[thick, densely dashed, -> ] (12,-3.5) -- (12,-4.5);
\end{tikzpicture}
\end{adjustbox}
\end{center}
\caption{A diagram showing that from the different one-particle Hilbert  spaces ($L, H, K$) we can construct algebras showcasing different versions of the CCR. They will provide alternative formulations of the so-called second quantization. The positive mass hyperboloid CCR algebra generated by 1, $a(f)$ and $a(f)^*$ is a unital $*-$algebra. The Segal and Weyl CCR algebras are isomorphic $C^*-$algebras. The last row indicates an additional step, where one introduces a state (the vacuum state here) and by the GNS construction a representation on the Fock space of $L$ is obtained \cite{araki1963lattice}. }
\label{secondquantization}
\end{figure}
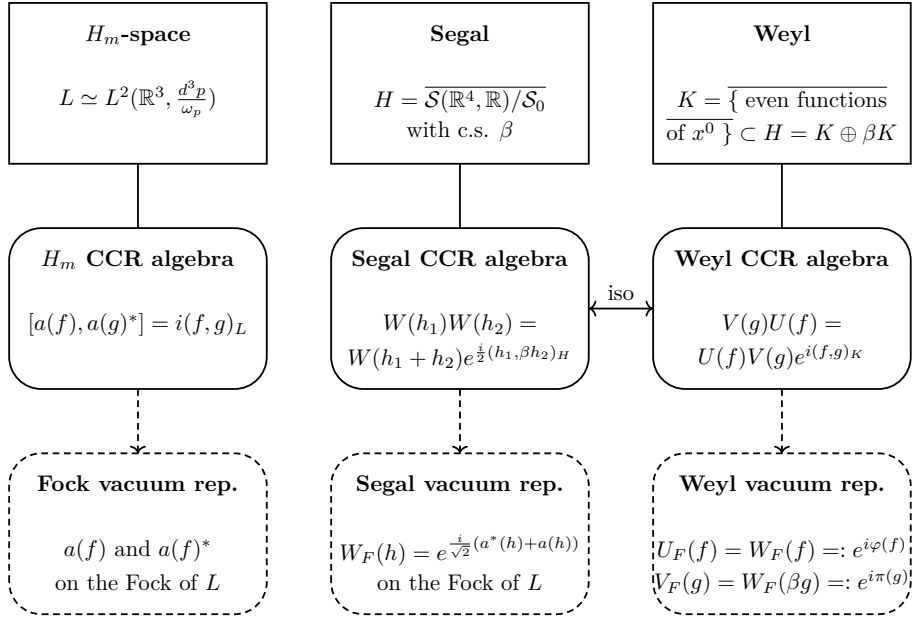

\subsection{Representation of the CCR-algebras}\label{weylrepresentation}

In short, Segal representation $W_{F}$ takes an element $h$ in the real Hilbert space $H$ and gives a unitary operator $W_{F}(h)$ acting on the complex Hilbert space $\mathfrak{H}_{T}(L)$, satisfying the relations \eqref{ccrsegal}. In order to construct such a representation we define the field $\chi(h)$ for each $h\in H$ as (the closure of) a combination of creation and annihilation operators introduced in the previous subsection\footnote{This is usually understood as the smeared scalar field operator $\int_{\mathbb{R}^4} d^4x h(x) \Phi(x)$.}  
\[ \chi(h)=\frac{1}{\sqrt{2}} \overline{(a^*(h)+a(h))},\] 
but this time as $a$ and $a^{\ast}$ have their  arguments in $H$ instead of $L$ (recall $H=L$ as real vector spaces), equation \eqref{conmut} must be replaced by   
\begin{equation}\label{conmutH}
	[a(h_{1}),a^{\ast}(h_{2})](\alpha)=\left((h_{1},h_{2})_{H}-i(h_{1},\beta h_{2})_{H}\right)\alpha
\end{equation} 
for all $\alpha \in \cup_{N=0}^{\infty}\bigoplus_{n=0}^{N} L^{\odot n}$, in agreement with \eqref{siete}. Then we define $W_{F}(h)=e^{i\chi(h)}$ and using this version of the commutation between creation and annihilation operators one obtains 
\begin{equation}\label{segalrep}
	W_{F}(h_{1})W_{F}(h_{2})=W_{F}(h_{1}+h_{2})e^{\frac{i}{2}(h_{1},\beta h_{2})_{H}}=W_{F}(h_{1}+h_{2})e^{-\frac{i}{2}\sigma(h_{1}, h_{2})}
\end{equation} 
straightforwardly. We call $\pi_{S}$ the representation map from the Segal's CCR- algebra given by:
\[ \pi_{S}: W(h)\longmapsto W_{F}(h).\] 
We also have $W_{F}(0)=1$ and $W_{F}(h)^*=W_{F}(-h)$. This representation gives a concrete unital $C^*-$algebra. An important fact is that even though the operators $\chi(h)$ are unbounded, they are self-adjoint and the operators $W_{F}(h)=e^{i\chi(h)}$ are unitary (see \cite{BR1}). 

There is another way to find a complex Hilbert space $\mathcal{H}$ on which we can represent the CCR-algebra in Segal form (or Weyl), this is the Gelfand–Naimark–Segal (GNS) construction. This procedure starts with a state\footnote{Given a unital $\ast$-algebra $R$, a state $\mu$ over $R$ is a $\mathbb{C}$-linear map $\mu: R\to \mathbb{C}$ which is positive (i.e.$\mu(a^{\ast}a)\geq 0 $ for all $a\in R$) and normalized (i.e. $\mu(1_{R})=1$).} $\mu$ over our $\ast$-algebra, and finally gives a representation $\pi_{\mu}$ for the $\ast$-algebra. If the state $\mu$ is given by the expression $\omega(W(h))=e^{-\frac{1}{4}(h,h)_{H}}$, it is possible to see that the complex Hilbert space so constructed $\mathcal{H}$ is unitarily equivalent to $\mathfrak{H}_{T}(L)$, so there exists a surjective isometric map $U:\mathcal{H}\to \mathfrak{H}_{T}(L)$ such that $U\pi_{\omega}(a)U^{-1}=\pi_{S}(a)$ for all $a$ in the CCR-algebra. This is why for our purposes it is enough to just construct $\mathfrak{H}_{T}(L)$. Note that for reasons already mentioned in the previous subsection $\mu$ is the vacuum state. For details on the GNS construction the reader can see \cite{araki1963representations}.  \\

The Weyl CCR-algebra has also a representation as bounded operators on $\mathfrak{H}_{T}(L)$. Obviously we are looking for operators $U_{F}(f)$, $V_{F}(g)$ for $f,g\in K$ satisfying the relations \eqref{weylrep}. First we define unbounded operators $\varphi_0(f)$  and $\pi_0(g)$ for $f,g\in K$ as linear combinations of the creation and annihilation operators\footnote{Again, these operators are usually formally understood as the scalar field operator and its conjugate momentum operator: $\int_{\mathbb{R}^4} d^4xf(x) \varphi(x) $ and $\int_{\mathbb{R}^4} d^4x g(x) \pi(x) $, respectively. In the following section we will see that they are equivalently  labelled by functions at fixed time. } on $\mathfrak{H}_{T}(L)$. These linear combinations are closable and their closures are self-adjoint operators\footnote{See for example \cite{reedsimonvol2}, Theorem X.41.\label{footnoteselfadjoint}} (i.e. $\varphi_{0}(f)$ and $\pi_{0}(g)$ are \textit{essentially selfadjoint}). Explicitly, working with $K\subseteq H=L$ (again as real vector spaces)
\begin{align*}
	\varphi(f)&=\frac{1}{\sqrt{2}}\overline{(a^{\ast}(f)+a(f))} \\
	\pi(g)&=\frac{i}{\sqrt{2}}\overline{(a^{\ast}(g)-a(g))}.
\end{align*}
with $f,g\in K$\footnote{It should be noted also that these creation and annihilation operators, which are fed by functions in $K$, are related to those in the CCR Fock representation which are fed by functions in $L$ (the 1-particle space). Note that the latter are  the (linear and antilinear) extensions of the former when $L$ is considered the complexification of $K$, $K+i K$.}. Again it is straightforward to see that the commutation relation for the creation and annihilation operators \eqref{conmut} implies 
\begin{equation}\label{golazo}
	\big[\varphi(h_{1}),\pi(h_{2})\big](\alpha)=i(h_{1},h_{2})_{K}\alpha \text{ con }h_{1},h_{2}\in K\subseteq H=L,
\end{equation}
for all $\alpha \in \cup_{N=0}^{\infty}\bigoplus_{n=0}^{N} L^{\odot n}$. Defining $U_{F}(f)=e^{i\varphi(f)}$ and $V_{F}(g)=e^{i\pi(g)}$ and using the commutation relations for $\varphi(f)$ and $\pi(g)$ it is direct to see that the relations of equations \eqref{weylrep} also holds for $U_{F}$ and $V_{F}$. As in Segal's case, the operators $U_{F}(f)$ and $V_{F}(g)$ are unitary.

In addition, before we had a  $\ast-$isomorphism $\kappa$, but at the level of representations we have an equality: the representations of the CCR-algebras in Segal and Weyl form are related by
\begin{equation}\label{weylsegal}
	W_{F}(h)=e^{\frac{i}{2}(f,g)_{K}}U_{F}(f)V_{F}(g),
\end{equation}
where $h=f+\beta g$. Here it is evident the role of  $\beta$ as an assistant to $W_{F}$ for distinguishing between the test functions of  $U_{F}$ and those of $V_{F}$ (basically, distinguishing between position and momentum inside $H$). 
If we write  \eqref{weylsegal} exhibiting the operators $\chi$, $\varphi$ and $\pi$, and the BCH formula is used, we get 
\begin{equation}\label{chi}
	\chi(f+\beta g)=\varphi(f)+\pi(g).
\end{equation}

\section{Explicit construction of 1-particle Hilbert spaces}

\subsection{Segal's real Hilbert space of spacetime test functions}

We are interested in spacetime real functions. The precise function space that we shall use at this point is  $\mathcal{S}(\mathbb{R}^{4},\mathbb{R})$ (modulo $\mathcal{S}_0$) together with a complex structure $\beta$, which we introduced in a somewhat abstract way in  the previous section. After implementing the construction of the real Hilbert space $H$ that we have already described, it remains to give a more hands-on presentation of $\beta$, which we do now.

\subsubsection{Definition of \texorpdfstring{$\beta$}{} operator}\label{beta}

We need to define an operator on $H$ such that satisfies \eqref{betadef}. These conditions mean that $\beta$ is a complex structure compatible with the inner product. Such inner product is determined by the value of the Fourier transform evaluated on the mass hyperboloid $H_{m}$.This motivates to define $\beta$ as multiplication by i on the space of square integrable functions on $H_{m}$. More precisely, if $h\in \mathcal{S}(\mathbb{R}^{4},\mathbb{R})$

\begin{equation}\label{betafourier}
	\mathscr{F}(\beta h)(p)=i\mathscr{F}(h)(p) \hspace{1cm} \text{with } p\in H_{m}.
\end{equation}
It should be noticed that the equality is only demanded on the hyperboloid.  

In order to find an explicit form of $\beta h$ for $h \in \mathcal{S}(\mathbb{R}^{4},\mathbb{R})$, Araki suggests  (on footnote 6 of \cite{araki1964neumann}) to introduce an arbitrary odd function $\eta(p^0)$ in $C^{\infty}(\mathbb{R})$ that is $1$ for $p^0\geq m$. Then for any $m>0$ we can take for instance,
\begin{align*}
 \eta(p^0)&=-\theta(-p^0)\theta(m+p^{0})[e^{1/p^0}/(e^{1/p^0}+e^{1/(-p^0-m)})]-\theta(-p^0-m) \\
   &+\theta(p^0)\theta(m-p^0)[e^{-1/p^0}/(e^{-1/p^0}+e^{1/(p^0-m)})]+\theta(p^0-m).
 \end{align*}
Such a function is useful because of the following. Let us consider a function $g$ defined by  
\begin{equation*}
    \mathscr{F}(g)(p)=i\mathscr{F}(h)(p)\eta(p^0), 
\end{equation*}
with $h\in \mathcal{S}(\mathbb{R}^{4},\mathbb{R})$. Then $g$  coincides with $\beta h$,  since when the previous expression is evaluated on $H_{m}$,  $\eta(p^0)=1$ and (\ref{betafourier}) is recovered. By the inverse Fourier transform we get,
\begin{equation}\label{defg}
	g(x)=i\mathscr{F}^{-1}(\mathscr{F}(h)(p)\eta(p^0))(x).
\end{equation}

It can be seen that  $g \in \mathcal{S}(\mathbb{R}^4,\mathbb{R})$. In order to do this, first note that  $g\in \mathcal{S}(\mathbb{R}^4,\mathbb{C})$, since$\mathscr{F}: \mathcal{S}(\mathbb{R}^4,\mathbb{R})\subseteq\mathcal{S}(\mathbb{R}^4,\mathbb{C})\to \mathcal{S}(\mathbb{R}^4,\mathbb{C})$ (idem for $\mathscr{F}^{-1}$) and the fact that a bounded function as is $\eta$ multplied by another function in $\mathcal{S}(\mathbb{R}^4,\mathbb{C})$ is still in  $\mathcal{S}(\mathbb{R}^4,\mathbb{C})$. So to see that indeed $g$ takes values in $\mathbb{R}$ we need to use that $\eta$ is odd: if we write \eqref{defg} explicitly,
\begin{align*}
	g(x)&=\frac{i}{(2\pi)^3}\int_{\mathbb{R}^4} h(x') \int_{\mathbb{R}^4} \eta(p^0)e^{ip(x'-x)}dp dx' \\
	  	&=-2\int_{0}^{\infty} \eta(p^0)\left[ \int_{-\infty}^{\infty} h(x^{0\prime},\vec{x})\sin(p^0(x^{0\prime}-x^{0}))dx^{0\prime} \right] dp^{0} 
\end{align*}
which is real if  $h\in \mathcal{S}(\mathbb{R}^4,\mathbb{R})$, so we have shown that $\beta|_{\mathcal{S}(\mathbb{R}^4,\mathbb{R})}:\mathcal{S}(\mathbb{R}^4,\mathbb{R})\to \mathcal{S}(\mathbb{R}^4,\mathbb{R})$. An explicit example of the action of $\beta$ on a specific function can be found in Appendix \ref{ejemplo}. To see how $\beta$ is extended to $H$ see how we extended the multiplication by $i$ in Section \ref{oneparticlespacesss}.

\subsubsection{It's a good definition}\label{good definition}

Although is may seem that the freedom in the choice of $\eta$ to compute $\beta h$ spoils the definition of $\beta$, it is not so. $\beta h$ should be thought as an element in $H$ not just in $\mathcal{S}(\mathbb{R}^4,\mathbb{R})$. Recall that $H$ is the completion of $\mathcal{S}(\mathbb{R}^4,\mathbb{R})/\mathcal{S}_0$ with real inner product \eqref{ddd}. Let us show that if given $h\in \mathcal{S}(\mathbb{R}^4,\mathbb{R})$ and two functions $h'$ and $h''$ in $\mathcal{S}(\mathbb{R}^4,\mathbb{R})$ satisfying  \eqref{betafourier}, then they are in the same  class in $H$. So we first assume
\begin{equation}\label{condicion}
	\mathscr{F}(h')(p)=i\mathscr{F}(h)(p) \hspace{0.5cm} \text{with } p\in H_{m}\hspace{1.6cm} 	\mathscr{F}(h'')(p)=i\mathscr{F}(h)(p) \hspace{0.5cm} \text{with } p\in H_{m}.
\end{equation}
Then the distance between them is
\begin{align*}
	||h'-h''||_{\mathcal{S}}^{2}&=(h'-h'',h'-h'')_{\mathcal{S}}=\int_{\mathbb{R}^{3}} |\mathscr{F}(h')(\omega_p,\vec{p})-\mathscr{F}(h'')(\omega_p,\vec{p})|^{2}\frac{d^{3}p}{\omega_p} \\
	  &=\int_{\mathbb{R}^{3}} |i\mathscr{F}(h)(\omega_p,\vec{p})-i\mathscr{F}(h)(\omega_p,\vec{p})|^{2}\frac{d^{3}p}{\omega_p}=0,
\end{align*}
where we have used \eqref{prodintmomento} for the inner product. This means that $\beta:H\to H$ is well defined and, given  $h\in \mathcal{S}(\mathbb{R}^4,\mathbb{R})$ and $h'$ y $h''$ satisfying \eqref{condicion}, then $h'-h''\in \mathcal{S}_{0}$.

\subsection{Weyl's real Hilbert space and initial conditions}

\subsubsection{The spaces of initial conditions  \texorpdfstring{$\mathfrak{F}_{\varphi}$}{} and \texorpdfstring{$\mathfrak{F}_{\pi}$}{}}

At this point we are ready to define the function spaces of initial conditions  $\mathfrak{F}_{\varphi}$ and $\mathfrak{F}_{\pi}$, which are essential both to the study of the solutions of the Klein-Gordon equation as well as to the first quantization in Weyl form. Let us start from  $\mathcal{S}(\mathbb{R}^3,\mathbb{R})$, since we are exploring the space of functions at fixed time. This is dense in $L^{2}(\mathbb{R}^3,\mathbb{R})$, with the standard inner product. We define the (three-dimensional) Fourier transform\footnote{We will use both the three-dimensional and four-dimensional Fourier transforms, and with abuse of notation are both noted by $\mathscr{F}$. We hope it is clear from the context which one we are using.}
\begin{equation*}
		\mathscr{F}(f)(p)=\frac{1}{(2\pi)^{\frac{3}{2}}}\int_{\mathbb{R}^{3}} f(\vec{x})e^{-i\vec{p}.\vec{x}}d^{3}x,
\end{equation*}
and the operators $\omega^{\alpha}$ as multiplication by $\omega^{\alpha}_{p}=(\sqrt{|\vec{p}|^{2}+m^2})^{\alpha}$ (in momentum space),
\begin{align*}
	\mathscr{F}(\omega^{\alpha} f)(\vec{p})&=\omega^{\alpha}_{p}\mathscr{F}(f)(\vec{p}), 	
\end{align*}
with $\alpha\in \mathbb{Q}$.

It is actually possible and desirable to consider different inner products on $\mathcal{S}(\mathbb{R}^3,\mathbb{R})$, as will become clear when we make the connection with $K$ and $\beta K$. So let us define,
\begin{align*}
	(f_{1},f_{2})_{\varphi}&:=(\omega^{-\frac{1}{2}} f_{1},\omega^{-\frac{1}{2}}f_{2})_{L^{2}} \\
	(g_{1},g_{2})_{\pi}&:=(\omega^{\frac{1}{2}} g_{1},\omega^{\frac{1}{2}} g_{2})_{L^{2}}, 
\end{align*}
The completions of $\mathcal{S}(\mathbb{R}^3,\mathbb{R})$ under these inner products are denoted by  $\mathfrak{F}_{\varphi}$ and $\mathfrak{F}_{\pi}$ respectively. Roughly, it can be stated that $\mathfrak{F}_{\varphi}$ is the space of functions whose Fourier transform divided by $\omega^{\frac{1}{2}}$ are in  $L^{2}$ with the standard Lebesgue measure, while  $\mathfrak{F}_{\pi}$ is the space of functions whose Fourier transform multiplied by $\omega^{\frac{1}{2}}$ are in $L^{2}$ again with the stander Lebesgue measure; namely their decay needs to be faster than what is necessary for them to be in $L^{2}$. We can be more precise: 
\[ \mathfrak{F}_{\pi} \subset L^2(\mathbb{R}^3,\mathbb{R}) \subset \mathfrak{F}_{\varphi}\]
with the inclusions\footnote{These maps are continuous, since $||f||_{\pi}^{2}=(\omega^{\frac{1}{2}} f,\omega^{\frac{1}{2}}f)_{L^{2}}=(\omega^{\frac{1}{2}}\mathscr{F}(f), \omega^{\frac{1}{2}}\mathscr{F}(f) )_{L^{2}} 
\geq m ||\mathscr{F}(f)||^{2}_{L^{2}}=m||f||^{2}_{L^{2}}$. Thus, $||j_{1}||\leq m^{-\frac{1}{2}}$. In a similar fashion it can be seen that $||j_{2}||\leq m^{-\frac{1}{2}}$.}
\begin{equation*}
    j_1:\mathfrak{F}_\pi \rightarrow L^2(\mathbb{R}^3,\mathbb{R}),\quad j_2:L^2(\mathbb{R}^3,\mathbb{R}) \rightarrow \mathfrak{F}_\varphi
    \label{jinclusions}
\end{equation*} 

The map $\omega^{\frac{1}{2}}:\mathcal{S}(\mathbb{R}^3,\mathbb{R})\to \mathcal{S}(\mathbb{R}^3,\mathbb{R})$ can be extended to $\omega^{\frac{1}{2}}:\mathfrak{F}_{\pi}\to L^{2}$ and to $\omega^{\frac{1}{2}}:L^{2}\to \mathfrak{F}_{\varphi}$, both extensions being onto\footnote{Let us show first that $\omega^{1/2}:L^2\rightarrow \mathfrak{F}_\varphi$ is onto. For any $f\in\mathfrak{F}_\varphi$, $f=\omega^{1/2}(\omega^{-1/2}f)$, so we need to show that $\omega^{-1/2}f \in L^2$. This is straightforward: $||\omega^{-1/2}f||_{L^2}^2=(\omega^{-1/2}f,\omega^{-1/2}f)_{L^2}=(f,f)_{\varphi}=||f||_\varphi^2<\infty$. In an identical way one can show that  $\omega^{1/2}:\mathfrak{F}_\pi\rightarrow L^2 $ is onto.}. 

How do we relate the spaces $K$ and $\beta K$ of even and odd functions in $x^0$ with the spaces of initial conditions just defined? The connection between these spaces comes from isometric isomorphisms relating $K$ with $\mathfrak{F}_{\varphi}$, and $\beta K$ with $\mathfrak{F}_{\pi}$. We denote these maps as $\delta_0$ and $\delta_1$, 
\[\delta_0:K\rightarrow \mathfrak{F}_\varphi,\quad \delta_1:\beta K \rightarrow \mathfrak{F}_\pi.\]
They are defined by (for $f\in K\cap \mathcal{S}(\mathbb{R}^{4},\mathbb{R})$ and $g\in \beta K\cap \mathcal{S}(\mathbb{R}^{4},\mathbb{R})$),
\begin{equation*}
	\mathscr{F}(\delta_{0}f)(\vec{p})=\mathscr{F}(f)(\omega_{p},\vec{p}) \hspace{1cm} 	\mathscr{F}(\delta_{1}g)(\vec{p})=(i\omega_{p})^{-1}\mathscr{F}(g)(\omega_{p},\vec{p}).
	\label{deltas}
\end{equation*}
These maps extend to $K$ and $\beta K$ by continuity (we will see they preserve the norm for elements in $\mathcal{S}(\mathbb{R}^{4},\mathbb{R})$). A natural question is how the operator $\beta$ induces maps between $\mathfrak{F}_\varphi$ and $\mathfrak{F}_\pi$. We can resort to the following commuting diagrams that define  $\bar{\beta}_{\varphi,\pi}$ and $\bar{\beta}_{\pi,\varphi}$.

\begin{center}
	\begin{minipage}{5cm}
		\[\scalemath{1.2}{
		\xymatrix{
			K \ar[r]^{\beta} \ar[d]_{\delta_{0}}  &   \beta K \ar[d]^{\delta_{1}} \\
			\mathfrak{F}_{\varphi} \ar@{-->}[r]_{\bar{\beta}_{\varphi,\pi}} & \mathfrak{F}_{\pi} 
		}}\]
	\end{minipage}
	\begin{minipage}{5cm}
			\[\scalemath{1.2}{
			\xymatrix{
				\beta K \ar[r]^{\beta} \ar[d]_{\delta_{1}}  &   K \ar[d]^{\delta_{0}} \\
				\mathfrak{F}_{\pi} \ar@{-->}[r]_{\bar{\beta}_{\pi,\varphi}} & \mathfrak{F}_{\varphi} 
			}}\]
	\end{minipage}
\end{center}
Namely,  $\bar{\beta}_{\varphi,\pi}=\delta_{1}\circ \beta \circ \delta_{0}^{-1}$ and $\bar{\beta}_{\pi,\varphi}=\delta_{0}\circ \beta \circ \delta_{1}^{-1}$.	

In order to see that $\delta_{0}$ and $\delta_{1}$ are isometries, let us start by noticing a simple form that adopts the norm  $||h||^{2}_{H}$ for $h\in H\cap\mathcal{S}(\mathbb{R}^{4},\mathbb{R})$. From the definition $\Delta_1=2 \Im{\Delta^{(+)}}$ and (\ref{Hinner}) it is straightforward to see that 
\begin{equation*}
    ||h||_H^2=\int_{\mathbb{R}^{3}} |\mathscr{F}(h)(\omega_{p},\vec{p})|^{2}\frac{d^{3}p}{\omega_p}.
    \label{Hnorm}
\end{equation*}
On the other hand, we can compute $||\delta_{0}f||^{2}_{\varphi}$ for $f\in K\cap \mathcal{S}(\mathbb{R}^{4},\mathbb{R})$ and  $||\delta_{1}g||^{2}_{\pi}$ for $g\in \beta K\cap \mathcal{S}(\mathbb{R}^{4},\mathbb{R})$, and check that these coincide with the norms of $f$ and $g$ in $H$ respectively. Indeed, 
\begin{align*}
	&||\delta_{0}f||^{2}_{\varphi}=(\delta_{0}f,\delta_{0}f)_{\varphi}=(\omega^{-\frac{1}{2}}\delta_{0}f,\omega^{-\frac{1}{2}}\delta_{0}f)_{L^{2}}=(\mathscr{F}(\omega^{-\frac{1}{2}}\delta_{0}f),\mathscr{F}(\omega^{-\frac{1}{2}}\delta_{0}f))_{L^{2}} \\
	&=(\omega^{-\frac{1}{2}}_{p}\mathscr{F}(\delta_{0}f)(\vec{p}),\omega^{-\frac{1}{2}}_{p}\mathscr{F}(\delta_{0}f)(\vec{p}))_{L^{2}}=(\omega^{-\frac{1}{2}}_{p}\mathscr{F}(f)(\omega_{p},\vec{p}),\omega^{-\frac{1}{2}}(\vec{p})\mathscr{F}(f)(\omega_{p},\vec{p}))_{L^{2}}   \\
	&=\int_{\mathbb{R}^{3}} |\mathscr{F}(f)(\omega_{p},\vec{p})|^{2}\frac{d^{3}p}{\omega_{p}}=||f||^{2}_{K}.
\end{align*}
Similarly for $\delta_{1}$, 
\begin{align*}
    &||\delta_{1}g||^{2}_{\pi}=(\delta_{1}g,\delta_{1}g)_{\pi}=(\omega^{\frac{1}{2}}\delta_{1}g,\omega^{\frac{1}{2}}\delta_{1}g)_{L^{2}}=(\mathscr{F}(\omega^{\frac{1}{2}}\delta_{1}g),\mathscr{F}(\omega^{\frac{1}{2}}\delta_{1}g))_{L^{2}} \\
 	&=(\omega^{\frac{1}{2}}_{p}\mathscr{F}(\delta_{1}g)(\vec{p}),\omega^{\frac{1}{2}}_{p}\mathscr{F}(\delta_{1}g)(\vec{p}))_{L^{2}}=(\omega^{\frac{1}{2}}_{p}\frac{1}{i\omega_{p}}\mathscr{F}(g)(\omega_{p},\vec{p}),\omega^{\frac{1}{2}}_{p}\frac{1}{i\omega_{p}}\mathscr{F}(g)(\omega_{p},\vec{p}))_{L^{2}}   \\
 	&=(\omega^{-\frac{1}{2}}_{p}\mathscr{F}(g)(\omega_{p},\vec{p}),\omega^{-\frac{1}{2}}(\vec{p})\mathscr{F}(g)(\omega_{p},\vec{p}))_{L^{2}} =\int_{\mathbb{R}^{3}} |\mathscr{F}(g)(\omega_{p},\vec{p})|^{2}\frac{d^{3}p}{\omega_{p}}=||g||^{2}_{\beta K}.
\end{align*}
Then, $\delta_{0}|_{K\cap \mathcal{S}(\mathbb{R}^{4},\mathbb{R})}:K\cap \mathcal{S}(\mathbb{R}^{4},\mathbb{R})\to \mathfrak{F}_{\varphi}\cap \mathcal{S}(\mathbb{R}^{3})$ and $\delta_{1}|_{\beta K\cap \mathcal{S}(\mathbb{R}^{4},\mathbb{R})}:\beta K\cap \mathcal{S}(\mathbb{R}^{4},\mathbb{R})\to \mathfrak{F}_{\pi}\cap \mathcal{S}(\mathbb{R}^{3})$ are isometries and because of this they are injective maps. 

It can be also shown that $\delta_{0}|_{K\cap C_{0}^{\infty}(\mathbb{R}^{4})}:K\cap C_{0}^{\infty}(\mathbb{R}^{4})\to \mathfrak{F}_{\varphi}\cap C_{0}^{\infty}(\mathbb{R}^{3})$ and $\delta_{1}|_{\beta K\cap C_{0}^{\infty}(\mathbb{R}^{4})}:\beta K\cap C_{0}^{\infty}(\mathbb{R}^{4})\to \mathfrak{F}_{\pi}\cap C_{0}^{\infty}(\mathbb{R}^{3})$ are surjective maps. We sketch the proof here and leave the technical details to Appendix \ref{ap3}. Let us consider two functions $f\in \mathfrak{F}_{\pi}\cap C^{\infty}_{0}(\mathbb{R}^{3})$ and $g\in \mathfrak{F}_{\varphi}\cap C^{\infty}_{0}(\mathbb{R}^{3})$, then the unique $C^{\infty}(\mathbb{R}^4)$ solution to the Klein-Gordon initial value problem (see equation \eqref{suave} and corollary 1.1 in \cite{dimock1980algebras}) is given by:
\begin{equation*}
 	F_{E(h)}(x)=\int_{\mathbb{R}^{4}} \Delta(x-y)h(y)dy=\frac{-i}{(2\pi)^{\frac{3}{2}}}\int_{\mathbb{R}^{4}} e^{-ipx}\delta(p^2-m^2)\sgn(p^{0})\mathscr{F}(h)(p) d^{4}p,
\end{equation*} 
where $\Delta=2\Re{\Delta^{(+)}}$ is the propagator and $h\in C_{0}^{\infty}(\mathbb{R}^{4})$ (as we will show and use in Section \ref{contra}). As we show in appendix \ref{ap3} one can easily arrive to the identities 
\begin{equation}\label{initial conditions}
 F_{E(h)}(0,\vec{x})=(\delta_{1}h_{-})(\vec{x}) \hspace{1cm} -\frac{\partial F_{E(h)}}{\partial x^{0}}(0,\vec{x})=(\delta_{0}h_{+})(\vec{x}), 
\end{equation}
where  $h_{\pm}$ are the even and odd parts of $h$ in $x^0$, as previously defined.  But on the other hand (see appendix \ref{ap3} again) 
\[F_{E(h)}(0,\vec{x})=f(\vec{x}),\quad 
		\frac{\partial F_{E(h)}}{\partial x^{0}}(0,\vec{x})=g(\vec{x}),\]
then $\delta_{0}(-h_{+})=g$ and $\delta_{1}(h_{-})=f$. 

Moreover, let us analyze $\delta_{0}:K\to \mathfrak{F}_{\varphi}$ deeper, if we take an arbitrary $g \in \mathfrak{F}_{\varphi}$ there is a sequence $\{g_{n}\}_{n\in \mathbb{N}}$ in $\mathfrak{F}_{\varphi}\cap C^{\infty}_{0}(\mathbb{R}^{3})$ such that $g_{n}\to g$ in the topology of $\mathfrak{F}_{\varphi}$, but as we already saw for each $g_{n}$ there is a $h_{n}\in K\cap C_{0}^{\infty}(\mathbb{R}^{4})$ such that $\delta_{0}(h_{n})=g_{n}$ so this $\delta_{0}(h_{n})$ is a Cauchy sequence and $h_{n}$ too, so it has a limit $h\in H$. Then $\delta_{0}|_{K\cap C_{0}^{\infty}(\mathbb{R}^{4})}:K\cap C_{0}^{\infty}(\mathbb{R}^{4})\to \mathfrak{F}_{\varphi}\cap C_{0}^{\infty}(\mathbb{R}^{3})$ can be extended by continuity to a map $\delta_{0}:K\to \mathfrak{F}_{\varphi}$ such that $\delta_{0}(h)=g$. The same argument can be translate to $\delta_{1}:\beta K\to \mathfrak{F}_{\pi}$ obtaining its surjectivity. Then both maps $\delta_{0}:K\to \mathfrak{F}_{\varphi}$ and $\delta_{1}:\beta K\to \mathfrak{F}_{\pi}$ are bijective isometries.

\subsubsection{Maps between \texorpdfstring{$\mathfrak{F}_\varphi$}{} and \texorpdfstring{$\mathfrak{F}_\pi$}{}}

We would like now to introduce how to translate the map $\beta$ to the spaces of the initial conditions $\mathfrak{F}_{\varphi}$ and $\mathfrak{F}_{\pi}$. Based on the previous subsection, we identify $K$ and $\beta K$ with $\mathfrak{F}_\varphi$ and $\mathfrak{F}_\pi$ respectively. In particular we can think of the one-particle space as $\mathfrak{F}_{\varphi}\oplus \mathfrak{F}_{\pi}$, and $\beta$ should be labeled  $\bar{\beta}_{\varphi,\pi}$ or $\bar{\beta}_{\pi,\varphi}$ depending on where lives the element to which it is applied. More precisely, we can write the elements of $\mathfrak{F}_{\varphi}\oplus \mathfrak{F}_{\pi}$ as column vectors and consider $\bar{\beta}:\mathfrak{F}_{\varphi}\oplus \mathfrak{F}_{\pi}\to \mathfrak{F}_{\varphi}\oplus \mathfrak{F}_{\pi}$ given by the matrix 
\begin{equation*}
    \bar{\beta}=\begin{pmatrix}
        0 & \bar{\beta}_{\pi,\varphi} \\
        \bar{\beta}_{\varphi,\pi} & 0
    \end{pmatrix}.
\end{equation*}
Then a straightforward computation shows that $\bar{\beta}^{2}=-\begin{pmatrix}
    1_{\mathfrak{F}_{\varphi}} & 0 \\
    0 & 1_{\mathfrak{F}_{\pi}}
\end{pmatrix}.$ Indeed,
\begin{align*}
    \bar{\beta}_{\pi,\varphi}\circ \bar{\beta}_{\varphi,\pi}&=(\delta_{0}\circ \beta \circ \delta_{1}^{-1})\circ(\delta_{1}\circ \beta \circ \delta_{0}^{-1}) \\
    &=\delta_{0}\circ \beta \circ 1_{\beta K}\circ \beta \circ \delta_{0}^{-1}=\delta_{0}\circ \beta^{2} \circ \delta_{0}^{-1} \\
    &=\delta_{0}\circ (-1_{K}) \circ \delta_{0}^{-1}=-1_{\mathfrak{F}_{\varphi}}.
\end{align*}
Similarly one can see that $\bar{\beta}_{\varphi,\pi}\circ \bar{\beta}_{\pi,\varphi}=-1_{\mathfrak{F}_{\pi}}$.

In order to show that $\bar{\beta}^{\ast}=-\bar{\beta}$ it is sufficient to see that $\left(\bar{\beta}_{\varphi,\pi}\right)^{\ast}=-\bar{\beta}_{\pi,\varphi}$ and  $\left(\bar{\beta}_{\pi,\varphi}\right)^{\ast}=-\bar{\beta}_{\varphi,\pi}$. Let us show one of these relations by considering again commuting diagrams 
\begin{center}
	\begin{minipage}{5cm}
		\[\scalemath{1.2}{
		\xymatrix{
			K \ar[r]^{\beta} \ar[d]_{\delta_{0}}  &   \beta K \ar[d]^{\delta_{1}} \\
			\mathfrak{F}_{\varphi} \ar@{-->}[r]_{\bar{\beta}_{\varphi,\pi}} & \mathfrak{F}_{\pi} 
		}}\]
	\end{minipage}
	\begin{minipage}{5cm}
		\[\scalemath{1.2}{
		\xymatrix{
			\beta K \ar[r]^{\beta} \ar[d]_{\delta_{1}}  &   K \ar[d]^{\delta_{0}} \\
			\mathfrak{F}_{\pi} \ar@{-->}[r]_{\bar{\beta}_{\pi,\varphi}} & \mathfrak{F}_{\varphi} 
		}}\]
	\end{minipage}
\end{center}
and  for $g \in \mathfrak{F}_{\varphi}$ and $f\in \mathfrak{F}_{\pi}$ we have
\begin{align*}
	(\bar{\beta}_{\varphi,\pi}g,f)_{\pi}&=(\delta_{1}^{-1}\bar{\beta}_{\varphi,\pi}g,\delta_{1}^{-1}f)_{H}=(\beta \delta_{0}^{-1}g,\delta_{1}^{-1}f)_{H} \\
	&=(\delta_{0}^{-1}g,\beta^{\ast}\delta_{1}^{-1}f)_{H}=(\delta_{0}^{-1}g,-\beta \delta_{1}^{-1}f)_{H} \\
	&=(\delta_{0}^{-1}g,-\delta_{0}^{-1}\bar{\beta}_{\pi,\varphi}f)_{H}=(g,-\bar{\beta}_{\pi,\varphi}f)_{\varphi}. 
\end{align*}
Notice that we have used the fact that both $\delta$'s are isometries and that  $\beta^{\ast}=-\beta$. Similarly $\left(\bar{\beta}_{\pi,\varphi}\right)^{\ast}=-\bar{\beta}_{\varphi,\pi}$, proving that \eqref{betadef} holds for $\bar{\beta}$. Then,  in matrix notation we have
\begin{align*}
    \left(\bar{\beta}\begin{pmatrix}
        g_{1} \\ f_{1}
    \end{pmatrix},\begin{pmatrix}
        g_{2} \\ f_{2}
    \end{pmatrix} \right)_{\mathfrak{F}_{\varphi}\oplus \mathfrak{F}_{\pi}}&= \left(\begin{pmatrix}
        0 & \bar{\beta}_{\pi,\varphi} \\
        \bar{\beta}_{\varphi,\pi} & 0
    \end{pmatrix}\begin{pmatrix}
        g_{1} \\ f_{1}
    \end{pmatrix},\begin{pmatrix}
        g_{2} \\ f_{2}
    \end{pmatrix} \right)_{\mathfrak{F}_{\varphi}\oplus \mathfrak{F}_{\pi}} =\left(
    \begin{pmatrix}
        \bar{\beta}_{\pi,\varphi}f_{1} \\ \bar{\beta}_{\varphi,\pi} g_{1}
    \end{pmatrix},
    \begin{pmatrix}
        g_{2} \\ f_{2}
    \end{pmatrix} \right)_{\mathfrak{F}_{\varphi}\oplus \mathfrak{F}_{\pi}} \\
    &=\left(\bar{\beta}_{\pi,\varphi}  f_{1}, g_{2} \right)_{\varphi}+\left(\bar{\beta}_{\varphi,\pi} g_{1}, f_{2}\right)_{\pi}=\left( f_{1},-\bar{\beta}_{\varphi,\pi}  g_{2} \right)_{\pi}+\left( g_{1}, -\bar{\beta}_{\pi,\varphi} f_{2}\right)_{\varphi} \\
    &=\left(\begin{pmatrix}
        g_{1} \\ f_{1}
    \end{pmatrix},\begin{pmatrix}
        0 & -\bar{\beta}_{\pi,\varphi} \\
        -\bar{\beta}_{\varphi,\pi} & 0
    \end{pmatrix}\begin{pmatrix}
        g_{2} \\ f_{2}
    \end{pmatrix} \right)_{\mathfrak{F}_{\varphi}\oplus \mathfrak{F}_{\pi}}=\left(\begin{pmatrix}
        g_{1} \\ f_{1}
    \end{pmatrix},-\bar{\beta}\begin{pmatrix}
        g_{2} \\ f_{2}
    \end{pmatrix} \right)_{\mathfrak{F}_{\varphi}\oplus \mathfrak{F}_{\pi}}
\end{align*} 

Later, we will need to see how $\bar{\beta}$ acts on an element in $\mathfrak{F}_{\varphi}$ or $\mathfrak{F}_{\pi}$. Let us once again consider the commuting diagram 
	\[\scalemath{1.2}{
\xymatrix{
	K \ar[r]^{\beta} \ar[d]_{\delta_{0}}  &   \beta K \ar[d]^{\delta_{1}} \\
	\mathfrak{F}_{\varphi} \ar@{-->}[r]_{\bar{\beta}_{\varphi,\pi}} & \mathfrak{F}_{\pi} 
}}\]
Let $f\in K\cap \mathcal{S}(\mathbb{R}^{4},\mathbb{R})$ and recall that $\beta f$ in momentum space is given by $\mathscr{F}(\beta f)(p)=i\mathscr{F}(f)(p)\eta(p^{0})$ (see the comments after \eqref{betafourier}). By using also $\delta_{0}$ and $\delta_{1}$ in momentum space \eqref{deltas}, we obtain an expression for $\bar{\beta}_{\varphi,\pi}$ in momentum space, mapping $\mathscr{F}(\delta_{0}f)(\vec{p})$ to $\mathscr{F}\left(\delta_{1}\beta f\right)(\vec{p})$. Namely,
\begin{align*}
	\bar{\beta}_{\varphi,\pi}\left(\mathscr{F}(\delta_{0}f)(\vec{p})\right)&=\mathscr{F}(\delta_{1}\beta f)(\vec{p})=(i\omega_p)^{-1}\mathscr{F}(\beta f)(\omega_p,\vec{p}) \\
\Longrightarrow	\bar{\beta}_{\varphi,\pi}\left(\mathscr{F}(f)(\omega_p,\vec{p})\right)&=\omega^{-1}_{p}\mathscr{F}(f)(\omega_{p},\vec{p}), 
\end{align*}
This means that $\bar{\beta}_{\varphi,\pi}(h)=\omega_p^{-1}h$ with $h\in \mathfrak{F}_\varphi\cap \mathcal{S}(\mathbb{R}^{3})$. Also, applying  $\bar{\beta}_{\pi,\varphi}$ to this equality one obtains $-h=\bar{\beta}_{\pi,\varphi}\left(\omega_p^{-1}h\right)$, or $\bar{\beta}_{\pi,\varphi}(f)=-\omega_p f$ with $f\in \mathfrak{F}_{\pi}\cap \mathcal{S}(\mathbb{R}^{3})$. Then we extend this by continuity to  $\mathfrak{F}_\varphi$ and $\mathfrak{F}_\pi$.

\section{First quantization and duality}

As we mentioned, AQFT is based on a net of algebras over spacetime regions. This means that to each open region of spacetime an algebra of observables is assigned. The construction of such net is sometimes presented as a two-steps process, the so-called first quantization and second quantization. In the first quantization, to each region of spacetime a closed real vector subspace is assigned (contained in some 1-particle space). We will denote this map by $\mathsf{S}$, such that $\mathsf{S}(\mathcal{O})\subset H$, with $\mathcal{O}$ an open region of spacetime. The second quantization is the construction of a map that assigns to each subspace $\mathsf{S}(\mathcal{O})$ a subalgebra $R(\mathcal{O})$ of the representation  of a CCR-algebra as bounded operators as in Section \eqref{weylrepresentation}, and depends on the formulation we are dealing with. This will be made manifest in Section \ref{secondquantizationsection}.

Despite these comments, in the Weyl formulation   the map $\mathsf{S}$ does not assign closed real subspaces to open regions of \textit{spacetime}, but assigns two closed real subspaces to $B$, a \textit{spacelike} region. The relation between both approaches will be stated once we define the first quantization maps for Segal and Weyl, which we do next.

We shall now define the first quantization map $\mathsf{S}$ for the Weyl formulation. Let $B\subseteq \mathbb{R}^3$ be a measurable region\footnote{At this point on page 243 of \cite{horuzhy2012introduction} it is demanded that the boundary of $B$ is piecewise smooth. We do not see the need for such requirement and indeed it is not required in \cite{araki1964neumann}. \label{footnote:boundary}} of a fixed-time slice of $\mathbb{R}^{4}$. Then we define
\begin{equation}\label{mapauno}
\mathsf{S}_{R}(B):=\overline{j_{2}L^{2}(B)}^{||-||_{\varphi}},\qquad \mathsf{S}_{I}(B):=\bar{\beta}j_{1}^{-1}\left(L^{2}(B)\right),
\end{equation}
where $R$ and $I$ denote ``real'' and ``imaginary'' parts respectively. Here $j_{1}:\mathfrak{F}_{\pi} \to L^{2}$ y  $j_{2}:L^{2} \to \mathfrak{F}_{\varphi}$ are the continuous inclusions of \eqref{jinclusions} and $L^{2}(B)\subseteq L^{2}(\mathbb{R}^{3})$ is the closed subspace of $L^2$ functions that vanish outside $B$. Then, as  $j_{1}$ is  continuous  $j_{1}^{-1}\left(L^{2}(B)\right)$ is closed in  $\mathfrak{F}_{\pi}$, and since  $\bar{\beta}$ is unitary  $\mathsf{S}_{I}(B)$ is also closed in $\mathfrak{F}_{\varphi}$. Note that both $\mathsf{S}_{R}$ and $\mathsf{S}_{I}$ belong to $\mathfrak{F}_{\varphi}$ and then we have a map $B \mapsto \mathsf{S}_{R}(B)\oplus_{\mathbb{R}}\mathsf{S}_{I}(B) \subset \mathfrak{F}_{\varphi} \oplus_{\mathbb{R}} \mathfrak{F}_{\varphi}\simeq K \oplus_{\mathbb{R}} K$.

On the other hand, the first quantization map in the Segal presentation is defined by 
\begin{equation}
    \mathsf{S}_{S}(\mathcal{O}):=\overline{\{ [f]\in H|\quad f\in C_{0}^{\infty}(\mathbb{R}^{4},\mathbb{R}),\quad  \text{supp}(f)\subseteq \mathcal{O} \}}^{\|-\|_{H}}\subseteq H,
    \label{mapados}
\end{equation}
for each open region $\mathcal{O}\subseteq \mathbb{R}^{4}.$

In what follows 
\begin{equation}\label{prima}
	\mathcal{O}'=\left\{x\in \mathbb{R}^{4}|\quad (x-y)^2<0\text{ for }y\in \mathcal{O}\right\},
\end{equation}
denotes the causal complement of the region $\mathcal{O}$ and $B^c$ denotes the set complement of  the closure of $B$ in the fixed-time slice. Let us define the causal envelope of $B$ as
\begin{equation}
    C(B)=\{ x\in \mathbb{R}^{4} | (x-y)^2<0  \hspace{0.2cm} \forall \hspace{0.15cm} y\in B^{c} \}.
    \label{Cdef}
\end{equation}
We immediately  have that $C(B)'=C(B^c)$. Now we can relate both first quantization maps \eqref{mapauno} and \eqref{mapados}, which follows from the biyective isomorphisms $\delta_0$ and $\delta_1$ described in detailed in the previous section:
\begin{equation}
    \mathsf{S}_{S}(C(B))\simeq\mathsf{S}_{R}(B)\oplus_{\mathbb{R}}\bar{\beta} \mathsf{S}_{I}(B)
    \label{firstmapsrelation}
\end{equation}

We are concerned in this section with the proof that the first quantization maps in Weyl and Segal form satisfy a duality property. This property, in the Weyl formulation, is the starting point of the Haag duality proof of \cite{eckmann1973application}. We start with the description of this duality in the Weyl formulation and its proof, and then do the same for the Segal formulation, showing that it holds in one formulation if and only if it holds in the other thanks to \eqref{firstmapsrelation}.

\subsection{Duality in  Weyl form}\label{first}

Some important facts can be stated for the maps \eqref{mapauno}, for example \textit{isotony} says that if $B_{1}\subseteq B_{2}$, then $\mathsf{S}_{R}(B_{1})\subseteq \mathsf{S}_{R}(B_{2})$ and $\mathsf{S}_{I}(B_{1})\subseteq \mathsf{S}_{I}(B_{2})$. Another important property is the so-called \textit{additivity} $\overline{\bigoplus\mathsf{S}_{R}(B_{\lambda})}^{\|-\|_{\varphi}}=\mathsf{S}_{R}(\bigcup_{\lambda\in \Lambda}B_{\lambda})$ (see lemma 2 in \cite{araki1964neumann}). But here we are mainly interested in the \textit{duality property} which reads,
\begin{equation}
    \boxed{\mathsf{S}_{I}(B)^{\perp_{\varphi}}=\mathsf{S}_{R}(B^{c}),\quad \mathsf{S}_{R}(B)^{\perp_{\varphi}}=\mathsf{S}_{I}(B^{c})}
\label{dualidadweyl}
\end{equation}

\begin{theorem}\label{haagteoremauno}
	The assignment of subspaces to   regions $B\subseteq\mathbb{R}^{3}$ at fixed time given by \eqref{mapauno} satisfies Haag duality in the Weyl presentation, namely  \[\mathsf{S}_{I}(B)^{\perp_{\varphi}}=\mathsf{S}_{R}(B^{c}),\quad \mathsf{S}_{R}(B)^{\perp_{\varphi}}=\mathsf{S}_{I}(B^{c})\]
\end{theorem}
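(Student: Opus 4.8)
The plan is to collapse both equalities onto the elementary orthogonal splitting $L^{2}(\mathbb{R}^{3})=L^{2}(B)\oplus L^{2}(B^{c})$ by exploiting the concrete description of $\bar{\beta}$ obtained at the end of Section 3: in Fourier variables $\bar{\beta}_{\pi,\varphi}$ is multiplication by $-\omega_{p}$ and $\bar{\beta}_{\varphi,\pi}$ is multiplication by $\omega_{p}^{-1}$. The key remark is that, although a generic $g\in\mathfrak{F}_{\varphi}$ need not be an $L^{2}(\mathbb{R}^{3})$ function, the vector $\omega^{-1}g$ (i.e.\ multiplication of $\mathscr{F}(g)$ by $\omega_{p}^{-1}$) is: one has $\int\omega_{p}^{-2}|\mathscr{F}(g)(\vec p)|^{2}d^{3}p\le m^{-1}\|g\|_{\varphi}^{2}<\infty$, and moreover $\omega^{-1}g\in\mathfrak{F}_{\pi}$ with $\|\omega^{-1}g\|_{\pi}=\|g\|_{\varphi}$. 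This is the bridge between the $\varphi$-geometry and ordinary $L^{2}$-geometry, and it is what ensures no unbounded multiplier is ever applied to a non-$L^{2}$ vector.

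With this in hand I would first prove $\mathsf{S}_{R}(B)^{\perp_{\varphi}}=\mathsf{S}_{I}(B^{c})$ directly. For $g\in\mathfrak{F}_{\varphi}$ and $f\in L^{2}(\mathbb{R}^{3})$ one has the pairing identity $(g,j_{2}f)_{\varphi}=(\omega^{-1/2}g,\omega^{-1/2}f)_{L^{2}}=(\omega^{-1}g,f)_{L^{2}}$, an honest $L^{2}(\mathbb{R}^{3})$ inner product. Since $j_{2}L^{2}(B)$ is dense in $\mathsf{S}_{R}(B)$ and the inner product is continuous, $g\perp_{\varphi}\mathsf{S}_{R}(B)$ holds iff $(\omega^{-1}g,f)_{L^{2}}=0$ for all $f\in L^{2}(B)$, i.e.\ iff $\omega^{-1}g\in L^{2}(B)^{\perp_{L^{2}}}=L^{2}(B^{c})$. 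As $\omega^{-1}g$ also lies in $\mathfrak{F}_{\pi}$, this reads $\omega^{-1}g\in j_{1}^{-1}(L^{2}(B^{c}))$, whence $g=\bar{\beta}_{\pi,\varphi}(-\omega^{-1}g)\in\bar{\beta}_{\pi,\varphi}\big(j_{1}^{-1}(L^{2}(B^{c}))\big)=\mathsf{S}_{I}(B^{c})$. The reverse inclusion is the same chain read backwards: if $g=\bar{\beta}_{\pi,\varphi}k$ with $k\in\mathfrak{F}_{\pi}\cap L^{2}(B^{c})$, then $\omega^{-1}g=-k$ is supported in $B^{c}$, hence $L^{2}$-orthogonal to all of $L^{2}(B)$, hence $g\perp_{\varphi}\mathsf{S}_{R}(B)$. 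This proves the first identity for every measurable $B$.

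For the second identity I would avoid repeating the computation and instead apply the one just proved to the region $B^{c}$ (which is measurable, indeed open): this gives $\mathsf{S}_{R}(B^{c})^{\perp_{\varphi}}=\mathsf{S}_{I}((B^{c})^{c})=\mathsf{S}_{I}(B)$, using that $(B^{c})^{c}$ and $B$ differ by a Lebesgue-null set, so $L^{2}((B^{c})^{c})=L^{2}(B)$ and hence the two subspaces $\mathsf{S}_{I}(\cdot)$ agree (they depend only on $\mathfrak{F}_{\pi}\cap L^{2}(\cdot)$). Since $\mathsf{S}_{R}(B^{c})$ is a closed real subspace of $\mathfrak{F}_{\varphi}$ — it is defined as a $\|-\|_{\varphi}$-closure — taking orthogonal complements once more returns it, and therefore $\mathsf{S}_{R}(B^{c})=\mathsf{S}_{I}(B)^{\perp_{\varphi}}$.

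The real content is the single observation $\bar{\beta}_{\pi,\varphi}=$ ``multiplication by $-\omega$'', which turns the $\varphi$-orthogonality problem into the trivial $L^{2}$ one; everything else is bookkeeping. I expect the two points needing care to be: (i) the identifications $L^{2}(B)^{\perp_{L^{2}}}=L^{2}(B^{c})$ and $(B^{c})^{c}=B$, which hold literally only up to Lebesgue-null sets, so for a completely arbitrary measurable $B$ one must either add the harmless hypothesis $|\partial B|=0$ or interpret the set operations modulo null sets; and (ii) making the Fourier-space manipulations with $\omega^{\pm 1/2},\omega^{\pm 1}$ fully rigorous, in particular that the continuous extension of $\bar{\beta}_{\pi,\varphi}$ to all of $\mathfrak{F}_{\pi}$ really acts as $-\omega$ and that $\omega^{-1}g\in\mathfrak{F}_{\pi}$, both of which follow from the explicit formulas established in Section 3.
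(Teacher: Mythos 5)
Your proposal is correct and follows essentially the same route as the paper's proof: both reduce the $\perp_{\varphi}$ condition to the elementary splitting $L^{2}(\mathbb{R}^{3})=L^{2}(B)\oplus L^{2}(B^{c})$ via the realization of $\bar{\beta}_{\pi,\varphi}$ as multiplication by $\mp\omega$, establish $\mathsf{S}_{R}(B)^{\perp_{\varphi}}=\mathsf{S}_{I}(B^{c})$ first, and then obtain the other identity by double orthocomplementation of a closed subspace together with the exchange $B\leftrightarrow B^{c}$. Your explicit bridge $\omega^{-1}g\in\mathfrak{F}_{\pi}$ with $\|\omega^{-1}g\|_{\pi}=\|g\|_{\varphi}$, and your flagging of the null-set caveat in $(B^{c})^{c}=B$, are slightly more careful renderings of steps the paper passes over implicitly, but the substance is the same.
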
 
\begin{proof}
 Let us first notice that  $L^{2}(\mathbb{R}^{3})=L^{2}(B)\oplus L^{2}(B^{c})$, since $L^{2}(B)=L^{2}(B^c)^{\perp_{L^2}}$. From now on $\perp$ means $\perp_\varphi$.  Then,
\begin{align*}
	j_{1}^{-1}\left(L^{2}(B)\right)&=\left\{ g\in \mathfrak{F}_{\pi} |\quad  (f,g)_{L^{2}}=0 \hspace{0.2cm}\forall f\in L^{2}(B^{c}) \right\} \\
	 &=\left\{ g\in \mathfrak{F}_{\pi} |\quad (j_2 f,\omega g)_{\varphi}=0 \hspace{0.2cm}\forall f\in L^{2}(B^{c}) \right\} \\
	 &=\left\{ g\in \mathfrak{F}_{\pi} |\quad (j_2 f,\bar{\beta}_{\pi,\varphi}g)_{\varphi}=0 \hspace{0.2cm}\forall f\in L^{2}(B^{c}) \right\}. 
\end{align*}
This means that if $g\in j_{1}^{-1}\left(L^{2}(B)\right)$ then $\bar{\beta}_{\pi,\varphi}g\in\left[\overline{j_{2}L^{2}(B^{c})}^{||-||_{\varphi}}\right]^\perp$, where we have completed $j_2L^{2}(B^{c})$ using the continuity of the inner product. In other words, we have $\bar{\beta}_{\pi,\varphi}j_{1}^{-1}\left(L^{2}(B)\right) \subseteq  \left[\overline{j_{2}L^{2}(B^{c})}^{||-||_{\varphi}}\right]^\perp$. On the other hand, if $g\in\mathfrak{F}_\varphi$ such that $g\in \left[{j_{2}L^{2}(B^{c})}\right]^\perp$, then by definition $(j_2 f,g)_\varphi=0$ for all $f\in L^{2}(B^{c})$. This condition can be stated as $(j_2 f,\bar{\beta}_{\pi,\varphi}\bar{\beta}_{\varphi,\pi}g)_{\varphi}=0$   for all $f\in L^{2}(B^{c})$, and by the above relations we have that $\bar{\beta}_{\varphi,\pi}g \in j_{1}^{-1}\left(L_{2}(B)\right)$, and then $ \left[\overline{j_{2}L^{2}(B^{c})}^{||-||_{\varphi}}\right]^\perp  \subseteq \bar{\beta}_{\pi,\varphi}j_{1}^{-1}\left(L^{2}(B)\right)  $. We have then shown that $\left[\overline{j_{2}L^{2}(B^{c})}^{||-||_{\varphi}}\right]^\perp  = \bar{\beta}_{\pi,\varphi}j_{1}^{-1}\left(L^{2}(B)\right)  $. Recalling the definitions (\ref{mapauno}) and exchanging $B$ with $B^c$, we get
\[\mathsf{S}_{R}(B)^{\perp}=\mathsf{S}_{I}(B^c)\]
This is one of the relations we wanted to prove. The remaining one is obtained by taking orthogonal complement to this one (noticing that $({\mathsf{S}_R}^\perp)^\perp=\mathsf{S}_R$ since $\mathsf{S}_R$ is closed) and exchanging again $B$ with $B^c$.
\end{proof}
The above theorem states that Haag duality holds in the first quantization context. It is shown in the Weyl formulation, but it can be translated, as we will explain in the next subsection, to the Segal formulation.

\subsection{Duality in Segal form}

Given a real closed subspace $\mathsf{S}$ of $H$, let us define its symplectic complement $\mathsf{S}'\subset H$ that is also a real subspace of $H$ as one can see using the $\mathbb{R}$-linearity of the inner product of $H$ in the first argument:
\begin{equation}
     \mathsf{S}'=\left\{h \in H |\quad  \sigma(h,f)=0 , \quad \forall f \in \mathsf{S}\right\}.
\end{equation}
If $g\in \mathsf{S}'\subseteq H$, then by  \eqref{sigma},  $(g,\beta f)_{H}=0$ which implies $(\beta g,f)_{H}=0$. Namely, $\beta g\in \mathsf{S}^{\perp_{H}}$. Reversing the line of reasoning, 
\begin{equation}\label{relacionfs}
	\mathsf{S}^{\perp_{H}}=\beta\mathsf{S}'.
\end{equation}
In addition, due to the fact that the inner product in $K$ is the restriction of the one in $H$, we have for $\mathsf{S}\subset K $ 
\begin{equation}\label{super}
	\mathsf{S}^{\perp_{K}}=\beta\mathsf{S}'\cap K,
\end{equation}

Within the Segal presentation, the duality for the first map is stated as
\begin{equation*}
 \boxed{\mathsf{S}_{S}'(\mathcal{O})=\mathsf{S}_{S}(\mathcal{O'})}
\end{equation*}
for any $\mathcal{O}=C(B)$ with $B$ an open subset of $
\mathbb{R}^3$. 

Thanks to  \eqref{relacionfs} we can rewrite it as 
\begin{equation}\label{haagsegal}
	\beta \left(\mathsf{S}_{S}(\mathcal{O})^{\perp_{H}}\right)=\mathsf{S}_{S}(\mathcal{O'})
\end{equation}
\begin{lemma}
    The duality property in the context of first quantization holds in Segal form if and only if it holds in Weyl form. 
\end{lemma}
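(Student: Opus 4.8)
The key tool is the isomorphism \eqref{firstmapsrelation}, namely $\mathsf{S}_S(C(B)) \simeq \mathsf{S}_R(B) \oplus_{\mathbb{R}} \bar\beta\mathsf{S}_I(B)$, together with the identity $C(B)' = C(B^c)$ for causal envelopes and the decomposition $H = K \oplus_{\mathbb{R}} \beta K$ with $K \perp_H \beta K$ (so that in the $\mathfrak{F}_\varphi \oplus \mathfrak{F}_\varphi$ picture the two summands are genuinely orthogonal). The plan is to compute the symplectic complement $\mathsf{S}_S(C(B))'$ purely in terms of the Weyl-side data $\mathsf{S}_R(B)$, $\mathsf{S}_I(B)$ and the maps $\bar\beta_{\varphi,\pi}$, $\bar\beta_{\pi,\varphi}$, and show that the result equals $\mathsf{S}_S(C(B^c)) = \mathsf{S}_S(C(B)')$ precisely when the Weyl duality relations \eqref{dualidadweyl} hold.

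\textbf{Step 1.} Translate the symplectic complement into an orthogonal complement via \eqref{relacionfs}: $\mathsf{S}_S(\mathcal{O})' = \beta\left(\mathsf{S}_S(\mathcal{O})^{\perp_H}\right)$, so the Segal duality \eqref{haagsegal} is equivalent to $\beta\left(\mathsf{S}_S(C(B))^{\perp_H}\right) = \mathsf{S}_S(C(B^c))$.

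\textbf{Step 2.} Using $\mathsf{S}_S(C(B)) \simeq \mathsf{S}_R(B) \oplus_{\mathbb{R}} \bar\beta \mathsf{S}_I(B)$ and the fact that, under the identification $H \simeq \mathfrak{F}_\varphi \oplus_{\mathbb{R}} \mathfrak{F}_\varphi$, the first summand lives in the ``$K$-copy'' and the second in the ``$\beta K$-copy'' which are mutually $H$-orthogonal, compute the orthogonal complement summand by summand:
\begin{equation*}
\mathsf{S}_S(C(B))^{\perp_H} \simeq \mathsf{S}_R(B)^{\perp_\varphi} \oplus_{\mathbb{R}} \bar\beta\left(\left(\bar\beta^{-1}\bar\beta\mathsf{S}_I(B)\right)^{\perp}\right),
\end{equation*}
which after simplifying (using that $\bar\beta$ is unitary, so $\bar\beta$ commutes with taking $\perp$ up to the swap $\mathfrak{F}_\varphi \leftrightarrow \mathfrak{F}_\pi$) becomes expressible through $\mathsf{S}_R(B)^{\perp_\varphi}$ and $\mathsf{S}_I(B)^{\perp_\varphi}$.

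\textbf{Step 3.} Apply $\beta$ to this, which in the $\mathfrak{F}_\varphi \oplus \mathfrak{F}_\varphi$ model swaps the two summands (via $\bar\beta_{\varphi,\pi}$, $\bar\beta_{\pi,\varphi}$), and compare with $\mathsf{S}_S(C(B^c)) \simeq \mathsf{S}_R(B^c) \oplus_{\mathbb{R}} \bar\beta\mathsf{S}_I(B^c)$. Matching the two summands, one finds that the required equality holds if and only if $\mathsf{S}_R(B)^{\perp_\varphi} = \mathsf{S}_I(B^c)$ and $\mathsf{S}_I(B)^{\perp_\varphi} = \mathsf{S}_R(B^c)$ — exactly \eqref{dualidadweyl}. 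Since Theorem \ref{haagteoremauno} establishes \eqref{dualidadweyl} and the argument is reversible, this gives the ``if and only if''.

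\textbf{Main obstacle.} The delicate point is Step 2: keeping careful track of which copy of $\mathfrak{F}_\varphi$ (the $K$-part or the $\beta K$-part) each subspace sits in, and how $\bar\beta$, $\bar\beta_{\varphi,\pi}$, $\bar\beta_{\pi,\varphi}$ and the various orthogonal complements interact once one pulls everything back to $H$ through $\delta_0$, $\delta_1$. In particular one must use $K \perp_H \beta K$ to ensure that the orthogonal complement of a direct sum decomposes as the direct sum of the complements within each copy, and one must be careful that ``$\perp_\varphi$'' inside $\mathfrak{F}_\varphi$ matches ``$\perp_H$ restricted to $K$'' under $\delta_0$ — this is exactly the content of \eqref{super}. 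Once the bookkeeping is set up consistently with the conventions of \eqref{mapauno}, \eqref{mapados} and \eqref{firstmapsrelation}, the equivalence follows by a direct, if somewhat tedious, chain of identifications, with no new analytic input beyond what is already proved.
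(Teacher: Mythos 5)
Your proposal is correct and takes essentially the same route as the paper: both reduce the symplectic complement to an $H$-orthogonal complement via \eqref{relacionfs}, decompose everything along $H=K\oplus_{\mathbb{R}}\beta K$ using \eqref{firstmapsrelation} and $K\perp_H\beta K$, and read off the equivalence with \eqref{dualidadweyl} by matching the $K$- and $\beta K$-components (the paper's excision map $\varepsilon$ is exactly your summand-by-summand projection). The only difference is organizational — you run a single computation and extract the ``iff'' at the end, whereas the paper treats the two implications separately — and your intermediate formula in Step 2 simplifies to the paper's $\mathsf{S}_R(B)^{\perp_K}\oplus\bar\beta\,\mathsf{S}_I(B)^{\perp_K}$, as intended.
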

\begin{proof}
Let us first assume that \eqref{haagsegal} holds and then we have to show that \eqref{dualidadweyl} holds. In order to approach this, it is important to note that a subspace $\mathsf{S}\subseteq H=K\oplus_{\mathbb{R}} \beta K$ defines two subspaces in $K$, $P_{K}(\mathsf{S})$ and $\beta P_{\beta K}(\mathsf{S})$ (where $P$'s are projectors with self-explained notation). In this way the excision generates a map $\varepsilon$ from subspaces of $H$ to a pair of subspaces of $K \simeq \mathfrak{F}_\varphi $,

\[\varepsilon(\mathsf{S}_{S}):=(P_{K}(\mathsf{S}_{S}),\beta P_{\beta K}(\mathsf{S}_{S}))\simeq(\mathsf{S}_{R},\mathsf{S}_{I})\]
Now, the duality for subspaces of $K$ in the Weyl presentation comes from applying the map $\varepsilon$ to \eqref{haagsegal} and taking into account \eqref{firstmapsrelation}. Indeed, for some $\mathcal{O}=C(B)$,
\begin{align*}
	\varepsilon(\beta \mathsf{S}_{S}(\mathcal{O})^{\perp_{H}})&=\varepsilon(\mathsf{S}_{S}(\mathcal{O'})) \\
	(P_K \beta \mathsf{S}_{S}(\mathcal{O})^{\perp_{H}},\beta P_{\beta K}\beta \mathsf{S}_{S}(\mathcal{O})^{\perp_{H}})&\simeq(\mathsf{S}_{R}(B^c),\mathsf{S}_{I}(B^c)) \\
	(P_K \beta \mathsf{S}_{S}(\mathcal{O})^{\perp_{H}}, P_{K} \mathsf{S}_{S}(\mathcal{O})^{\perp_{H}})&\simeq(\mathsf{S}_{R}(B^c),\mathsf{S}_{I}(B^c)) \\
		(\mathsf{S}_{I}(B)^{\perp_{K}},\mathsf{S}_{R}(B)^{\perp_{K}})&=(\mathsf{S}_{R}(B^c),\mathsf{S}_{I}(B^c)). 	
\end{align*}
We have used the fact\footnote{This is so since given $h=\alpha_{R}+\beta \alpha_{I} \in S$ with $\alpha's \in K$, applying $\beta$: $\beta h=\beta \alpha_{R}+\beta^{2} \alpha_{I}=\beta \alpha_{R}-\alpha_{I}$. In this way $P_{K}(\beta h)=-\alpha_{I}$ as well as  $\beta P_{\beta K}(h)=\beta \beta \alpha_{I}=-\alpha_{I}$.} that $P_{K}(\beta \mathsf{S})=\beta P_{\beta K}(\mathsf{S})$ (i.e. $(\beta \mathsf{S})_{R}=(\mathsf{S})_{I}$) in the third line, and  both $P_K \mathsf{S}^{\perp_H}=(P_K \mathsf{S})^{\perp_K}$ and $(\beta \mathsf{S}^{\perp_H})^{\perp_H}=\beta \mathsf{S}$ in the last line.  

Reciprocally using  \eqref{firstmapsrelation} we can obtain the Segal's version of the duality from Weyl's version in the following way\footnote{We identify $K$ and $\mathfrak{F}_{\varphi}$ , $H$ with $\mathfrak{F}_{\varphi} \oplus_{\mathbb{R}}\mathfrak{F}_{\pi}$,  and $\beta$ with $\bar{\beta}$ in order to make the expressions more readable.}
\begin{align*}
    S_{S}(C(B))'&=\beta S_{S}(C(B))^{\perp_{H}}\simeq \beta \left[S_{R}(B)+\beta S_{I}(B)\right]^{\perp_{H}} \\
    &=\beta \left[S_{R}(B)^{\perp_{H}}\cap (\beta S_{I}(B))^{\perp_{H}}\right]=\beta \left[S_{R}(B)^{\perp_{H}}\cap \beta S_{I}(B)^{\perp_{H}}\right]  \\
    &=S_{I}(B)^{\perp_{H}}\cap\beta S_{R}(B)^{\perp_{H}}=\left[S_{I}(B)^{\perp_{K}}\oplus \beta K\right]\cap \beta \left[S_{R}(B)^{\perp_{K}}\oplus \beta K\right] \\
    &=\left[S_{I}(B)^{\perp_{K}}\oplus \beta K\right]\cap  \left[K \oplus \beta S_{R}(B)^{\perp_{K}}\right] =(S_{I}(B)^{\perp_{K}}\cap K)\oplus \beta(K\cap S_{R}(B)^{\perp_{K}}) \\
    &=S_{I}(B)^{\perp_{K}}\oplus \beta S_{R}(B)^{\perp_{K}}=S_{R}(B^{c})^{\perp_{K}}\oplus \beta S_{I}(B^{c})^{\perp_{K}} \\
    &\simeq S_{S}(C(B^{c}))=S_{S}(C(B)'),
\end{align*}
were we used the fact that $(\beta \mathsf{S})^{\perp_{H}}=\beta \mathsf{S}^{\perp_{H}}$ and the duality in Weyl form.
\end{proof}

\begin{remark}
We have shown in Theorem \eqref{haagteoremauno} that the duality holds in the Weyl formalism, so by the previous lemma it also holds in Segal form.
\end{remark}

\section{Second quantization and Haag duality}\label{secondquantizationsection}

Having defined in the previous Section the first quantization map $\mathsf{S}$ that assigns real vector subspaces to regions of spacetime, it is now time to introduce the second quantization. This means to define a map $R$ that assigns to each vector subspace a von Neumann algebra. Instead of following  \cite{araki1964neumann} in this second stage of quantization, we make a $\pi/2$ turn and start following \cite{eckmann1973application}. The reason for this choice is that the latter makes use of the power of Tomita-Takesaki modular theory, and this results in a more direct proof of Haag duality. As we will explain, at the end we modify the proof in \cite{eckmann1973application}  in order to make it even simpler. 

In \cite{eckmann1973application}, the authors work exclusively in the Weyl context, and we shall do the same most of the time. However, it is possible without much effort to translate the results of the following subsections to the Segal context. We will comment on this along the way. 

\subsection{Second quantization}\label{secondq}

\subsubsection{Weyl form}

Let us recall that, according to the previous Section, the $\mathsf{S}$ map in the Weyl presentation actually gives two subspaces of $\mathfrak{F}_\varphi\simeq K$ for each spacelike region $B$. Then, the second quantization map $R_{W}$ should take two subspaces $K_1$ and $K_2$ of $K$ and give a von Neumann subalgebra $R_{W}(K_1,K_2)$ of the representation of Weyl's  CCR-algebra given in section \ref{weylrepresentation}. To be more precise  $R_{W}(K_1,K_2)$ is generated by $U_{F}(f)=e^{i\varphi(f)}$ and $V_{F}(g)=e^{i\pi(g)}$, with $f$ and $g$ in $K_{1}$ and $K_{2}$ respectively and obeying the relations \eqref{weylrep}. As can be guessed, given $K_1$ and $K_2$, let us define
\begin{equation}\label{segundomapa}
	R_{W}(K_{1},K_{2})=\{ e^{i\varphi(f)}e^{i\pi(g)}|\quad f\in K_{1}, g\in K_{2} \}''.
\end{equation}
The double commutant ensures that $R_{W}(K_1,K_2)$ is a von Neumann algebra, since the bicommutant of a selfadjoint set is automatically a von Neumann algebra. We recall that concrete von Neumann algebras are both weakly and strongly closed as subalgebras of the linear bounded operators in a Hilbert space. 

Let us describe more accurately how $\varphi(f)$ and $\pi(g)$ are introduced. The real part of the complex Fock space $\mathfrak{H}_{T}(L)$ (denoted by $\mathcal{F}_{r}$) consists of real linear combinations of symmetric products among elements such that their second component of $L$ (thought as a complexification of $K$) is zero. That is
$$\mathcal{F}_{r}=\bigoplus_{n=0}^{\infty}\mathcal{F}_{r}^{(n)},$$
where  $\mathcal{F}_{r}^{(0)}=\mathbb{R}$ and 
$$\mathcal{F}_{r}^{(n)}:=\bigodot_{i=1}^{n} K$$ for $n \geq 1$. Here $\odot$ means the symmetrized tensor product. Then, we can recover our complex Fock Hilbert space of subsection \ref{Fockrep} as the complexification of $\mathcal{F}_{r}$, $\mathfrak{H}_{T}(L)=\mathcal{F}_{r}+i \mathcal{F}_{r}$. We denote $\mathcal{F}^{(n)}:=\mathcal{F}_{r}^{(n)}+i\mathcal{F}_{r}^{(n)}$, so for example $\mathcal{F}^{(1)}\simeq L$. Then, as usual, one introduces creation and annihilation operators as explained in subsection \ref{weylrepresentation}.

Now, let us denote $\Omega$ the $1\in \mathcal{F}^{(0)}$, namely the vacuum vector.  Then  for $f, g\in K=\mathcal{F}^{(1)}_{r}$ we have

\begin{equation}\label{operadores}
	\varphi(f)\Omega=\frac{1}{\sqrt{2}}f  \hspace{1cm } i\pi(g)\Omega=-\frac{1}{\sqrt{2}}g.
\end{equation}

In order to construct the second quantization map we need to exponentiate $\varphi(f)$ and $\pi(g)$. However, some care must be taken since their domains are only a dense subset in $\mathfrak{H}_{T}(L)$. By using functional analysis of unbounded self-adjoint operators, we can claim that the operator $U_{F}(t)=e^{it\varphi(f)}$ is bounded, with $t\in\mathbb{R}$, since $\varphi(f)$ is self-adjoint (see footnote \ref{footnoteselfadjoint}). Even more $U_{F}(t)$ is a strongly  unitary  group acting on the complex Hilbert space $\mathfrak{H}_{T}(L)$ and by Stone's theorem there is a unique self-adjoint operator which generates such group. This is precisely $\varphi(f)$. The same holds for $e^{i\pi(g)}$. In short, we have the concrete realization \eqref{segundomapa} as an algebra of bounded operators on the complex Fock space $\mathfrak{H}_{T}(L)$.

\subsubsection{Segal form}
An analogous construction can be done using the  representation $W_F$ of the Segal CCR-algebra as in Section \ref{weylrepresentation}. The second quantization map is
\begin{equation}\label{segalsecmap}
	R_{S}(Y)=\left\{ e^{i \chi(h)} |\quad h\in Y\right\}'',
\end{equation}
where $Y$ is a closed subspace of $H$ and $\chi$ is related to $\varphi$ and $\pi$ as in \eqref{chi}. As the first quantization map \eqref{mapados} this also satisfies  isotony: $R_{S}(Y_1)\subseteq R_{S}(Y_2)$ if   $Y_{1}\subseteq Y_{2}$,  and additivity: $R_{S}(\overline{\bigoplus_{\lambda\in \Lambda}Y_{\lambda}})=(\bigcup_{\lambda\in \lambda}R_{S}(Y_{\lambda}))''$ (see theorem 1 in \cite{araki1963lattice}). Note that $(\bigcup_{\lambda\in \lambda}R_{S}(Y_{\lambda}))''$ is the smallest von Neumann algebra containing each $R_{S}(Y_{\lambda})$. These properties can also be stated in terms of the represented Weyl  CCR-subalgebras $R_{W}(K_{1},K_{2})$ previously defined. 

It is straightforward to see from \eqref{weylsegal} and the definitions \eqref{segundomapa} and \eqref{segalsecmap} that 
\begin{equation}
    R_W(K_1,K_2)=R_S(K_1 \oplus_{\mathbb{R}} \beta K_2), \qquad K_1,K_2 \subset K.
    \label{Requality}
\end{equation}

\subsection{Duality on causal diamonds and wedges}\label{conos}

Let us consider $B \subset \mathbb{R}^3$ and its complement $B^c\subset \mathbb{R}^3$, we would like to show  that
\begin{equation}
    (R_{W}\circ \mathsf{S}_{W})(B^c)=(R_{W}\circ \mathsf{S}_{W})(B)'\qquad \text{(Haag duality in Weyl form)}
\end{equation} Namely,  $R_{W}(\mathsf{S}_{R}(B^c),\mathsf{S}_{I}(B^c))=R_{W}(\mathsf{S}_{R}(B),\mathsf{S}_{I}(B))'$, which by Theorem \ref{haagteoremauno} is equivalent to  \begin{equation}
    R_{W}(\mathsf{S}_{I}(B)^{\perp_{K}},\mathsf{S}_{R}(B)^{\perp_{K}})=R_{W}(\mathsf{S}_{R}(B),\mathsf{S}_{I}(B))'.
\end{equation} 
Therefore our goal is to show that  $R_{W}(K_{2}^{\perp_{K}},K_{1}^{\perp_{K}})=R_{W}(K_{1},K_{2})'$, for $K_{1},K_{2}\subseteq K$ coming from a first quantization map. Actually, from the definition of the second map \eqref{segundomapa} it is straightforward using the CCR  \eqref{weylrep} to see that $R_{W}(K_{2}^{\perp_{K}},K_{1}^{\perp_{K}})\subseteq R_{W}(K_{1},K_{2})'$, so called the \textit{locality property}. We shall give the proof of the opposite inclusion $R_{W}(K_{1},K_{2})'\subseteq   R_{W}(K_{2}^{\perp_{K}},K_{1}^{\perp_{K}})$ in the following  subsection by using the power of  Tomita-Takesaki modular theory. Note that the analogous expression in Segal from is,
\begin{equation}\label{seconddualitysegal}
    R_S(H')=R_S(H)'.
\end{equation}

Before proving $R_{W}(K_{1},K_{2})'\subseteq   R_{W}(K_{2}^{\perp_{K}},K_{1}^{\perp_{K}})$ we introduce the  nets of algebras in Weyl and Segal form. Let  $N_{W}$ be the map that sends any relative open set $B\subseteq \{x\in \mathbb{R}^4 | x^0=0\}$ to a  von Neumann algebra. This is the composition  $N_{W}=R_{W}\circ \mathsf{S}_{W}$, which defines the net of algebras
\begin{equation*}
	N_{W}(B)=\{ e^{i\varphi(f)}e^{i\pi(g)}|\quad f\in \mathsf{S}_{R}(B), g\in \mathsf{S}_{I}(B) \}'',
\end{equation*}
where the subindex  $W$ stands for subalgebras of the Weyl CCR-algebra representation, and $\mathsf{S}_{R}(B)$ and $\mathsf{S}_{I}(B)$ are as in \eqref{mapauno}. Analogously, for the Segal formulation we have 
\begin{equation*}
	N_{S}(\mathcal{O})=\{ e^{i\chi(h)}|\quad h\in \mathsf{S}_{S}(\mathcal{O})\}'',
\end{equation*}
At the moment, we have defined both nets, in Weyl form and Segal form. The former takes regions at fixed time while the latter takes spacetime regions. We want to relate both nets. Just as in the previous section, we consider a region $B$ at fixed time (say  $x^{0}=0$) and its causal envelope $C(B)$ as in \eqref{Cdef}\footnote{If $B$ is bounded, then $C(B)$ is the causal diamond of the ball circumscribed in $B$, while if $B$ is unbounded $C(B)$ is a wedge limited by two light-like hyperplanes \cite{horuzhy2012introduction}.}. The following lemma is a key  result, where for just once some differentiablity condition is required for the boundary of $B$, 

\begin{lemma}[Proposition 3.3.2 in \cite{horuzhy2012introduction}]\label{lema}
    Let $B\subseteq \{x\in \mathbb{R}^4 | x^0=0\}$ be a measurable region with piecewise smooth boundary such that $int(\overline{B})=B$. Then $N_{W}(B)=N_{S}(C(B)).$  
\end{lemma}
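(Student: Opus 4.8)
The plan is to prove the equality of von Neumann algebras $N_W(B)=N_S(C(B))$ by showing the two generating families of unitaries coincide, or at least generate the same algebra. Recall from \eqref{Requality} that $N_W(B)=R_W(\mathsf{S}_R(B),\mathsf{S}_I(B))=R_S(\mathsf{S}_R(B)\oplus_\mathbb{R}\beta\mathsf{S}_I(B))$, while $N_S(C(B))=R_S(\mathsf{S}_S(C(B)))$. Since $R_S$ is monotone under inclusion of subspaces and, by additivity, $R_S(\overline{Y_1})=R_S(Y_1)$ whenever $Y_1$ is dense in a closed subspace, it suffices to prove the equality of the closed real subspaces
\[
\mathsf{S}_R(B)\oplus_\mathbb{R}\beta\,\mathsf{S}_I(B)\;=\;\mathsf{S}_S(C(B)),
\]
which is precisely the content of \eqref{firstmapsrelation} up to the identification $K\oplus_\mathbb{R}\beta K\simeq H$ via $\delta_0,\delta_1$. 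So the reduction is: \emph{Lemma~\ref{lema} follows from \eqref{firstmapsrelation} together with the fact that $R_S$ only depends on the closed subspace, and $R_S$ respects the correspondence \eqref{Requality}.}

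The key steps, in order, are the following. First I would recall the isometric isomorphisms $\delta_0:K\to\mathfrak{F}_\varphi$ and $\delta_1:\beta K\to\mathfrak{F}_\pi$ and state precisely the identification under which \eqref{firstmapsrelation} holds: an element $h=f+\beta g\in H$ with $f,g\in K$ corresponds to the pair of initial data $(\delta_0 f,\ \text{(something involving }\delta_1\beta g))$, i.e.\ to $F_{E(h)}(0,\cdot)$ and $\partial_0 F_{E(h)}(0,\cdot)$ via \eqref{initial conditions}. Second, I would invoke the classical fact (Dimock, \cite{dimock1980algebras}) that for a region $B$ with $\operatorname{int}(\overline B)=B$ and piecewise smooth boundary, a smooth solution of Klein--Gordon has Cauchy data supported in $\overline B$ at time zero if and only if the solution is supported in $C(B)$ — this is exactly where the differentiability hypothesis on $\partial B$ is consumed, because one needs the solvability of the characteristic initial value problem and the sharp form of finite propagation speed / the statement that $C(B)$ is the domain of dependence of $B$. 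Third, combining this support correspondence with the density of $C_0^\infty$ data (the surjectivity of $\delta_0,\delta_1$ onto compactly supported smooth functions established in Section 3) gives that the \emph{generating} test functions match up: $\{h\in C_0^\infty(\mathbb{R}^4,\mathbb{R}): \operatorname{supp} h\subseteq C(B)\}$ maps under the first quantization onto a dense subset of $\mathsf{S}_R(B)\oplus_\mathbb{R}\beta\mathsf{S}_I(B)$. Fourth, take closures and apply $R_S$, using that $R_S$ is insensitive to passing to a dense subspace (additivity) and that \eqref{Requality} turns $R_S$ of the direct sum into $R_W$ of the pair; this yields $N_S(C(B))=N_W(B)$.

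The main obstacle is the geometric/PDE step, not the functional-analytic bookkeeping: one must show that a finite-energy solution of the Klein--Gordon equation whose Cauchy data at $x^0=0$ is supported in $\overline B$ is, as a distribution, supported in the causal envelope $C(B)$, and conversely that every test function $h\in C_0^\infty$ supported in $C(B)$ produces, via $F_{E(h)}$, Cauchy data supported in $\overline B$ — and that this correspondence is compatible with the Hilbert-space completions. The forward direction is finite propagation speed for Klein--Gordon; the subtlety is the sharpness, i.e.\ that no ``leakage'' occurs at the boundary, which is exactly why $\operatorname{int}(\overline B)=B$ and piecewise smoothness of $\partial B$ are imposed (so that $B$ and $B^c$ share a negligible boundary and $C(B)'=C(B^c)$ holds cleanly, as already noted after \eqref{Cdef}). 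The converse uses that $C(B)$ is globally hyperbolic with Cauchy surface $\overline B$, so one must be a little careful about which solutions arise from $C_0^\infty$ spacetime smearings versus from $C_0^\infty$ Cauchy data — this is handled by the surjectivity statements for $\delta_0,\delta_1$ onto $C_0^\infty$ already proved, together with Corollary~1.1 of \cite{dimock1980algebras}. Once the support dictionary is in place, everything else is a routine diagram chase through \eqref{firstmapsrelation} and \eqref{Requality}.
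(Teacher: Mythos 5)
Your reduction of $N_W(B)=N_S(C(B))$ to the subspace identity \eqref{firstmapsrelation} combined with \eqref{Requality} is exactly the argument the paper itself gives: the lemma is quoted from Horuzhy (Prop.\ 3.3.2), but the Remark immediately following it derives the equality from \eqref{firstmapsrelation} and \eqref{Requality} in precisely this way, even noting (as your discussion of where the boundary hypothesis enters suggests) that the piecewise-smoothness of $\partial B$ is then dispensable. Your additional commentary on the PDE content behind \eqref{firstmapsrelation} (domain of dependence, sharp finite propagation speed) is a sound elaboration of what the paper leaves implicit, but the route is the same.
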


Now assuming for  a moment that  $R_{W}(K_{1},K_{2})'= R_{W}(K_{2}^{\perp_{K}},K_{1}^{\perp_{K}})$ is valid we can conclude that  Haag duality holds for regions $C(B)$ satisfying the hypothesis  of the previous lemma, since
\begin{align*}
	N_{S}(C(B))'&=N_{W}(B)'=R_{W}(\mathsf{S}_{R}(B),\mathsf{S}_{I}(B))' \\
	&=R_{W}(\mathsf{S}_{I}(B)^{\perp_{\varphi}},\mathsf{S}_{R}(B)^{\perp_{\varphi}})=R_{W}(\mathsf{S}_{R}(B^{c}_{0}),\mathsf{S}_{I}(B^{c}_{0})) \\
	&=N_{W}(B^{c}_{0})=N_{S}(C(B^{c}_{0}))=N_{S}(C(B)').
\end{align*}
In the second line the duality for the second map in the Weyl representation was used, and in the following step  Theorem  \ref{haagteoremauno} was used. 
\begin{remark}
In Lemma \ref{lema}, it is required that the boundary of $B$ is piece-wise smooth. However, from  \eqref{firstmapsrelation} and \eqref{Requality} it follows directly that $N_S(C(B))=N_W(B)$. (See also footnote \eqref{footnote:boundary}.) This means that it is not really needed the aforementioned requirement in order to show Haag duality in the Segal form (once it is proven in Weyl form)  and this can be considered an  improvement.
\end{remark}

\begin{remark}
Haag duality also holds in more general cases. First, let us point put that it holds in any other region $g\cdot C(B) $, the transformed of $C(B)$ by an element of the Poincaré group $g=(a,\Lambda)$. This is because of Poincaré covariance of the net of algebras, namely $N(g\cdot \mathcal{O})=U_g N(\mathcal{O})U_g^{-1}$, with $U_g$ the unitary  representation on the Fock space by exponentiation of the 1-particle unitary irreducible representation $U(a,\Lambda)$, given in footnote \ref{footnoteirrep}. Indeed, calling $\mathcal{O}=C(B)$, we have
\begin{align*}
    N(g\cdot\mathcal{O})&=U_g N(\mathcal{O})U_g^{-1}=U_g N(\mathcal{O}')'U_g^{-1}\\
    &=\left(U_g N(\mathcal{O}')U_g^{-1}\right)'= \left( N(g \cdot \mathcal{O}')\right)'\\
    &=N((g\cdot \mathcal{O})')'  
\end{align*}
Second, it also holds for any non-empty relative open set $B$ inside a Cauchy surface $\Sigma$ of Minkowski spacetime. This is because there is nothing particularly exceptional about the Cauchy surface $t=0$, and the Klein-Gordon operator is a Green-hyperbolic operator and therefore there is a unique solution associated to any pair of initial conditions (with the regularity properties we already described) defined on an arbitrary Cauchy surface $\Sigma$. Many of the previous steps just change by evaluation at $\Sigma$ instead of at $t=0$ and the derivative with respect to $t$ changes to a normal derivative orthogonal to $\Sigma$. Also, one can replace the previous definition \eqref{Cdef} of $C(B)$ by the causal completion of $B$,  $C(B):=B''$, with no reference to the $x^0=0$ Cauchy surface and $B$ some non-empty relative open set in $\Sigma$. For further details see \cite{camassa2007relative}.  

\end{remark}

In general there is no reason for Haag duality to hold in other type of regions. A counter example found in \cite{araki1964neumann} is explained in detail in section \ref{contra}.

\subsection{Proof of Haag duality}\label{proof}

As we already mentioned, instead of proving the inclusion $R_{W}(K_{1},K_{2})'\subseteq   R_{W}(K_{2}^\perp,K_{1}^\perp)$ as in \cite{araki1964neumann}, we can turn our attention to the approach of \cite{eckmann1973application}. The idea is that the commutant of some von Neumann algebra $R$ acting on the Hilbert space $\mathfrak{H}_{T}(L)$ (actually any Hilbert space) can be obtained directly by modular theory.

If there is a cyclic and separating vector\footnote{A vector $\Omega$ is \textit{cyclic} for $R$ a von Neumann algebra on a Hilbert space $\mathcal{H}$,  if $R\Omega$ is dense in $\mathcal{H}$. Also, $\Omega$ is \textit{separating} for $R$ if for any $A\in R$ it holds that $A \Omega=0$ implies $A=0$.} $\Omega$ for $R$, then it is also cyclic and separating for $R'$ \cite{BR1}. Let us consider the following anti-linear operator
\begin{align}\label{tomitaop}
    S_{0} A \Omega &= A^*\Omega, \quad A \in R,
\end{align}
which is defined on a dense set by the cyclicity of $\Omega$. Moreover $S_{0}$ is closable (because $\Omega$ is separating for $R$) and its closure $S$ is called the Tomita operator \footnote{We hope there is no confusion between the Tomita operator $S$ and the first quantization maps studied in the previous sections.}. The domain of $S$ is given by 
\begin{equation*}
    D=\{x\in\mathfrak{H}_{T}(L) : \exists \{x_{n}\}_{n\in \mathbb{N}}\subseteq R\Omega \text{ such that } \lim_{n\to \infty}x_{n}=x \text{ and } S_{0}x_{n} \text{ has a limit}\}.
\end{equation*}
The Tomita operator $S$ is invertible  and then has a unique polar decomposition
\begin{equation}
    S=J \Delta^{1/2}
\end{equation}
where $J$ is anti-unitary and $\Delta=S^*S$ is positive and self-adjoint. They are called modular conjugation and modular operator respectively. Some basic properties have to be remarked. First, taking $A=1_{R}$ in equation \eqref{tomitaop} one has $S\Omega=\Omega$. Second from $S^{2}=1_{R}$ we have $J \Delta^{1/2}J \Delta^{1/2}=1_{R}$  or $\Delta^{-1/2}=J \Delta^{1/2}J$, hence 
\begin{equation*}
    J^{2}(J^{-1} \Delta^{1/2} J)=\Delta^{-1/2}=1_{R}\Delta^{-1/2},
\end{equation*}
then because of the positivity of $J^{-1} \Delta^{1/2} J$ and the uniqueness of the polar decomposition we arrive to $J^2=1_R$. The domain of $S^{\ast}$ is 
\begin{equation*}
    D'=\{x\in\mathfrak{H}_{T}(L) : \exists \{x_{n}\}_{n\in \mathbb{N}}\subseteq R'\Omega \text{ such that } \lim_{n\to \infty}x_{n}=x \text{ and } F_{0}x_{n} \text{ has a limit}\}
\end{equation*}
being $F_{0}$ the closable operator defined by (see \cite{sunder2012invitation}, Proposition 2.3.1.)
\begin{align}
    F_{0} A \Omega &= A^*\Omega, \quad A \in R'.
\end{align}

We are now ready to state an important result of Tomita and Takesaki we will use.

\begin{theorem}[Tomita-Takesaki Theorem, Th. 10.1 of \cite{Takesaki}, Th. 2.5.14 of \cite{BR1}]\label{tomita}
	Let $R$ be a von Neumann algebra on a Hilbert space $\mathcal{F}$ admitting a cyclic and separating vector $\Omega$. Then, the following two relations hold 
	\[ R'=JRJ, \]
	\[\Delta^{it}R\Delta^{-i t}=R,\]
	for all $t \in \mathbb{R}$.
\end{theorem}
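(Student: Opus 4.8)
The plan is not to reprove this statement — the Tomita--Takesaki theorem is a structural cornerstone whose general proof is substantial, and in the body of the paper we invoke it as a black box from \cite{Takesaki,BR1}. For orientation, though, it is worth recalling that the argument splits into three clearly separated parts. The first is purely algebraic and, in fact, was already partly carried out above: from the polar decomposition $S=J\Delta^{1/2}$ and the fact that $S$ is an involution one extracts $F=S^{*}$, $\Delta=S^{*}S=FS$, $\Delta^{-1}=SF$, $J=J^{*}=J^{-1}$ antiunitary, $J\Delta^{1/2}J=\Delta^{-1/2}$ (hence $J\Delta J=\Delta^{-1}$), and $S\Omega=F\Omega=J\Omega=\Delta\Omega=\Omega$, together with the elementary identities relating the actions of $S$ and $F$ on $R\Omega$ and $R'\Omega$. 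None of this goes beyond closed-operator calculus.

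The second part is the analytic heart and the only genuine obstacle: one must prove $\Delta^{it}R\Delta^{-it}=R$ for every real $t$. The classical route — Tomita's argument, streamlined by Takesaki — is to Gaussian-smear elements of $R$, setting $A_{n}=\sqrt{n/\pi}\int_{\mathbb{R}} e^{-n t^{2}}\Delta^{it}A\Delta^{-it}\,dt$, to obtain a weakly dense set of $A$ for which $t\mapsto\Delta^{it}A\Delta^{-it}$ extends to an entire $B(\mathcal{F})$-valued function; one then shows, by a resolvent/contour estimate anchored on $S\Omega=\Omega$ and the identities of the first part, that for such $A$ and any $A'\in R'$ the vector-valued function $z\mapsto\Delta^{z}A'\Delta^{-z}A\Omega$ is controlled on the relevant strip, which forces $\Delta^{it}A\Delta^{-it}$ to commute with $R'$, i.e.\ to lie in $(R')'=R$; weak closedness of $R$ and density then give the invariance for all of $R$. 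An equivalent, somewhat cleaner packaging (Rieffel--van Daele) works with the closed real subspace $K_{\Omega}=\overline{\{A\Omega+A^{*}\Omega:A=A^{*}\in R\}}$, notes $K_{\Omega}\cap iK_{\Omega}=\{0\}$ (from $\Omega$ separating) and $\overline{K_{\Omega}+iK_{\Omega}}=\mathcal{F}$, identifies $S$ with the closed conjugation attached to the pair $(K_{\Omega},iK_{\Omega})$, and reads off $\Delta^{it}$ as the modular one-parameter group of that pair; $\Delta^{it}R\Delta^{-it}=R$ then follows from a von Neumann density argument. Either way, this is where the real work lives.

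The third part derives $R'=JRJ$ from the invariance just obtained. For $A\in R$ a direct computation — using $SA\Omega=A^{*}\Omega$, the polar decomposition, and $\Delta^{it}R\Delta^{-it}=R$ — shows that $JAJ$ commutes with every element of $R$, so $JRJ\subseteq R'$. Since $F=S^{*}$ is exactly the Tomita operator of $R'$ (with domain $D'$ as in the excerpt) and carries modular data $J$ and $\Delta^{-1}$, the same argument applied to $R'$ yields $JR'J\subseteq R$, i.e.\ $R'\subseteq JRJ$; combining the two inclusions gives $R'=JRJ$. This is the shape of the proof; for our purposes Theorem \ref{tomita} is used exactly as stated.
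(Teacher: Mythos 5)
Your treatment matches the paper's exactly: Theorem \ref{tomita} is invoked there as a black box with citations to \cite{Takesaki} and \cite{BR1}, and no proof is given in the text. Your orientation sketch of the standard three-part argument (the algebraic identities from the polar decomposition, the analytic core establishing $\Delta^{it}R\Delta^{-it}=R$, and the derivation of $R'=JRJ$ via $F=S^{*}$ being the Tomita operator of $R'$) is accurate, but nothing beyond the citation is needed here.
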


The vacuum vector $\Omega$ is cyclic for the algebras $R_{W}(\mathsf{S}_{R}(B),\mathsf{S}_{I}(B))$. This is because by the Reeh-Schlieder Theorem   $\Omega$ is cyclic for the Segal algebra $N_S(\mathcal{O})$ \cite{Reeh:1961ujh}, and in our case we are interested in $\mathcal{O}$ being a double cone $C(B)$ over $B$ in which case by Lemma \ref{lema} we have $N_{S}(C(B))=N_{W}(B)$ and then $\Omega$ is cyclic for $R_{W}(\mathsf{S}_{R}(B),\mathsf{S}_{I}(B))$. Moreover it is well known that if a vector is cyclic for an algebra then is separating for its commutant\footnote{ In our case $\Omega$ is of course also cyclic for $R_{W}(K_2^\perp,K_1^\perp)\subseteq R_{W}(K_1,K_2)'$ and then we can take any $A \in R_{W}(K_1,K_2)$ such that $A\Omega=0$, any $B \in R_{W}(K_2^\perp,K_1^\perp)$, and write 
\[A (B \Omega)= B A \Omega=0, \]
which implies that $A=0$ since it annihilates a dense set in $\mathfrak{H}_{T}(L)$.}. In short, $\Omega$ is cyclic and separating for any $R_{W}(K_1,K_2)$  with the pair $(K_1, K_2)$ coming from Weyl's first quantization map (from now on we always assume this). Therefore Tomita-Takesaki Theorem \eqref{tomita} applies to our situation and in particular we have
$$R_{W}(K_{1},K_{2})'=JR_{W}(K_{1},K_{2})J.$$

Let us anticipate how the strategy of \cite{eckmann1973application} follows. Let  $\hat{J}=J|_{\mathcal{F}_{r}^{(1)}}=J|_{K}$ be the restriction of $J$ to the real one-particle space, then one important step is to show that $\hat{J}K_{1}\subseteq K_{2}^\perp$ and $\hat{J}K_{2} \subseteq K_{1}^\perp$. This in turn will imply:
\begin{align}
	Je^{i\varphi(f)}J&=e^{i\varphi(\hat{J}f)} \nonumber \\
	Je^{i\pi(g)}J&=e^{i\pi(\hat{J}g)},
\label{almosthaagduality}
\end{align} 
from which we can deduce $R_{W}(K_{1},K_{2})'\subseteq R_{W}(K_{2}^\perp,K_{1}^\perp)$ as follows. As already mentioned, an element $M\in R_{W}(K_{1},K_{2})'$ can be written as $M=JAJ$, where $A\in R_{W}(K_{1},K_{2})$, and then as the von Neumann algebras are weakly closed $A=w-\lim_{n}A_{n}$ and $M=w-\lim_{n}JA_{n}J$, where $A_{n}\in R_{W}(K_{1},K_{2})$. But any of these $A_{n}$ is a finite sum of  elements $e^{i\varphi(f_{k_n})}e^{i\pi(g_{k_n})}$, meaning there are elements $f_{k_n}\in K_{1}$ and $g_{k_n}\in K_{2}$ such that 
\begin{align*}
    A_{n}&=\sum_{k=0}^{N_{n}}a_{k_n}e^{i\varphi(f_{k_n})}e^{i\pi(g_{k_n})}, \\ 
    JA_{n}J&=\sum_{k=0}^{N_{n}}a_{k_n}Je^{i\varphi(f_{k_n})}JJe^{i\pi(g_{k_n})}J \\
    &=\sum_{k=0}^{N_{n}}a_{k_n}e^{i\varphi(\hat{J}f_{k_n})}e^{i\pi(\hat{J}g_{k_n})},
\end{align*}
that clearly belong to $R_{W}(K_{2}^\perp,K_{1}^\perp)$. Moreover as the von Neumann algebras are weakly closed, so $M=JAJ\in R_{W}(K_{2}^\perp,K_{1}^\perp)$. This shows that \eqref{almosthaagduality} implies Haag duality. 

However, in order to get an explicit formula for $\hat{J}$, in \cite{eckmann1973application} several auxiliary operators are introduced (see equation 5 there). Instead, we shall follow the main plot of that reference, but in order to construct $J$ we shall only deal with $S$ and $S^*$, since one has
\begin{equation}
    J=S(S^*S)^{-1/2}
\end{equation}

Let us start with the following Lemma, which is an extension of Lemma 4 in \cite{eckmann1973application}.
\begin{lemma}
	Let $D$ and $D'$ be the (dense) domains of $S$ and $S^*$ respectively, and  $E^{(k)}$ the orthogonal projections on $\mathcal{F}^{(k)}$, $k \in \mathbb{N}_0$. Then $J$, $S$ and $S^{\ast}$ commute with the projections,
	\begin{align*}
		E^{(k)}J&= JE^{(k)} \\
		E^{(k)}S&\subseteq SE^{(k)} \\
	    E^{(k)}S^{\ast}&\subseteq S^{\ast}E^{(k)} \\ 
	\end{align*}
Even more, the restriction of $S$ to $E^{(k)}D$ is a close densely defined operator in  $\mathcal{F}^{(k)}$, with domain $E^{(k)}D=D\cap \mathcal{F}^{(k)}$ and \textbf{leaving invariant $D\cap \mathcal{F}_{r}^{(k)}$} (which is \textbf{also left invariant by $J$}). Analogously, the restriction of $S^{\ast}$ to $E^{(k)}D'$ is a closed densely defined operator in  $\mathcal{F}^{(k)}$, with domain $E^{(k)}D'=D'\cap \mathcal{F}^{(k)}$ and \textbf{leaves invariant  $D'\cap \mathcal{F}_{r}^{(k)}$}.
\label{commutations}
\end{lemma}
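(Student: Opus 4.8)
The plan is to exploit the fact that the Tomita operator $S_0$ is built from the CCR representation, which interacts very transparently with the grading $\mathfrak{H}_T(L)=\overline{\bigoplus_n \mathcal{F}^{(n)}}$. First I would note that the generating elements $A=e^{i\varphi(f)}e^{i\pi(g)}$ of $R_W(K_1,K_2)$ are built from Weyl unitaries, whose action on the vacuum is a coherent state: $e^{i\varphi(f)}e^{i\pi(g)}\Omega = c(f,g)\,e^{\frac{1}{\sqrt2}(a^*(f)-a^*(g)/i)}\Omega$ up to a scalar, i.e.\ an exponential vector $\sum_n \frac{1}{n!}\,v^{\odot n}$ with $v=\tfrac{1}{\sqrt2}(f-ig)\in L$ (one must track the precise combination using \eqref{operadores}). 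Applying $S_0$ sends $A\Omega\mapsto A^*\Omega$, which is the coherent state with $v$ replaced by $-\bar v$ (complex conjugation in the appropriate sense coming from the adjoint), together with a complex-conjugated scalar. The key observation is that $S_0$ therefore acts on each homogeneous component of a coherent state by an operation that is homogeneous of the \emph{same} degree $n$: it is (a scalar multiple of) $\cdot\mapsto$ ``$n$-fold symmetric power of an antilinear degree-one map''. Since finite linear combinations of such coherent states (and their limits/cores) are dense in $D$, this shows $S_0$, hence its closure $S$, commutes with $E^{(k)}$ in the sense stated: $E^{(k)}S\subseteq SE^{(k)}$ and $E^{(k)}D = D\cap\mathcal{F}^{(k)}$.

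Concretely, the steps in order are: (1) Establish that $\{$coherent vectors $A\Omega : A$ a generator of $R_W(K_1,K_2)\}$ span a core for $S$; this is essentially the content of $\Omega$ being cyclic and $S_0=S|_{R\Omega}$ being closable, together with the remark that the span of exponential vectors is a core for $S$ because it is a dense $S_0$-invariant set. (2) Compute $E^{(k)}A\Omega$ and $E^{(k)}A^*\Omega = E^{(k)}S_0 A\Omega$ explicitly and observe $E^{(k)}S_0 A\Omega = S_0'\big(E^{(k)}A\Omega\big)$ for the natural restricted map, which on the algebraic span gives $E^{(k)}S_0 \subseteq S_0 E^{(k)}$; passing to closures gives $E^{(k)}S\subseteq SE^{(k)}$. (3) Since $S$ is closed and $E^{(k)}$ is a bounded projection commuting with it, $S|_{\mathcal{F}^{(k)}}$ (with domain $D\cap\mathcal{F}^{(k)}=E^{(k)}D$) is a closed densely defined operator on $\mathcal{F}^{(k)}$; density follows because $E^{(k)}$ of a core for $S$ is a core for $S|_{\mathcal{F}^{(k)}}$, which is dense in $\mathcal{F}^{(k)}$. (4) Run the identical argument for $S^*$ using $F_0$ on $R'\Omega$ (the commutant $R_W(K_2^\perp,K_1^\perp)\subseteq R_W(K_1,K_2)'$ also has $\Omega$ cyclic, as noted in the footnote, so one has an abundance of coherent vectors in $D'$ too), yielding $E^{(k)}S^*\subseteq S^*E^{(k)}$ and the analogous statements for $D'$. (5) Deduce the statement for $J$: from $J=S(S^*S)^{-1/2}=S\Delta^{-1/2}$ and the fact that $\Delta=S^*S$ commutes strongly with each $E^{(k)}$ (bounded projections commuting with a closed operator and its adjoint commute with the positive part of its polar decomposition and all its functional calculus, in particular $\Delta^{-1/2}$), one gets $E^{(k)}J = JE^{(k)}$, now as an honest operator identity since $J$ is bounded (anti-unitary).

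For the ``real'' parts: the real Fock space $\mathcal{F}^{(k)}_r = \bigodot^k K$ is characterized inside $\mathcal{F}^{(k)}$ as the fixed-point set of a natural conjugation $\Gamma$ (the $k$-fold symmetric power of the conjugation on $L=K+iK$ fixing $K$). The point is that on a coherent vector $A\Omega$ with $A$ a Weyl generator, $S_0 A\Omega$ differs from $\Gamma_{\!*}(A\Omega)$ only by phases, and more relevantly one checks directly that if $x\in D\cap\mathcal{F}^{(k)}_r$ then $S x\in\mathcal{F}^{(k)}_r$: since $S$ maps $A\Omega\mapsto A^*\Omega$ and for $A=e^{i\varphi(f)}e^{i\pi(g)}$ with $f,g\in K$ real the vector $A^*\Omega=e^{-i\pi(g)}e^{-i\varphi(f)}\Omega$ is again built from the same real data, the relevant combination lands in $\mathcal{F}_r$. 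I would phrase this most cleanly by noting $J$ restricted to $K=\mathcal{F}^{(1)}_r$ must be the antilinear operator implementing the modular structure of the single-pair situation, which by the standard Weyl/first-quantization computation is real on $K$; more robustly, $S$ intertwines the complexification $\mathcal{F}^{(k)}=\mathcal{F}^{(k)}_r+i\mathcal{F}^{(k)}_r$ in a way compatible with $\Gamma$, because $S_0$ does so on the dense set of real-data coherent vectors, and $\Delta$ (being positive self-adjoint and commuting with the real structure, as $\Delta=S^*S$ and both $S,S^*$ respect it) leaves $\mathcal{F}^{(k)}_r\cap D$ invariant; hence so do $\Delta^{\pm 1/2}$ and $J=S\Delta^{-1/2}$.

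The main obstacle I anticipate is step (1)/(2) done \emph{rigorously}: one must be careful that exponential (coherent) vectors, while dense, genuinely form a \emph{core} for the closed operator $S$ — i.e.\ that their span is dense in $D$ in the graph norm, not merely dense in $\mathfrak{H}_T(L)$. The clean way is to observe $S_0$ is defined on all of $R\Omega$, that $R\Omega$ already contains the span of these coherent vectors, and that $R\Omega$ is a core for $S$ by definition of $S$ as the closure of $S_0$; so actually it suffices to work with all of $R\Omega$ rather than singling out coherent vectors, and the degree-homogeneity is then read off from the explicit form of products of Weyl unitaries acting on $\Omega$ expanded in the number basis. The secondary delicate point is the domain equality $E^{(k)}D = D\cap\mathcal{F}^{(k)}$ (as opposed to a mere inclusion): the inclusion $D\cap\mathcal{F}^{(k)}\subseteq E^{(k)}D$ is trivial, and the reverse uses that $E^{(k)}$ commutes with the closed operator $S$, so for $x\in D$ the vector $E^{(k)}x$ is again in $D$ (approximate $x$ in graph norm by $x_n\in R\Omega$, apply the bounded $E^{(k)}$, and use $E^{(k)}S x_n = S_0'E^{(k)}x_n$ converges). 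Everything else is bookkeeping with the polar decomposition and boundedness of $J$ and the $E^{(k)}$.
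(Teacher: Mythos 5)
Your route is genuinely different from the paper's, and it has two substantive gaps. The paper does \emph{not} reprove the $S$ and $J$ statements: it cites Lemma 4 of \cite{eckmann1973application} for those, and its only new content is the $S^{\ast}$ half, obtained \emph{from} the $S$ half by the adjoint identity $\overline{(S\Psi,E^{(k)}\Phi)}=(E^{(k)}S^{\ast}\Phi,\Psi)$ together with the orthogonality trick around \eqref{proofofinvariance}. Your step (4) instead tries to rerun the coherent-vector argument for $S^{\ast}=\overline{F_{0}}$ using the ``abundance of coherent vectors in $D'$'' coming from $R_{W}(K_{2}^{\perp},K_{1}^{\perp})\subseteq R_{W}(K_{1},K_{2})'$. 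This is circular in spirit and incomplete in fact: density of $R_{W}(K_{2}^{\perp},K_{1}^{\perp})\Omega$ in $\mathfrak{H}_{T}(L)$ is not density in the \emph{graph norm} of $F_{0}$, whose natural domain is all of $R'\Omega$. The graph closure of $F_{0}$ restricted to the subalgebra's orbit is the Tomita operator of $R_{W}(K_{2}^{\perp},K_{1}^{\perp})$, which is a priori only a restriction of $S^{\ast}$; knowing it exhausts $S^{\ast}$ is essentially the duality one is trying to prove. The fix is exactly the paper's move (or the general fact that a bounded self-adjoint operator commuting with a closed densely defined operator in the sense $E^{(k)}S\subseteq SE^{(k)}$ automatically satisfies $E^{(k)}S^{\ast}\subseteq S^{\ast}E^{(k)}$): derive the $S^{\ast}$ commutation by taking adjoints, never touching $R'$.

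The second gap is in steps (1)--(2) and in the ``real parts'' paragraph. Computing $S_{0}$ on coherent vectors $A\Omega$ and observing that the map $A\Omega\mapsto A^{\ast}\Omega$ is ``homogeneous of the same degree componentwise'' does not by itself yield $E^{(k)}S\subseteq SE^{(k)}$: one must show that $E^{(k)}A\Omega$ (which is a single homogeneous component of a coherent vector, \emph{not} of the form $B\Omega$ with $B\in R_{W}(K_{1},K_{2})$) lies in $D$, i.e.\ admits a graph-norm approximation from $R\Omega$ whose $S_{0}$-images converge to $E^{(k)}A^{\ast}\Omega$. Your own ``main obstacle'' paragraph identifies this but does not resolve it; reading off degree-homogeneity on $R\Omega$ is not a substitute for producing the approximating sequence. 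Similarly, the invariance of $\mathcal{F}_{r}^{(k)}$ cannot be read off from vectors $A\Omega$ with ``real data'', because such vectors are not in $\mathcal{F}_{r}$ (e.g.\ $e^{i\varphi(f)}\Omega$ has one-particle component proportional to $if$); the statement concerns arbitrary $x\in D\cap\mathcal{F}_{r}^{(k)}$, and the paper handles the $S^{\ast}$, $k=1$ case precisely by the orthogonality argument \eqref{proofofinvariance} against both $\mathcal{F}_{r}^{(1)}\cap D$ and $i\mathcal{F}_{r}^{(1)}\cap D$, again leaning on the already-established invariance under $S$. Your step (5) for $J$ is fine once the $S$ and $S^{\ast}$ commutations are in hand, but as written it inherits the gap from step (4).
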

\begin{proof}
Taking into account that $E^{(k)}$ and $J$ are bounded operators, the domains of the operators $E^{(k)}J$ and $JE^{(k)}$
 are the entire Fock space $\mathfrak{H}_{T}(L)$ whereas the domain of $E^{(k)}S$ is $D$. However, the domain of $SE^{(k)}$ are those vectors in Fock space such that their $k$ component is in $D$ which is a bigger set\footnote{It is possible to have a vector $x\in \mathfrak{H}_{T}(L)$ such that $x\notin D$ but $E^{(k)}x\in D$, then we can apply $S$ to it. That is why we have contentions and not equalities in the lemma.} and similarly for $S^{\ast}$ but with $D'$.

The part of this Lemma referring to  $S$ and $J$ is proven in \cite{eckmann1973application}. In contrast, the claims regarding $S^{\ast}$ are new and deserve an explanation. $S^*$, being the adjoint of  $S$,  satisfies
$$\overline{(S\Psi,\Phi)}_{\mathfrak{H}_{T}(L)}=(\Psi,S^{\ast}\Phi)_{\mathfrak{H}_{T}(L)}$$ 
for all $\Phi \in D'$ and $\Psi\in D$. Following this notation, it is possible to show that $E^{(k)}\Phi \in D'$ and $E^{(k)}S^\ast = S^\ast E^{(k)}|_{D'}$, 
\begin{align*}
    (S\Psi,E^{(k)}\Phi)_{\mathfrak{H}_{T}(L)}&=(E^{(k)}S\Psi,\Phi)_{\mathfrak{H}_{T}(L)}\\
    &=(SE^{(k)}\Psi,\Phi)_{\mathfrak{H}_{T}(L)}\\
    &=(S^\ast\Phi,E^{(k)}\Psi)_{\mathfrak{H}_{T}(L)}\\
    &=(E^{(k)}S^\ast\Phi,\Psi)_{\mathfrak{H}_{T}(L)}.
\end{align*}
We have used the orthogonality of $E^{(k)}$ and that $E^{(k)}S \subseteq SE^{(k)}$.

Let us introduce some notation, if $\Phi\in \mathcal{F}^{(1)}$, then there are $\Phi_R,\Phi_I\in \mathcal{F}_{r}$ such that $\Phi=\Phi_{R}+i\Phi_{I}$ and we note $\bar{\Phi}:=\Phi_{R}-i\Phi_{I}$ and $\Im{\Phi}:=\frac{1}{2i}(\Phi-\bar{\Phi})$, note that a vector $\Phi\in \mathcal{F}^{(1)}$ is in $\mathcal{F}_{r}$ if and only if $\Im{\Phi}=0$. In order to see that $S^*$ leaves $\mathcal{F}^{(1)}_r$  invariant, namely that $\Im{ S^*\Phi}$=0 for $\Phi \in \mathcal{F}^{(1)}_r\cap D'$,  we can proceed as follows. By taking into account that $\overline{(\Psi,\Phi)}_{\mathfrak{H}_{T}(L)}=(\overline{\Psi},\overline{\Phi})_{\mathfrak{H}_{T}(L)}$, and the definition of $S^*$ the following two equations hold,
\begin{align*}
    \overline{(S\Psi,\Phi)}_{\mathfrak{H}_{T}(L)}&=(\Psi,S^{\ast}\Phi)_{\mathfrak{H}_{T}(L)},\\
(S\overline{\Psi},\Phi)_{\mathfrak{H}_{T}(L)}&=(\Psi,\overline{S^{\ast}\Phi})_{\mathfrak{H}_{T}(L)}, 
\end{align*}
for $\Psi \in \mathcal{F}^{(1)}\cap D$ and $\Phi \in \mathcal{F}^{(1)}\cap D'$. From now on we further assume  $\Phi \in \mathcal{F}^{(1)}_r\cap D'$, that is $\overline{\Phi}=\Phi$. By subtracting the second one to the first one we get
\begin{equation}\label{proofofinvariance}
    \overline{(S\Psi,\Phi)}_{\mathfrak{H}_{T}(L)}-(S\overline{\Psi},\Phi)_{\mathfrak{H}_{T}(L)}=2i(\Psi,\Im{S^{\ast}\Phi})_{\mathfrak{H}_{T}(L)}.
\end{equation}
We consider two cases separately. First let us assume that $\Psi\in \mathcal{F}^{(1)}_r\cap D$, i.e. it is purely real: $\Psi=\overline{\Psi}$.  Then the LHS of \eqref{proofofinvariance} reads $(2i\Im{S\Psi},\Phi)$ which is zero since
we know that $S$ leaves $\mathcal{F}^{(1)}_r$ invariant by Lemma 4 of \cite{eckmann1973application}, implying that $\Im{S^{\ast}\Phi}\perp  (\mathcal{F}^{(1)}_r\cap D)$. Now let us assume that  $\Psi=i\chi$, with $\chi \in \mathcal{F}^{(1)}_r\cap D$. Then the LHS of \eqref{proofofinvariance} reads $i(S \chi,\Phi-\overline{\Phi})=0$, where we have used that $S$ is anti-linear and that $\overline{\Phi}=\Phi$. This implies that   $\Im{S^{\ast}\Phi}\perp  (i\mathcal{F}^{(1)}_r\cap D)$, and taking into account the first case we conclude that $\Im{S^{\ast}\Phi}$ is orthogonal to the dense set $\mathcal{F}^{(1)}\cap D $ and thus vanishes, which is what we wanted to prove. We will neither use nor prove the case $k>1$, but it follows from similar arguments as those used for $k=1$.
\end{proof}

We can examine further the domain of $S$ restricted to the one-particle real space $\mathcal{F}^{(1)}_{r}\simeq K$. The main point we want to stress is that $S$ can act on one-particle states, although they are not \textit{a priori} of the form $A\Omega$, with $A\in R_{W}(K_1,K_2)$.
\begin{lemma}
   If $K_1$ and $K_2$ are in generic position (i.e. $K_1\cap K_2=K_1\cap K_2^\perp=K_2\cap K_1^\perp=K_1^\perp\cap K_2^\perp=\varnothing$) then $K_1+K_2 \subseteq D \cap \mathcal{F}^{(1)}_r$ so $S$ can act on these vectors.
\end{lemma}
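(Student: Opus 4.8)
The plan is to prove directly that each $f\in K_1$ and each $g\in K_2$ lies in the domain $D$ of $S$, and then conclude by linearity of $D$. First I would recall why the Tomita operator $S$ and its domain $D$ are available here: the generic--position hypothesis forces in particular $K_1\cap K_2=\{0\}$ and $K_1^{\perp}\cap K_2^{\perp}=\{0\}$ (i.e. $\overline{K_1+K_2}=K$), and these two conditions together, via \eqref{Requality}, are precisely the standardness of the one--particle subspace $K_1\oplus_{\mathbb R}\beta K_2\subset L$; hence, as in the discussion after Theorem \ref{tomita}, $\Omega$ is cyclic and separating for $R_W(K_1,K_2)$, so $S=\overline{S_0}$ of \eqref{tomitaop} and its domain $D$ make sense.

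Next, fix $f\in K_1$ and consider the strongly continuous unitary group $t\mapsto U_F(tf)=e^{it\varphi(f)}$, $t\in\mathbb R$. Since $tf\in K_1$, each $e^{it\varphi(f)}$ belongs to $R_W(K_1,K_2)$ (take $g=0$ in \eqref{segundomapa}), hence $A_t:=t^{-1}\big(e^{it\varphi(f)}-1\big)\in R_W(K_1,K_2)$ and the vectors $A_t\Omega=t^{-1}\big(e^{it\varphi(f)}\Omega-\Omega\big)$ lie in $R_W(K_1,K_2)\Omega\subseteq D$; because $S_0$ is antilinear and $t$ is real, $S_0A_t\Omega=t^{-1}\big(e^{-it\varphi(f)}\Omega-\Omega\big)$. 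Now I would let $t\to 0$ along $t=1/n$: by \eqref{operadores} one has $\Omega\in\mathrm{Dom}(\varphi(f))$ with $\varphi(f)\Omega=\tfrac{1}{\sqrt2}f$, so Stone's theorem gives $A_{1/n}\Omega\to i\varphi(f)\Omega=\tfrac{i}{\sqrt2}f$ and likewise $S_0A_{1/n}\Omega\to -i\varphi(f)\Omega=-\tfrac{i}{\sqrt2}f$ in $\mathfrak H_T(L)$. Both limits exist, so by the definition of $D=\mathrm{Dom}(\overline{S_0})$ we get $\tfrac{i}{\sqrt2}f\in D$ with $S\big(\tfrac{i}{\sqrt2}f\big)=-\tfrac{i}{\sqrt2}f$; since $D$ is a complex subspace and $S$ is antilinear this yields $f\in D$ and $Sf=f$. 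Running the same argument with $V_F(tg)=e^{it\pi(g)}$ and the identity $i\pi(g)\Omega=-\tfrac{1}{\sqrt2}g$ from \eqref{operadores} gives $g\in D$ with $Sg=-g$ for every $g\in K_2$.

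Finally, $D$ is a linear subspace of $\mathfrak H_T(L)$, so from $K_1\subseteq D$ and $K_2\subseteq D$ we get $K_1+K_2\subseteq D$; and trivially $K_1+K_2\subseteq K=\mathcal F_r^{(1)}$, whence $K_1+K_2\subseteq D\cap\mathcal F_r^{(1)}$, which is the claim (and, as a byproduct, $S$ acts on $K_1+K_2$ as the identity on $K_1$ and as minus the identity on $K_2$, consistently with Lemma \ref{commutations}). The only genuinely delicate point is the norm convergence of the difference quotients $t^{-1}\big(e^{\pm it\varphi(f)}\Omega-\Omega\big)$ as $t\to 0$: for an unbounded self--adjoint generator this is not automatic, but it holds here precisely because $\Omega$ lies in the domains of $\varphi(f)$ and $\pi(g)$ --- exactly the content of \eqref{operadores} --- and one must also keep track of the fact that the approximating vectors genuinely come from $R_W(K_1,K_2)\Omega$, which they do since $e^{it\varphi(f)}-1\in R_W(K_1,K_2)$.
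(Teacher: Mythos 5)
Your proof is correct and follows essentially the same route as the paper: both approximate the relevant one-particle vectors by difference quotients of Weyl unitaries in $R_{W}(K_1,K_2)$ applied to $\Omega$, check that $S_0$ applied to these quotients also converges, and invoke closedness of $S$. The only cosmetic difference is that you treat $\varphi(f)$ and $\pi(g)$ via two separate one-parameter groups and then use linearity of $D$, whereas the paper takes a single combined limit with generator $\varphi(f_1)+\pi(g_1)$, which delivers the explicit formula \eqref{Soneparticle} on the whole complex one-particle space in one stroke.
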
  
\begin{proof}
 In order to see this, let us take a generic one-particle state 
\begin{align*}
    \Psi&=\frac{1}{\sqrt{2}}\left( f_1+if_2+ig_1-g_2\right),\quad f_i \in K_1, \quad g_i\in K_2 \\
    &=\left(\varphi(f_1)+i\varphi(f_2)+\pi(g_1)+i\pi(g_2)\right)\Omega
\end{align*} 
It can be written as the limit (in the norm topology of $\mathfrak{H}_{T}(L)$, see \cite{hall2013quantum}, Proposition 10.14)
$$\lim_{\lambda\rightarrow 0}\lambda^{-1}\left[ -i(e^{\lambda i (\varphi(f_1)+\pi(g_1))}-1)+(e^{\lambda i (\varphi(f_2)+\pi(g_2))}-1) \right]\Omega  $$
of states in the domain of $S$. Also, the limit
$$\lim_{\lambda\rightarrow 0}\lambda^{-1}S\left[ -i(e^{\lambda i (\varphi(f_1)+\pi(g_1))}-1)+(e^{\lambda i (\varphi(f_2)+\pi(g_2))}-1) \right]\Omega  $$
exists and gives $\left(\varphi(f_1)-i\varphi(f_2)+\pi(g_1)-i\pi(g_2)\right)\Omega=\frac{1}{\sqrt{2}}\left( f_1-if_2+ig_1+g_2\right)$. Then, since $S$ is closed, we get
\begin{equation}\label{Soneparticle}
    S (f_1+if_2+ig_1-g_2)= f_1-if_2+ig_1+g_2.
\end{equation}
This shows explicitly how $S$ acts on the one-particle states and also that $K_1+K_2 \subseteq D \cap \mathcal{F}^{(1)}_r$. Even more, since $(K_1+K_2)^\perp=K_1^\perp\cap K_2^\perp=0$ (because of the generic position property, see below), we have that $K_1+K_2$ is actually dense in $\mathcal{F}^{(1)}_r$ and so it is dense in the domain of $S$ restricted to the one particle space.  In \cite{eckmann1973application} the authors claim that $K_1+K_2=D\cap \mathcal{F}_r^{(1)}$, however we do not see how to prove $D\cap \mathcal{F}_r^{(1)} \subseteq K_1+K_2$. In any case it will be sufficient with what we have claimed above.
\end{proof}

The fact that $K_1$ and $K_2$ are in generic position was assumed in the proof of Lemma 4 in \cite{eckmann1973application}, and here we show in Appendix \ref{ap4}, following \cite{araki1964neumann}, that $K_1$ and $K_2$ are indeed in generic position if they are given by the first quantization maps\footnote{The fact that $K_1$ and $K_2$ are in generic position immediately implies that $S=K_1+\beta K_2$ is separating, namely $S \cap \beta S=\varnothing$. Also, it can be shown that $S=K_1+\beta K_2$ is cyclic ($S+\beta S$ is dense in $H$) by means of the duality in the Weyl formulation and the generic position property. We suspect this can be shown even without invoking the duality in the Weyl formulation. A closed real subspace $S$ that is cyclic and separating is called \textit{standard}.} $S_{R}(B)$ and $S_{I}(B)$.

The following theorem, which holds for real or complex Hilbert spaces, sometimes referred to as Halmos’ two projections theorem,  is a key result  in order to have more control over $J$ later on, by first writing the restrictions  to $\mathcal{F}_{r}^{(1)}\simeq K$ of $S$ and $S^{\ast}$ in ``matrix form''. 

\begin{theorem}[Stated as in Lemma 5 of \cite{eckmann1973application}]\label{joyita}
	Let $K$ be a Hilbert space and $K_{1}$ and $K_{2}$ two closed subspaces of $K$ in generic position. Then, there exists another Hilbert space $\mathcal{K}_{\ast}$ and a positive contraction $T$ on it, with  $\text{ker}\, T=\text{ker}\, (1-T)=0$, such that the pair $\{K_{1},K_{2}\}$ is unitarily equivalent to the pair $\{\Gamma(T),\Gamma(-T)\}$, this means there is a unitary (or orthogonal) map $U:K\to \mathcal{K}_{\ast} \oplus_{(\mathbb{R})} \mathcal{K}_{\ast}$ which carries $K_{1}$ to the graph of $T$ and $K_{2}$ to the graph of $-T$. Here $\Gamma(T)$ denotes the graph of $T$. 
\end{theorem}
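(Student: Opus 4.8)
This is the classical two-subspaces theorem of Halmos, and the plan is to follow the standard route: first reduce the pair $(P_1,P_2)$ of orthogonal projections of $K$ onto $K_1,K_2$ to the ``two projections'' normal form, and then rotate that normal form into a pair of graphs. I would begin by recording what the generic position hypothesis buys us analytically. Set $W:=P_1P_2P_1|_{K_1}$, a positive contraction on $K_1$. A one-line computation using $\langle Wx,x\rangle=\|P_2x\|^2$ for $x\in K_1$ shows $\ker W=K_1\cap K_2^{\perp}$ and $\ker(1-W)=K_1\cap K_2$, while the analogous operator built from $P_1^{\perp}$ encodes the other two corner intersections; so under the hypothesis both $W$ and $1-W$ have trivial kernel, and we may put $C:=W^{1/2}$ and $S:=(1-W)^{1/2}$, two injective positive contractions on $K_1$ with $C^2+S^2=1$ and dense ranges.

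The heart of the proof is the construction of a unitary (orthogonal, in the real case) $U:K\to K_1\oplus_{(\mathbb{R})}K_1$ with $UP_1U^{-1}=\left(\begin{smallmatrix}1&0\\0&0\end{smallmatrix}\right)$ and $UP_2U^{-1}=\left(\begin{smallmatrix}C^2&CS\\SC&S^2\end{smallmatrix}\right)$. I would take $U=\mathrm{id}_{K_1}\oplus V$ relative to the splitting $K=K_1\oplus K_1^{\perp}$, where $V:K_1^{\perp}\to K_1$ is the inverse of the partial isometry $V_0$ occurring in the polar decomposition $P_1^{\perp}P_2P_1=V_0\,|P_1^{\perp}P_2P_1|$; here $|P_1^{\perp}P_2P_1|=(W-W^2)^{1/2}=CS$, and generic position makes $V_0:K_1\to K_1^{\perp}$ a genuine unitary (injectivity because $\ker(CS)=0$, surjectivity because $\overline{\mathrm{range}\, V_0}=K_1^{\perp}$, both reducing to triviality of corner intersections). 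Conjugating $P_1$ is immediate, and conjugating $P_2$ is a short computation using $P_2a=P_2P_1a$ for $a\in K_1$ and $(P_1P_2P_1^{\perp})^{*}=P_1^{\perp}P_2P_1=V_0CS$; the $(2,2)$-entry is then forced to be $S^2$ by $P_2^2=P_2$ together with injectivity of $CS$.

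Finally I would symmetrize. Write $C=\cos\Theta$ with $\Theta$ a positive operator on $K_1$ whose spectrum lies in $[0,\pi/2]$ and avoids the endpoints as point spectrum. The orthogonal ``half-angle'' rotation $R=\left(\begin{smallmatrix}\cos(\Theta/2)&\sin(\Theta/2)\\-\sin(\Theta/2)&\cos(\Theta/2)\end{smallmatrix}\right)$ carries $K_1=\Gamma(0)$ to $\Gamma(-\tan(\Theta/2))$ and $\mathrm{range}(UP_2U^{-1})=\overline{\{(Cw,Sw)\}}$ to $\Gamma(\tan(\Theta/2))$, as one checks slicewise via the angle-addition formulas; composing with the reflection $\mathrm{diag}(1,-1)$ then produces $\widetilde U:K\to\mathcal{K}_{\ast}\oplus_{(\mathbb{R})}\mathcal{K}_{\ast}$ with $\mathcal{K}_{\ast}:=K_1$ such that $\widetilde U K_1=\Gamma(T)$ and $\widetilde U K_2=\Gamma(-T)$, where $T:=\tan(\tfrac12\arccos C)=\big(\tfrac{1-C}{1+C}\big)^{1/2}$. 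This $T$ is a positive contraction, $\Gamma(\pm T)$ are closed since $T$ is bounded, and $\ker T=\ker(1-C)=0$ and $\ker(1-T)=\ker C=0$, both exactly by generic position. The step I expect to be the real obstacle is the middle one — checking that $V_0$ is a unitary and that $U$ genuinely sends $P_2$ to the advertised $2\times2$ matrix; once that normal form is in hand the rest is bookkeeping with the functional calculus of the single operator $W$.
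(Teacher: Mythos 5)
Your argument is correct, and it is worth noting that the paper does not actually prove this theorem: it states it and defers entirely to Halmos (Theorem 1 of \cite{halmos1969two}) and to Theorem 2.4 of \cite{rieffel1977bounded}, using only the conclusion together with the fact that $T$ may be taken self-adjoint. What you have written is essentially a self-contained rendition of Halmos' original proof, and the steps all check out: the identification $\langle Wx,x\rangle=\|P_2x\|^2$ correctly gives $\ker W=K_1\cap K_2^{\perp}$ and $\ker(1-W)=K_1\cap K_2$; the partial isometry $V_0$ has initial space $\overline{\mathrm{ran}(CS)}=K_1$ and final space $\overline{\mathrm{ran}(P_1^{\perp}P_2P_1)}$, which equals $K_1^{\perp}$ because $W'(1-W')$ is injective on $K_1^{\perp}$ by the remaining two corner conditions (and a partial isometry has closed range, so no closure issue arises); the $(2,2)$-entry computation $XCS=CS^3$ forces $X=S^2$ by density of $\mathrm{ran}(CS)$; and in the half-angle step $\cos(\Theta/2)$ is bounded below by $1/\sqrt{2}$, so the slicewise identification of the rotated subspaces with honest graphs of $\pm\tan(\Theta/2)$ needs no further closure argument. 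Your final $T=\bigl(\tfrac{1-C}{1+C}\bigr)^{1/2}$ is positive (hence self-adjoint, which is the extra fact the paper invokes when computing $\Gamma(T)^{\perp}$), and $\ker T=\ker(1-C)=0$, $\ker(1-T)=\ker C=0$ exactly as required. The only caveat for the paper's setting is that $K$ there is a real Hilbert space, so the functional calculus for $W$, $\arccos C$, etc.\ should be understood via the spectral theorem for self-adjoint operators on a real space (e.g.\ by complexifying and restricting), but this is routine and you flag the real case explicitly. In short: the paper outsources the proof, and your proposal supplies a correct one along the classical Halmos route.
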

For the proof of this theorem see \cite{halmos1969two} (in particular Theorem 1 therein) or \cite{rieffel1977bounded} theorem 2.4, where it can be seen that it is possible to identify $K$ with $\mathcal{K}_{\ast}\oplus_{\mathbb{R}}\mathcal{K}_{\ast}$ and that $T$ can be chosen to be self-adjoint. A contraction is a linear operator with norm less or equal to 1. 

Let us apply Halmos' theorem to the case at hand. Calling $U$ to the orthogonal map sending $K_{1}$ to $ \Gamma(T)$ and $K_2$ to $\Gamma(-T)$, then given $f\in K_{1}$ and $g\in K_2$, there exist  $h,h'\in\mathcal{K}_{_\ast}$ such that $$U(f)=\begin{pmatrix}
	h \\ Th
\end{pmatrix},\qquad U(g)=\begin{pmatrix}
	h' \\ -Th'
\end{pmatrix}.$$ 
The converse also holds (since $U$ is invertible), namely for any element of the form as in the RHSs above, there are unique $f\in K_1$ and $g \in K_2$ such that the these equations hold. Even more, we can see that $K_1^\perp$ and $K_2^\perp$ are also identified with elements of $\mathcal{K}_{_*}\oplus_{\mathbb{R}}\mathcal{K}_{_*}$. Let us take $x,y\in \mathcal{K}_{\ast}$, so we have
\begin{align*}
    \begin{pmatrix}
    x \\
    y
    \end{pmatrix} \in \Gamma(T)^{\perp} &\Leftrightarrow
    \left( \begin{pmatrix}
    x \\
    y
    \end{pmatrix}, \begin{pmatrix}
    h \\
    T(h)
    \end{pmatrix}\right)_{\mathcal{K}_{_*}\oplus_{\mathbb{R}}\mathcal{K}_{_*}}=0,\quad \forall h\in \mathcal{K}_{\ast} \\
     &\Leftrightarrow \left(x,h\right)_{\mathcal{K}_{\ast}}=(-y,T(h))_{\mathcal{K}_{\ast}}, \quad\forall  h\in \mathcal{K}_{\ast} \\
     &\Leftrightarrow \begin{pmatrix}
         -y \\ x
     \end{pmatrix}\in \Gamma(T^{\ast})=\Gamma(T),
\end{align*}
where in the last line we use the fact that $T$ can be chosen self-adjoint, and this implies that $x=-T(y)$. Then $\begin{pmatrix}
    x \\
    y
    \end{pmatrix}\in \Gamma(T)^{\perp}$  if and only if $ \begin{pmatrix}
    x \\
    y
    \end{pmatrix}= \begin{pmatrix}
    -T(y) \\
    y
    \end{pmatrix}$
    so we have the following identifications
\begin{equation}
    (U K_1)^\perp =\left\{ \begin{pmatrix}
	-T y \\ y
\end{pmatrix}, y\in \mathcal{K}_{\ast}\right\},\qquad (U K_2)^\perp=\left\{\begin{pmatrix}
	T y \\ y
\end{pmatrix}, y\in \mathcal{K}_{\ast} \right\}
\label{Uorthogonal}
\end{equation}

Let us call $\hat{S}$ the restriction of $S$ to $D\cap \mathcal{F}_{r}^{(1)}$. Considering $f\in K_{1}$ and $g\in K_{2}$, using \eqref{Soneparticle} we can write
\begin{align*}
	\hat{S}(f)=f,\quad \hat{S}(g)=-g.
\end{align*}
These expressions can be used in order to define $\tilde{S}$, which is induced on $\Gamma(T)$ by $S$ as 
$\tilde{S}=U\hat{S}U^{-1}$.
Therefore, for any $h\in \mathcal{K}_{\ast}$,
\begin{align*}
	\tilde{S}\begin{pmatrix}
		h \\ Th
	\end{pmatrix}&=\tilde{S}U(f)=U\hat{S}(f)=U(f)=\begin{pmatrix}
	h \\ Th
\end{pmatrix}  \\ 
	\tilde{S}\begin{pmatrix}
	h \\ -Th
\end{pmatrix}&=\tilde{S}U(g)=U\hat{S}(g)=U(-g)=-\begin{pmatrix}
	h \\ -Th
\end{pmatrix}, 
\end{align*}
from which we deduce by linear combinations,
\begin{equation*}
	\tilde{S}\begin{pmatrix}
		h \\ 0
	\end{pmatrix}=\begin{pmatrix}
	0 \\ Th
\end{pmatrix}, \hspace*{1cm} 	\tilde{S}\begin{pmatrix}
0 \\ Th
\end{pmatrix}=\begin{pmatrix}
h \\ 0
\end{pmatrix}.
\end{equation*} 
This permits to identify
\begin{equation*}
	\tilde{S}=\begin{pmatrix}
		0 & T^{-1}\\ T & 0
	\end{pmatrix} \hspace*{0.5cm}\text{and then }\quad 	\tilde{S}^{\ast}=\begin{pmatrix}
		0 & T \\ T^{-1} & 0
	\end{pmatrix}.
	\label{SandSstar}
\end{equation*} 
Now we consider the polar decomposition of $S$, $S=J(S^{\ast}S)^{\frac{1}{2}}$, and restrict it to  $K_1+K_2$. Namely $\tilde{S}=\tilde{J}(\tilde{S}^{\ast}\tilde{S})^{\frac{1}{2}}$, where $\tilde{J}=U\hat{J}U^{-1}$. Explicitly,  by using \eqref{SandSstar}
\begin{align*}
   \begin{pmatrix}
   	0 & T^{-1}\\ T & 0
   \end{pmatrix}=\tilde{J}\left[\begin{pmatrix}
   0 & T \\ T^{-1} & 0
\end{pmatrix} \begin{pmatrix}
   0 & T^{-1}\\ T & 0
\end{pmatrix} \right]^{\frac{1}{2}} \Longrightarrow \tilde{J}=\begin{pmatrix}
	0 & 1\\ 1 & 0
\end{pmatrix}
\end{align*}
which coincides with the expression in  \cite{eckmann1973application} (there is no distinction between $\tilde{J}$ and $\hat{J}$ in that reference).  Then, 
\begin{equation}\label{jota}
	\tilde{J}\begin{pmatrix}
		h \\ Th
	\end{pmatrix}=\begin{pmatrix}
		Th \\ h
	\end{pmatrix} \hspace{0.5cm}\text{ y }\hspace{0.5cm} \tilde{J}\begin{pmatrix}
		h \\ -Th
	\end{pmatrix}=\begin{pmatrix}
		-Th \\ h
	\end{pmatrix}
\end{equation}
From here and \eqref{Uorthogonal}  we recognize that $\hat{J}$ acts on $K_1$ by sending it to $K_2^\perp$:
\begin{equation*}
	\hat{J}(f)=U^{-1}\tilde{J}U(f)=U^{-1}\tilde{J}\begin{pmatrix}
		h \\ Th
	\end{pmatrix}=U^{-1}\begin{pmatrix}
	Th \\ h 
\end{pmatrix}\in K_2^\perp. 
\end{equation*}
In analogous way we can show that $\hat{J}$ sends $K_2$ to $K_1^\perp$. In summary, we have just proved the following 
\begin{lemma}
    If $K_{1},K_{2}\subseteq K$ are in generic position then 
\begin{equation}\label{JKs}
    \hat{J}K_1 = K_2^\perp,\qquad \hat{J}K_2 = K_1^\perp
\end{equation}
\end{lemma}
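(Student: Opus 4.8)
The plan is to read the result off the matrix picture built via Halmos' two‑projections theorem. Since $K_1$ and $K_2$ are in generic position — which is the case for the subspaces $\mathsf{S}_R(B)$, $\mathsf{S}_I(B)$ coming from the first quantization map, as shown in Appendix \ref{ap4} — Theorem \ref{joyita} furnishes an orthogonal identification $U:K\to\mathcal{K}_{\ast}\oplus_{\mathbb{R}}\mathcal{K}_{\ast}$ carrying $K_1$ onto $\Gamma(T)$ and $K_2$ onto $\Gamma(-T)$, with $T$ a positive self-adjoint contraction such that $\ker T=\ker(1-T)=0$, together with the identifications \eqref{Uorthogonal} of $(UK_1)^\perp$ and $(UK_2)^\perp$.

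First I would record the action of the Tomita operator on the one-particle space. From the explicit formula \eqref{Soneparticle} one gets $\hat S f=f$ for $f\in K_1$ and $\hat S g=-g$ for $g\in K_2$, where $\hat S=S|_{D\cap\mathcal{F}_r^{(1)}}$; transporting through $U$, the induced operator $\tilde S=U\hat S U^{-1}$ fixes every vector of $\Gamma(T)$ and negates every vector of $\Gamma(-T)$. Since $\ker T=\ker(1-T)=0$ makes $\Gamma(T)+\Gamma(-T)$ dense and $\tilde S$ is closed, taking linear combinations of $\left(\begin{smallmatrix}h\\Th\end{smallmatrix}\right)$ and $\left(\begin{smallmatrix}h\\-Th\end{smallmatrix}\right)$ yields $\tilde S\left(\begin{smallmatrix}h\\0\end{smallmatrix}\right)=\left(\begin{smallmatrix}0\\Th\end{smallmatrix}\right)$ and $\tilde S\left(\begin{smallmatrix}0\\Th\end{smallmatrix}\right)=\left(\begin{smallmatrix}h\\0\end{smallmatrix}\right)$, i.e. $\tilde S=\left(\begin{smallmatrix}0&T^{-1}\\T&0\end{smallmatrix}\right)$ and hence $\tilde S^{\ast}=\left(\begin{smallmatrix}0&T\\T^{-1}&0\end{smallmatrix}\right)$.

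Then I would extract the polar part: $\tilde S^{\ast}\tilde S=\left(\begin{smallmatrix}T^2&0\\0&T^{-2}\end{smallmatrix}\right)$, so $(\tilde S^{\ast}\tilde S)^{1/2}=\left(\begin{smallmatrix}T&0\\0&T^{-1}\end{smallmatrix}\right)$ and therefore $\tilde J=\tilde S(\tilde S^{\ast}\tilde S)^{-1/2}=\left(\begin{smallmatrix}0&1\\1&0\end{smallmatrix}\right)$, which by Lemma \ref{commutations} (commutation of $J,S,S^\ast$ with $E^{(1)}$ and invariance of $\mathcal{F}_r^{(1)}$) is exactly $U\hat J U^{-1}$. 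Comparing $\tilde J\left(\begin{smallmatrix}h\\Th\end{smallmatrix}\right)=\left(\begin{smallmatrix}Th\\h\end{smallmatrix}\right)$ and $\tilde J\left(\begin{smallmatrix}h\\-Th\end{smallmatrix}\right)=\left(\begin{smallmatrix}-Th\\h\end{smallmatrix}\right)$ with the descriptions $(UK_1)^\perp=\{\left(\begin{smallmatrix}-Ty\\y\end{smallmatrix}\right)\}$ and $(UK_2)^\perp=\{\left(\begin{smallmatrix}Ty\\y\end{smallmatrix}\right)\}$ in \eqref{Uorthogonal}, one sees that $\tilde J$ maps $\Gamma(T)$ \emph{onto} $(UK_2)^\perp$ and $\Gamma(-T)$ \emph{onto} $(UK_1)^\perp$; undoing $U$ gives $\hat J K_1=K_2^\perp$ and $\hat J K_2=K_1^\perp$.

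The main obstacle I anticipate is not the algebra but the functional-analytic bookkeeping: one must be sure that $\hat J$ really is the polar part of $\hat S$, i.e. that restriction to $\mathcal{F}_r^{(1)}$ commutes with polar decomposition — which is precisely the content of Lemma \ref{commutations} — and that the matrix form of $\tilde S$ derived on the dense subspace $K_1+K_2$ propagates to its closure; both of these have already been secured above. As an independent check of the \emph{equalities} rather than mere inclusions, one can also note that $\hat J$ is an orthogonal involution, so $\hat J(V^\perp)=(\hat J V)^\perp$ and $\hat J^2=1$ promote $\hat J K_1\subseteq K_2^\perp$ to an identity.
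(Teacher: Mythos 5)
Your proof is correct and takes essentially the same route as the paper: Halmos' two-projections theorem, the action of $\hat S$ on $K_1$ and $K_2$ read off from \eqref{Soneparticle}, the matrix form of $\tilde S$ and its polar part $\tilde J=\left(\begin{smallmatrix}0&1\\1&0\end{smallmatrix}\right)$, and comparison with \eqref{Uorthogonal}; the equality (not merely the inclusion) does follow as you say, since $\tilde J\Gamma(T)=\{\left(\begin{smallmatrix}Th\\h\end{smallmatrix}\right):h\in\mathcal{K}_{\ast}\}$ is literally the set $(UK_2)^\perp$. One small caveat: your closing ``independent check'' is circular---applying $\hat J(V^\perp)=(\hat JV)^\perp$ and $\hat J^2=1$ to the inclusions only reproduces them---but it is harmless because the surjectivity is already secured by the matrix picture.
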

We have succeeded in accomplishing the first step of the anticipated strategy. We had departed from the technical computations of \cite{eckmann1973application} in order to avoid introducing extra unbounded operators and managed to arrive to the same conclusion. From now on we follow closely that reference, clarifying a few small but technically relevant steps.

We want to prove the following intermediate lemma that claims a stronger fact than  \eqref{JKs}, since it roughly says that it holds in general for the fields $\varphi(f)$ and $\pi(g)$, not only for $J$ acting on the one-particle vectors.

\begin{lemma}\label{ultimo}
	Let  $f\in K_{1}$ and $g\in K_{2}$. Then
	\begin{equation*}
	J\varphi(f)J=\varphi(\hat{J}f) \hspace{1cm} 	J\pi(g)J=\pi(\hat{J}g),
	\end{equation*} 
with $\hat{J}f\in K_{2}^{\perp}$ and $\hat{J}g\in K_{1}^{\perp}.$ 
\end{lemma}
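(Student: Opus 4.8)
The plan is to push everything down to the one-particle space, where $J$ is already completely under control through \eqref{jota} and the identifications \eqref{JKs}. The one input still missing is that $J$ is the second quantization of its one-particle restriction, i.e. that it acts multiplicatively on the symmetric Fock space,
\[
 J(\xi_{1}\odot\cdots\odot\xi_{n})=\hat J\xi_{1}\odot\cdots\odot\hat J\xi_{n},\qquad \xi_{i}\in L,
\]
where $\hat J=J|_{\mathcal F^{(1)}}$ is the anti-linear extension to $L=K+iK$ of $\hat J=J|_{K}$. All the ingredients for this are already in place: by Lemma \ref{commutations} the conjugation $J$ commutes with every particle-number projection $E^{(k)}$ and preserves each real layer $\mathcal F_{r}^{(k)}$; by construction $J\Omega=\Omega$; and $J$ is anti-unitary with $J^{2}=1$. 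From these facts the multiplicative structure follows by the standard argument (this is the content of the chain of lemmas preceding Lemma 6 of \cite{eckmann1973application}, and it is Araki's statement that the modular data of a second-quantized standard subspace are the second quantizations of the one-particle modular data).

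Granting this, I would read off, on the dense finite-particle domain, $Ja^{\ast}(h)J=a^{\ast}(\hat Jh)$ and $Ja(h)J=a(\hat Jh)$ for $h\in L$, the signs being fixed by anti-linearity ($Ja^{\ast}(ih)J=-iJa^{\ast}(h)J$, matched by $a^{\ast}(\hat J(ih))=-ia^{\ast}(\hat Jh)$ since $\hat J$ is anti-linear). Substituting into $\varphi(f)=\tfrac{1}{\sqrt2}\overline{(a^{\ast}(f)+a(f))}$ with $f\in K_{1}$ gives $J\varphi(f)J=\tfrac{1}{\sqrt2}(a^{\ast}(\hat Jf)+a(\hat Jf))=\varphi(\hat Jf)$, and \eqref{JKs} guarantees $\hat Jf\in K_{2}^{\perp}\subseteq K$, so that $\varphi(\hat Jf)$ is a bona fide field operator; substituting into $\pi(g)=\tfrac{i}{\sqrt2}\overline{(a^{\ast}(g)-a(g))}$ with $g\in K_{2}$ gives $J\pi(g)J=\pi(\hat Jg)$ with $\hat Jg\in K_{1}^{\perp}$ (here the two sides agree up to replacing $\hat Jg$ by $-\hat Jg$, which is harmless since $K_{1}^{\perp}$ is a linear subspace, and in any case the only use made of the lemma afterwards — that $JA_{n}J\in R_{W}(K_{2}^{\perp},K_{1}^{\perp})$ — is insensitive to it). The remaining loose end is that these are genuine operator identities and not merely equalities on a common dense set; this is immediate because $J$ is bounded and carries the finite-particle space onto itself, so it maps a core of $\varphi(f)$ onto a core of the self-adjoint operator $\varphi(\hat Jf)$, and similarly for $\pi$.

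The real obstacle is the first step: \eqref{jota} only describes $\hat J$ on the one-particle sector $K_{1}+K_{2}$, and this does not by itself fix $J$ on the higher Fock layers, so one genuinely has to argue that $J$ is second-quantized. I expect this to be the only non-mechanical part; it can be carried out either by quoting the general theory of modular objects attached to second-quantized standard subspaces, or — staying closer to the self-contained spirit of the note — by bootstrapping from Lemma \ref{commutations} together with $S\Omega=\Omega$ and the explicit action \eqref{Soneparticle} of $S$ on one-particle states, following the sequence of lemmas leading to Lemma 6 of \cite{eckmann1973application}. Once that structural fact is established, the anti-linear sign bookkeeping and the core-matching are routine.
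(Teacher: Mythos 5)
There is a genuine gap, and it sits exactly where you locate it yourself: the claim that $J$ acts multiplicatively on the symmetric Fock space, $J(\xi_{1}\odot\cdots\odot\xi_{n})=\hat J\xi_{1}\odot\cdots\odot\hat J\xi_{n}$. Your assertion that ``all the ingredients for this are already in place'' is not correct: commutation with every $E^{(k)}$, preservation of each $\mathcal F_{r}^{(k)}$, $J\Omega=\Omega$, and $J$ being an anti-unitary involution constrain $J$ layer by layer but say nothing about how $J$ on $\mathcal F^{(n)}$ for $n\ge 2$ is related to $\hat J$ on $\mathcal F^{(1)}$; any anti-unitary involution of $\mathcal F^{(n)}$ preserving its real part would satisfy all of them. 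The statement $J=\Gamma(\hat J)$ --- that the modular conjugation of a second-quantized standard subspace is the second quantization of the one-particle conjugation --- is a substantive theorem whose proof is precisely the chain of auxiliary-operator lemmas in Eckmann--Osterwalder that this note is written to avoid; quoting it (or deferring to ``the sequence of lemmas leading to Lemma 6'') leaves the key step unproven in a blind proof. Everything downstream in your writeup (the identities $Ja^{\#}(h)J=a^{\#}(\hat Jh)$, the substitution into $\varphi$ and $\pi$, the sign bookkeeping for $\pi$ --- where you correctly spot a harmless overall sign --- and the core-matching via $J$ preserving the finite-particle space) is fine conditional on multiplicativity, but that conditional is essentially the whole content of the lemma.

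The paper's proof is built so as never to need $J=\Gamma(\hat J)$. It uses only $J\Omega=\Omega$ and the one-particle formulas \eqref{JKs} and \eqref{phi}, and then proceeds in three steps: (i) $\varphi(f)$ is affiliated with $R_{W}(K_{1},K_{2})$, obtained by differentiating $BU(t)=U(t)B$ at $t=0$ for $B$ in the commutant; (ii) taking $B=JAJ\in R_{W}(K_{1},K_{2})'$ (Tomita--Takesaki) yields $J\varphi(f)JA\Omega=\varphi(\hat Jf)A\Omega$, i.e.\ equality on the dense set $R_{W}(K_{1},K_{2})\Omega$; (iii) that set is shown to consist of analytic vectors for $\varphi(\hat Jf)$, hence is a core by Nelson's theorem, and uniqueness of self-adjoint extensions upgrades \eqref{restrictedeq} to the full operator identity. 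If you want to keep your route you must actually prove the multiplicativity of $J$ on all Fock layers; otherwise the affiliation-plus-core argument is the way to close the gap without it.
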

Before proving this lemma let us state and prove the following important corollary which by the discussion after \eqref{almosthaagduality} is sufficient to claim that Haag duality holds.
\begin{corollary}
With the conditions of the above lemma 
\begin{align*}
    Je^{i\varphi(f)}J&=e^{i\varphi(\hat{J}f)} \\
    Je^{i\pi(g)}J&=e^{i\pi(\hat{J}g)}
\end{align*}
\end{corollary}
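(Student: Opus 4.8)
\textit{Proof proposal.} The plan is to reduce the corollary to Lemma \ref{ultimo} by means of Stone's theorem on one-parameter unitary groups. Since $\varphi(f)$ and $\pi(g)$ are self-adjoint (Section \ref{weylrepresentation}), the families $t\mapsto e^{it\varphi(f)}$ and $t\mapsto e^{it\pi(g)}$ are strongly continuous one-parameter unitary groups on $\mathfrak{H}_{T}(L)$, and $e^{i\varphi(f)}$, $e^{i\pi(g)}$ are their values at $t=1$. Because $J$ is an anti-unitary involution ($J^{2}=1$, as established after \eqref{tomitaop}), conjugation $A\mapsto JAJ$ sends unitaries to unitaries and is strongly continuous in $t$; hence $t\mapsto Je^{it\varphi(f)}J$ is again a strongly continuous one-parameter unitary group, and likewise for $\pi(g)$. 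By the uniqueness part of Stone's theorem it suffices to identify its self-adjoint infinitesimal generator.

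First I would check that $J$ maps the (common) domain of the fields into itself, so that differentiation at $t=0$ is legitimate: this follows from $J$ commuting with the number projections $E^{(k)}$ and preserving $D\cap\mathcal{F}^{(k)}_{r}$, as in Lemma \ref{commutations}. Granting this, for $\xi$ in the domain one computes $\tfrac{d}{dt}\big|_{t=0}\,Je^{it\varphi(f)}J\xi = J\big(\tfrac{d}{dt}\big|_{0}e^{it\varphi(f)}\big)(J\xi) = J\big(i\,\varphi(f)\,J\xi\big)$, and here is the crucial point: the anti-linearity of $J$ means it commutes with the real parameter $t$ but conjugates the scalar $i$, so $J\big(i\,\varphi(f)J\xi\big) = -i\,J\varphi(f)J\,\xi = -i\,\varphi(\hat{J}f)\,\xi$ by Lemma \ref{ultimo}. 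Thus the generator of $t\mapsto Je^{it\varphi(f)}J$ is $\pm\varphi(\hat{J}f)$, which is once more a self-adjoint operator of the Weyl type because $\varphi$ is real-linear in its argument and $\hat{J}f\in K_{2}^{\perp}\subseteq K$ by \eqref{JKs}. Integrating and evaluating at $t=1$ yields $Je^{i\varphi(f)}J=e^{i\varphi(\hat{J}f)}$ (reabsorbing the sign into $\hat{J}f$ using real-linearity of $\varphi$ if needed). The same argument with $\pi$ in place of $\varphi$, keeping track of the explicit factor $i$ in the definition of $\pi(g)$ in \eqref{operadores} and of $\hat{J}g\in K_{1}^{\perp}$, gives $Je^{i\pi(g)}J=e^{i\pi(\hat{J}g)}$.

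The main obstacle is bookkeeping rather than substance: the anti-linearity of $J$ must be carried carefully through every step — it is precisely what makes the conjugate of $e^{i\varphi(f)}$ an honest Weyl operator $e^{i\varphi(\hat{J}f)}$ with a \emph{real-linear} image of $f$ (rather than an exponential of a field smeared against a complex argument), and it is also the source of the sign one must track against \eqref{Soneparticle} and \eqref{JKs}. One must also resist applying $J$ term-by-term to the exponential series of an \emph{unbounded} operator; the rigorous route is the one-parameter group / Stone's theorem argument above (equivalently, the anti-unitary functional calculus identity $J\,g(\varphi(f))\,J=\overline{g}\big(J\varphi(f)J\big)$ applied to $g(x)=e^{ix}$). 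Once the corollary holds, the discussion following \eqref{almosthaagduality} shows at once that $R_{W}(K_{1},K_{2})'\subseteq R_{W}(K_{2}^{\perp},K_{1}^{\perp})$, completing the proof of Haag duality in Weyl form.
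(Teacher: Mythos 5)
Your argument is correct, but it follows a different route from the paper's. The paper proves the corollary by evaluating both sides on the dense set $R_{W}(K_{1},K_{2})\Omega$: it uses Tomita--Takesaki to commute $Je^{i\varphi(f)}J$ past $A\in R_{W}(K_{1},K_{2})$, expands the exponential as a power series on the analytic vector $\Omega$, applies Lemma \ref{ultimo} term by term, resums to $e^{i\varphi(\hat{J}f)}\Omega$, and then uses locality to move $e^{i\varphi(\hat{J}f)}$ back past $A$, concluding by density and boundedness. You instead conjugate the one-parameter unitary group $t\mapsto e^{it\varphi(f)}$ by the anti-unitary involution $J$ and identify the generator via Stone's theorem (equivalently, via the identity $Jg(A)J=\overline{g}(JAJ)$ for the spectral calculus), which reduces the corollary to the operator equality $J\varphi(f)J=\varphi(\hat{J}f)$ of Lemma \ref{ultimo} with no further input: you need neither the Tomita--Takesaki commutation with $A$, nor the analyticity of $\Omega$, nor the final density argument. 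That is a genuine economy, and your parenthetical route through the functional calculus is the cleanest way to make the domain issues disappear entirely. One point worth making explicit: the sign you flag is real --- anti-linearity gives $Je^{it\varphi(f)}J=e^{-itJ\varphi(f)J}$, so at $t=1$ one obtains $e^{i\varphi(-\hat{J}f)}$ --- and the paper's own series manipulation silently writes $J(i\varphi(f))J$ as $iJ\varphi(f)J$ rather than $-iJ\varphi(f)J$; as you observe, the discrepancy is harmless for Haag duality because $K_{2}^{\perp}$ and $K_{1}^{\perp}$ are real subspaces, hence closed under $f\mapsto -f$, so the conjugated exponentials still generate $R_{W}(K_{2}^{\perp},K_{1}^{\perp})$.
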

\begin{proof}
Let us assume for a moment the validity of lemma \ref{ultimo} to prove the corollary. Let $A\in R_{W}(K_{1},K_{2})$, then by  Tomita-Takesaki Theorem, for  all $f\in K_{1}$ we have, 
\begin{equation*}
	Je^{i\varphi(f)}JA\Omega=AJe^{i\varphi(f)}J\Omega,
\end{equation*}
and then,
\begin{align*}
	Je^{i\varphi(f)}JA\Omega&=AJ\sum_{n=0}^{\infty}\frac{\left(i\varphi(f)\right)^{n}}{n!}J\Omega=A\sum_{n=0}^{\infty}\frac{\left(iJ\varphi(f)J\right)^{n}}{n!}\Omega \\
	&=A\sum_{n=0}^{\infty}\frac{\left(i\varphi(\hat{J}f)\right)^{n}}{n!}\Omega=Ae^{i\varphi(\hat{J}f)}\Omega=e^{i\varphi(\hat{J}f)}A\Omega,
\end{align*}
where in the last equality we used that $e^{i\varphi(\hat{J}f)}\in R_{W}(K_{2}^{\perp},K_{1}^{\perp})\subseteq R_{W}(K_{1},K_{2})'$ by locality. Thus $Je^{i\varphi(f)}J=e^{i\varphi(\hat{J}f)}$ on a dense set, and by continuity in all $\mathfrak{H}_{T}(L)$. Similarly the analogous claim is shown for  $\pi(g)$.
\end{proof}
As we have already mentioned above the Haag duality follows from this corollary. Let us now prove the lemma \ref{ultimo}.
\begin{proof}
From the fact that $S\Omega=\Omega$ and $J^2=1$  we have $\Delta^{1/2}\Omega=J \Omega$. As $J$ commutes with $E^{(0)}$ by Lemma \ref{commutations} and noticing that $\Delta$ is positive,  it follows\footnote{Another way to see this is the following: it is a fact that $S^*$ is the Tomita operator of $R_{W}(K_1,K_2)'$ (see \cite{BR1}), which is well defined since $\Omega$ is a cyclic and separating vector for $R_{W}(K_1, K_2)$ and this implies it is also cyclic and separating for its commutant $R_{W}(K_1,K_2)'$. Then, $S^*\Omega=\Omega$ and $J\Omega=S(S^* S)^{-1/2}\Omega=\Omega$.} that $J\Omega=\Omega$. Then, if $f\in K_{1}$ and $g\in K_{2}$
\begin{align}\label{phi}
	J\varphi(f)J\Omega&=\frac{1}{\sqrt{2}}\hat{J}f=\varphi(\hat{J}f)\Omega \\
	J\pi(g)J\Omega&=-i\frac{1}{\sqrt{2}}\hat{J}g=\pi(\hat{J}g)\Omega	
\end{align}
where  $\hat{J}f\in K_{2}^{\perp}$ and $\hat{J}g\in K_{1}^{\perp}$ as we already observed in \eqref{JKs}.

Let us show now that actually  $J\varphi(f)J=\varphi(\hat{J}f)$.  Let $f \in K_1$ and $U(t)=e^{it\varphi(f)}$ be a one-parameter group in $R_{W}(K_1,K_2)$, then for any $B\in R_{W}(K_{1},K_{2})'$ it holds  $BU(t)=U(t)B$ and for all $\Psi\in \text{Dom}(\varphi(f))$ we have $$\varphi(f)\Psi=\lim_{t\to 0}\dfrac{1}{i}\dfrac{U(t)\Psi-\Psi}{t}$$ 
in the norm topology of $\mathfrak{H}_{T}(L)$ \cite{hall2013quantum}. Then we can consider $BU(t)\Psi=U(t)B\Psi$ and take the derivative at $t=0$ and get  $B\varphi(f)\Psi=\varphi(f)B\Psi$ for all $\Psi\in \text{Dom}(\varphi(f))$, i.e. $B\varphi(f)\subseteq\varphi(f)B$. Let us be more clear about these last steps:
\begin{align*}
    B \varphi(f) \Psi &=B \lim_{t\to 0}\frac{1}{i}\dfrac{U(t)\Psi-\Psi}{t}\\
    &=\lim_{t\to 0}\frac{1}{i}\dfrac{BU(t)\Psi-B\Psi}{t}\\
    &=\lim_{t\to 0}\frac{1}{i}\dfrac{U(t)B\Psi-B\Psi}{t}\\
    &=\lim_{t\to 0}\frac{1}{i}\dfrac{U(t)-1}{t}B\Psi\\
    &=\varphi(f)B \Psi
\end{align*}
where in the second equality we used that $B$ is continuous, in the third line we used that $B\in R_{W}(K_1,K_2)'$, and in the fifth line we used that since the limit in the fourth line exists (the LHS) then $B\Psi$ must be in the domain of $\varphi(f)$ and more over the limit in the fourth line equals $\varphi(f)B\Psi$ (Proposition 10.14 of \cite{hall2013quantum}). We have shown that $B\text{Dom}(\varphi(f))\subseteq \text{Dom}(\varphi(f))$ and $\varphi(f)B =  B\varphi(f)$ on Dom$(\varphi(f))$. When this happens for any $B\in R_{W}(K_1,K_2)'$ it is said the the unbounded operator $\varphi(f)$ is \textit{affiliated with} $R_{W}(K_1,K_2)$ \cite{BR1}.

We can take $B=JAJ$, which by Tomita-Takesaki Theorem belongs to $R_{W}(K_{1},K_{2})'$, and then we obtain
\begin{equation}\label{dos}
	JAJ\varphi(f)\subseteq \varphi(f)JAJ,
\end{equation}
from which 
\begin{align}
    J\varphi(f)JA\Omega&=J\varphi(f)JAJ^{2}\Omega\nonumber\\
    &=J\varphi(f)(JAJ)\Omega\nonumber\\
    &=J(JAJ)\varphi(f)\Omega\nonumber\\
    &=A J \varphi(f)J \Omega \nonumber\\
    &=A \varphi(\hat{J}f)\Omega
    \label{JphiJ}
\end{align} 
where we have used repeatedly $J^2=1$, $J\Omega=\Omega$, and in the last line we used \eqref{phi}. Note that $\hat{J}f$ belongs to $K_2^\perp$ by \eqref{JKs}, and that $A \in R_{W}(K_1,K_2) \subseteq R_{W}(K_2^\perp,K_1^\perp)'$ by locality. Repeating the same arguments as above we can say that $\varphi(\hat{J}f)$ is affiliated with $R_{W}(K_2^\perp,K_1^\perp)$, and in particular $A \varphi(\hat{J}f) \subset \varphi(\hat{J}f) A$. Then, \eqref{JphiJ} states that $J\varphi(f)JA\Omega =\varphi(\hat{J}f) A\Omega$ which is equivalent to 
\begin{equation}\label{restrictedeq}
    J\varphi(f)J|_{R_{W}(K_1,K_2)\Omega} =\varphi(\hat{J}f)|_{R_{W}(K_1,K_2)\Omega}
\end{equation}
We need to show now that one can remove the restrictions on both sides to reach complete equality and end  the proof. The idea to do so is to take closure to this equation, but this is tricky. In \cite{eckmann1973application} the authors do not give much detail and we believe this is a crucial step, so we now give the necessary details of how we approach it.

We start with the RHS. The aim is to show that $R_{W}(K_1,K_2)\Omega$ is a core for $\varphi(\hat{J}f)$. This is equivalent to show that $\varphi(\hat{J}f)|_{R_{W}(K_1,K_2)\Omega}$ is essentially self-adjoint, since its closure would then be a self-adjoint extension, which by uniqueness of self-adjoint extensions has to be $\varphi(\hat{J}f)$.  Because of this we want to show that $R_{W}(K_1,K_2)\Omega$ is a (total) set of analytic vectors\footnote{An analytic vector for some operator $A$ is a vector $\Psi$  such that $\sum_{n\geq 0} ||A^n \Psi|| \frac{t^n}{n!} <\infty$ for some $t>0$. We shall use that $\Omega$ is an analytic vector for $\varphi(f)$, and this is so because actually any finite-particle vector is analytic, as shown in Theorem X.41 of \cite{reedsimonvol2}.} which by Nelson's theorem (Theorem X.39 in \cite{reedsimonvol2}) implies that  $\varphi(\hat{J}f)|_{R_{W}(K_1,K_2)\Omega}$ is essentially self-adjoint. So let us consider $U(t \hat{J}f) R_{W}(K_1,K_2) \Omega$, with $t\in\mathbb{R}$, and see where we arrive,
\begin{align*}
    U(t\hat{J}f) R_{W}(K_1,K_2) \Omega&=R_{W}(K_1,K_2)U(t\hat{J}f)\Omega\\
    &=R_{W}(K_1,K_2)\sum_{n\geq 0} \frac{(i t \varphi(\hat{J}f))^n}{n!}\Omega\\
    &=\sum_{n\geq 0} \frac{(i t )^n}{n!}\varphi(\hat{J}f)R_{W}(K_1,K_2)\varphi(\hat{J}f)^{n-1}\Omega\\
    &=\sum_{n\geq 0} \frac{(i t \varphi(\hat{J}f))^n}{n!}R_{W}(K_1,K_2)\Omega\\
\end{align*}
where in the first line we used that $U(t\hat{J}f)$ belongs to $R_{W}(K_1,K_2)'$, in the second line that $\Omega$ is an analytic vector for $\varphi(\hat{J}f)$, in the third line that $\varphi(\hat{J}f)$ is affiliated with $R_{W}(K_2^\perp,K_1^\perp)$ and in the last line we just repeated the previous step. This result says first that it makes sense to apply $n$ times $\varphi(\hat{J}F)$ to any element of  $R_{W}(K_1,K_2)$ and second and most important that the series converges for any $t\in\mathbb{R}$. Namely, if $\Psi\in R_{W}(K_1,K_2)\Omega$, 
\begin{equation}
    \sum_{n\geq 0} \frac{(i \varphi(\hat{J}f))^n \Psi}{n!}t^n =U(t\hat{J}f)\Psi\in \mathcal{F}, \qquad t \in \mathbb{R}
\end{equation}
The convergence of this power series for any $t$ implies  that the radius of  convergence is $\infty$, and then it actually converges uniformly and absolutely\footnote{See Theorem 2 in Section 2.4 of Chapter 2 in \cite{Ahlfors1966}. We are just adapting the proof there to our case, by taking the Hilbert norm instead.}, which means that $\Psi$ is an analytic vector for $\varphi(\hat{J}f)$.  This completes the argument to show that the closure of $\varphi(\hat{J}f)|_{R_{W}(K_1,K_2)\Omega}$ in the RHS in \eqref{restrictedeq} is precisely $\varphi(\hat{J}f)$.

With the previous analysis completed, it is easier to show what the closure of the LHS of \eqref{restrictedeq} is. First note that since $\varphi(f)$ is self-adjoint and so is $J$, then $J\varphi(f)J$ is also self-adjoint and is obviously an extension of the LHS. But we now know that the closure of   $J\varphi(f)J|_{R_{W}(K_1,K_2)\Omega}$ is $\varphi(\hat{J}f)$ which is self-adjoint. Then, $\varphi(\hat{J}f)$ is also a self-adjoint extension of $\varphi(\hat{J}f)|_{R_{W}(K_1,K_2)\Omega}$, and by uniqueness of self-adjoint extensions, it must be that $J\varphi(f)J=\varphi(\hat{J}f)$. The same reasoning can be applied to $\pi(g)$.

\end{proof}

\subsection{A counterexample of the duality}\label{contra}

The counterexample we present for the Haag duality must be necessarily related to a region that  is not the causal enveolope $C(B)$ of a region $B$ in a Cauchy surface. It was originally constructed in \cite{araki1964neumann}.

First we establish some facts that will be useful in what follows. The first one is that the second quantization map is an injective map. To fix ideas consider it in the Segal form as in \eqref{segalsecmap}. If $H_{1}\subseteq H_{2}$ then it is clear that $R_{S}(H_{1})\subseteq R_{S}(H_{2})$ (isotony). Now we prove the reciprocal of isotony. It will be useful to write the CCR in Segal's form \eqref{segalrep} as
\begin{align}
	\left[W(\eta),W(\mu)\right]=W(\eta)W(\mu)\left(1-e^{i(\mu,\beta \eta)_H}\right).\label{CCRreloaded}
\end{align}     
In order to prove that $H_{1}\cancel{\subseteq} H_{2}$ implies $R_{S}(H_{1})\cancel{\subseteq} R_{S}(H_{2})$, suppose $H_{1}\cancel{\subseteq} H_{2}$. Then $H_{2}^{\perp}\cancel{\subseteq} H_{1}^{\perp}$ and there exists an element $h_{2}\in H_{2}^{\perp}$ with $h_{2}\notin H_{1}^{\perp}$. Then by taking an $h_{1}\in H_{1}$ such that $(h_{2},h_{1})_{H}\neq 0$ and using equation \eqref{CCRreloaded} with $\eta=\beta h_{2}$ and $\mu=\lambda h_1$ it is possible to find a $\lambda\in \mathbb{R}$ such that $\left[W(\beta h_{2}),W(\lambda h_{1})\right]\neq 0$. Hence $W(\beta h_{2})\notin R_{S}(H_{1})'$. 
On the other hand by applying again equation \eqref{CCRreloaded} taking $\eta=\beta h_{2}$ and $\mu\in H_{2}$ we obtain from $(\mu,h_{2})=0$ that $W(\beta h_{2})\in R_{S}(H_{2})'$. Now suppose that actually $R_S(H_1)\subseteq R_S(H_2)$, which implies $R_S(H_2)'\subseteq R_S(H_1)'$. But this is in contradiction with what we just showed: $W(\beta h_{2})\in R_{S}(H_{2})'$ and $W(\beta h_{2})\notin R_{S}(H_{1})'$.  
Therefore two subspaces $H_1,H_2$ of $H$ satisfy  $R_{S}(H_{1})=R_{S}(H_{2})$ if and only if $H_{1}=H_{2}$.

The following formula will be useful too. In order to derive it,  recall the inner product in $H$ \eqref{prodintespaciales} and the definition of $\beta$ \eqref{betadefH}.  Given $h_{1},h_{2}\in H$
\begin{align}
	(h_{1},\beta h_{2})_H&=-\Im{(h_{1},h_{2})_{L}}=-\Im{2i\int_{\mathbb{R}^{4}\times \mathbb{R}^{4}} h_{1}(x)\Delta^{(+)}(x-y)h_{2}(y)dx dy}\notag \\
	&=-\int_{\mathbb{R}^{4}\times \mathbb{R}^{4}}h_{1}(x)2\Re{\Delta^{(+)}(x-y)}h_{2}(y) dy dx \notag \\
	&=-\int_{\mathbb{R}^{4}\times \mathbb{R}^{4}}h_{1}(x)\Delta(x-y)h_{2}(y) dy dx=-\int_{\mathbb{R}^{4}}h_{1}(x)F_{h_{2}}(x)dx,
	\label{producto}
\end{align}
where we used the causal propagator $\Delta=2\Re{\Delta^{(+)}}$ and equation \eqref{efe}. 

As we mentioned, the counterexample we present for the Haag duality must be necessarily related to a region that  is not a causal diamond or a wedge. The proposed region is 
$$B=C(T_{1})\cup C(T_{2})$$ 
where we call $T_{1}$ to the open interval in the time axis $(t_{1},t_{2})$ where $t_{2}>t_{1}>0$, $T_{2}:=-T_{1}$, and $C(T_{i})=T_{i}''$ (the causal complement of the causal complement, namely the causal completion, as defined in \eqref{prima}). Then region $B$ is the union of two timelike-separated diamonds and its causal complement $B'$ is depicted in Figure \ref{be}.
\begin{figure}[ht]
	\begin{center}
		\includegraphics[scale=0.6]{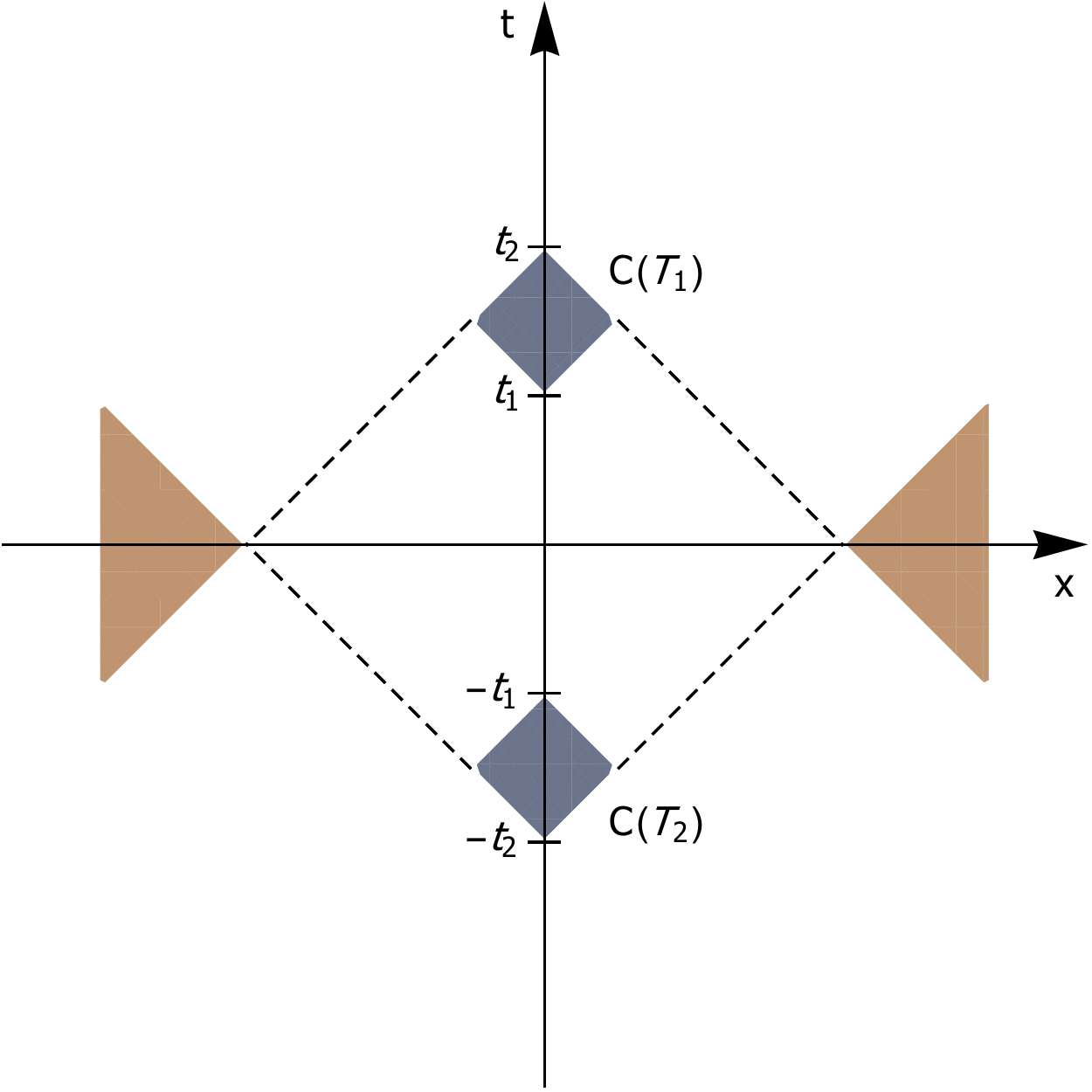}
		\caption{Diagram of  region $B=C(T_{1})\cup C(T_{2})$ and its causal complement $B'$ in brown.}
		\label{be}
	\end{center}
\end{figure}

We define $C(T)$ as before but with $T=(-t_{2},t_{2})$, again an interval on the time axis. It is easy to see that $B'=C(T)'$, see Figure \ref{commutants}. As $C(T)$ is a causal diamond we know Haag duality holds and $N_{S}(C(T))=N_{S}(C(T)')'=N_{S}(B')'$. So, in order to prove that Haag duality fails for region $B$, it is  enough to show $N_{S}(C(T))\neq N_{S}(B)$.  But from the previous observation about the injectivity property of the second qunatization map, it is equivalent to $\mathsf{S}_{S}(C(T))\neq \mathsf{S}_{S}(B)$. To show this we construct a function that does not belong to $\mathsf{S}_{S}(C(T))^{\perp}$ but belongs to $\mathsf{S}_{S}(B)^{\perp}=\mathsf{S}_{S}(C(T_{1})\cup C(T_{2}))^{\perp}$.

\begin{figure}[!h]
\centering
\begin{tabu}to \textwidth {X[c]X[c]}
  \includegraphics[width=0.45\textwidth]{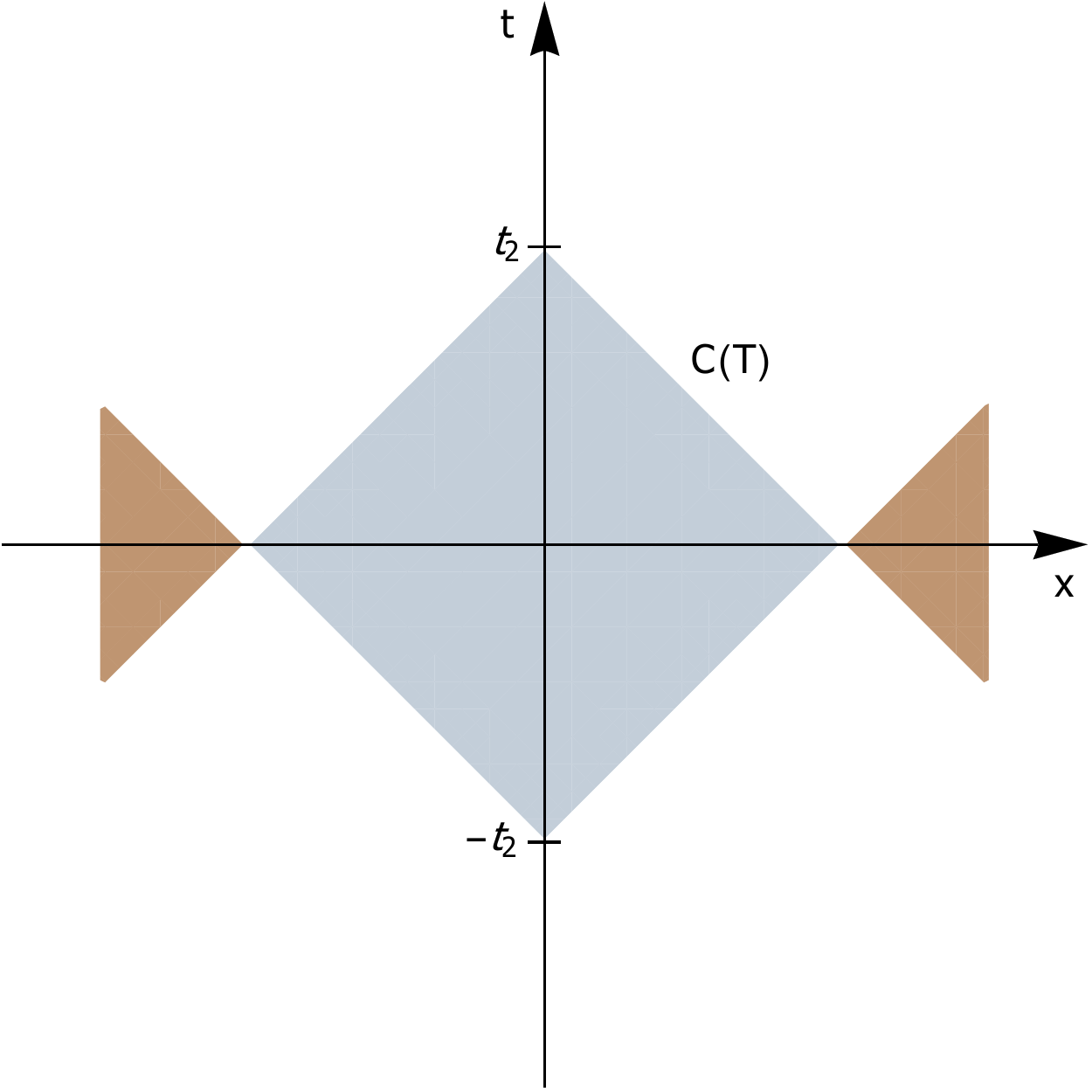} &\includegraphics[width=0.45\textwidth]{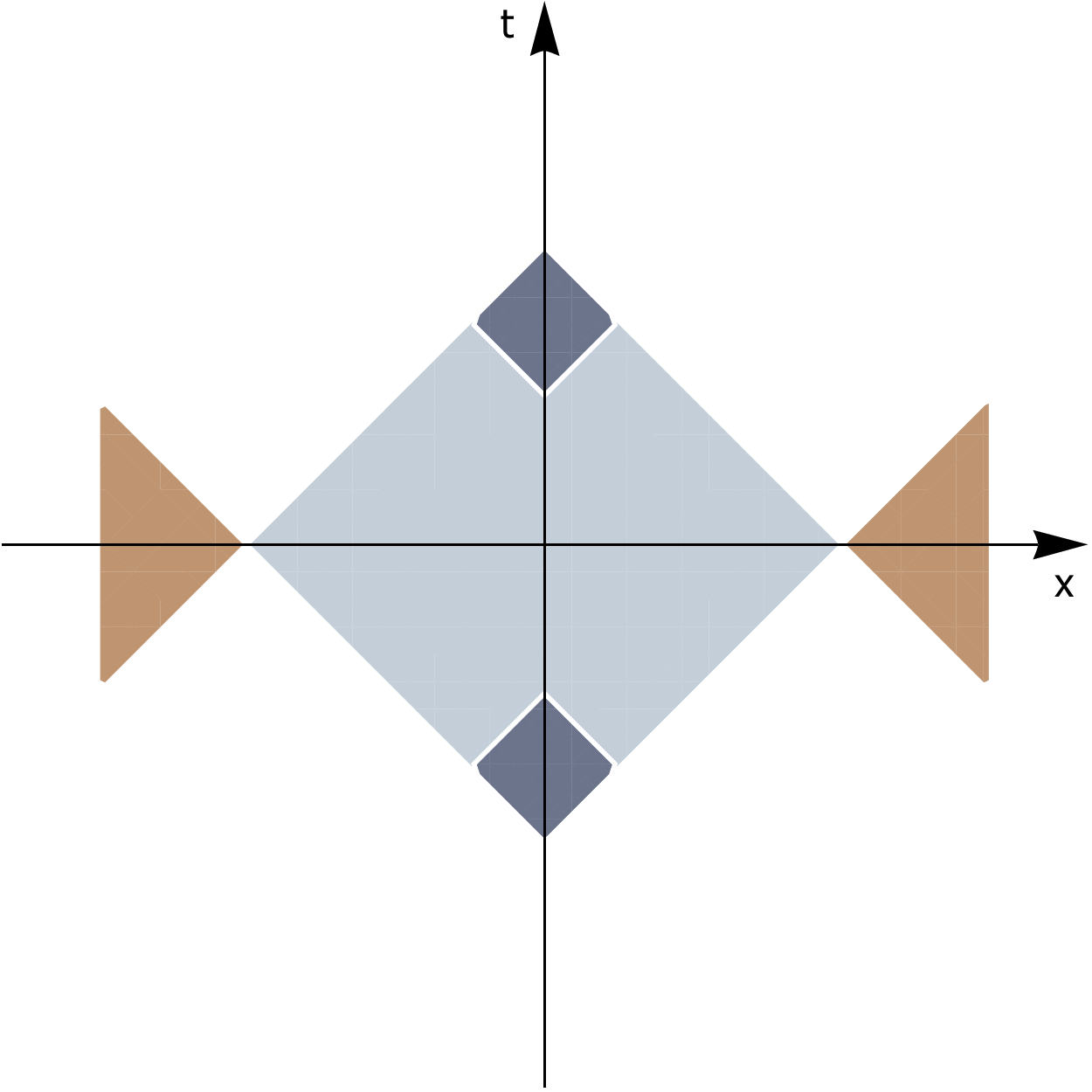}
  \end{tabu}
\caption{On the left, a diagram showing region $C(T)$, the causal diamond of $(-t_2,t_2)$, and its causal complement in brown. On the right, $C(T_1)$ and $C(T_2)$ are included, and the causal complement is the same for $B$ and $C(T)$.  }
\label{commutants}
\end{figure}

The first step is to look at the two-dimensional equation 
\begin{equation}\label{kgimaginary}
	\frac{\partial^2 \phi}{\partial x^2}-\frac{\partial^2 \phi}{\partial t^2}-m^{2}\phi=0
\end{equation}
which can be thought as a Klein-Gordon equation where the roles of $x$ and $t$ are inverted and the mass is purely imaginary. This type of equation was studied in detail by Robinett in \cite{robinett1978tachyons} where he showed that causality still holds. That is, for a given initial data $\phi(t,0)$ and $\frac{\partial \phi}{\partial t}(t,0)$ infinitely differentiable functions with compact supports, there exists a unique solution $\phi(t,x)$ also infinity differentiable  that in addition satisfies $\phi(t,x)=0$ for any  $(t,x)$ such that $x^{2}<(t-t')^2$ for all $t'\in \text{supp}(\phi(t,0))\cup \text{supp}(\frac{\partial \phi}{\partial t}(t,0))$. Then taking differentiable initial conditions whose supports are  contained in the interval $(-t_{1},t_{1})$ of the time axis, we get a differentiable solution $\phi_{0}(t,x)$ whose support does not intersect $C(T_{1})$ nor $C(T_{2})$, as Figure  \ref{sup} schematically shows
\begin{figure}[!h]
	\begin{center}
	\includegraphics[scale=0.6]{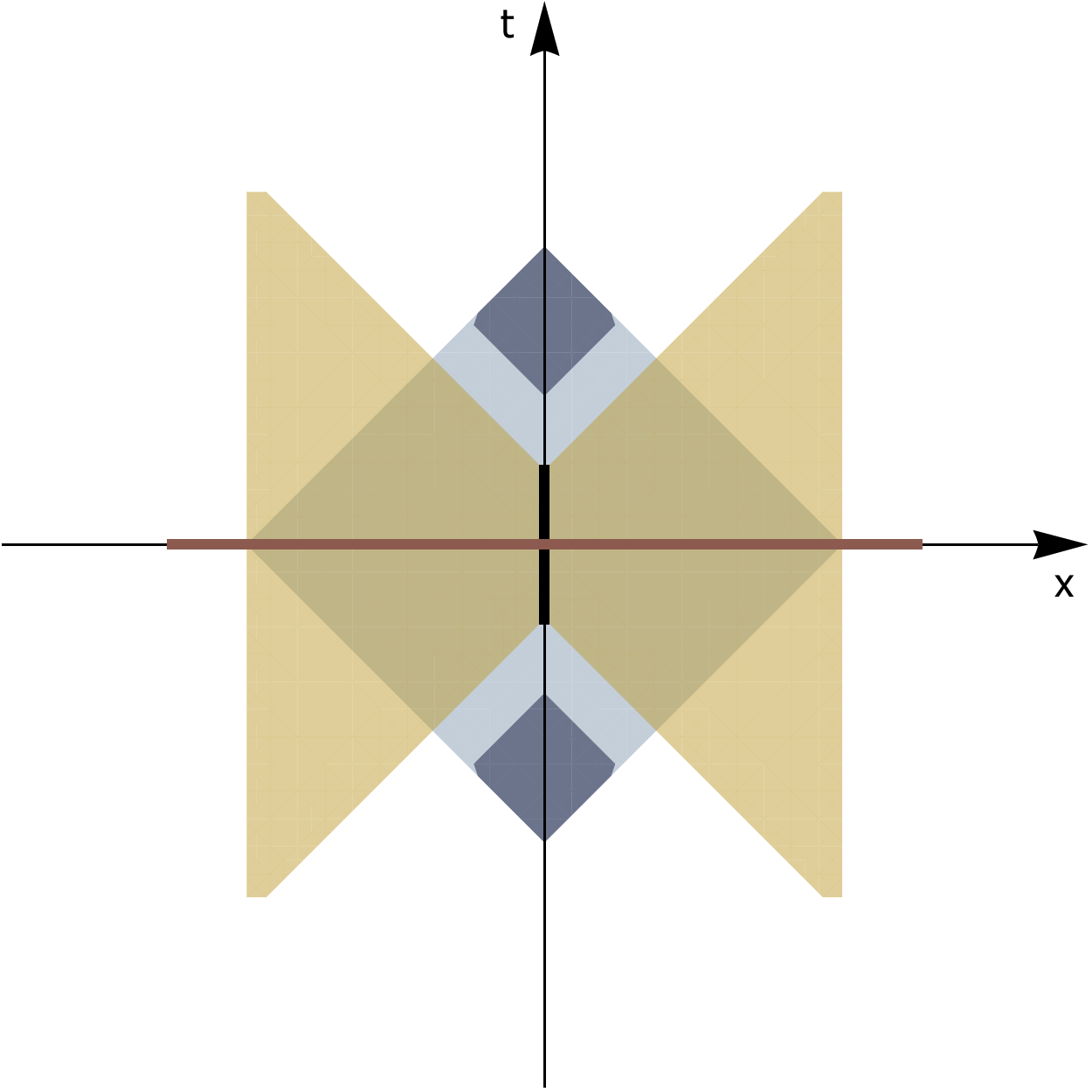}
		\caption{The initial conditions of $\phi_{0}$ are supported in the thick black line in the time axis. Its support is depicted as the beige region and does not intersect $B$. The horizontal thick clay-colored 
		line represents the compact support of the initial conditions of $G$.}
		\label{sup}
	\end{center}
\end{figure}

Now we define initial conditions to the usual Klein-Gordon problem in four dimensions 
\begin{equation}
    	g_{1}(\vec{x})=\phi_{0}(0,x_{1})h_{1}(\vec{x}),\qquad
	g_{2}(\vec{x})=\frac{\partial \phi_{0}}{\partial t}(0,x_{1})h_{1}(\vec{x}),
\end{equation}
where $h_{1}(\vec{x})$ is an infinitely differentiable function with compact support such that $h_{1}(\vec{x})=1$ if $|\vec{x}|\leq t_{2}$. Then we consider the initial-value problem 
	\begin{equation}
	    (\square+m^2)G(x)=0,\qquad G(0,\vec{x})=g_{1}(\vec{x}),\qquad 
		\frac{\partial G}{\partial x^{0}}(0,\vec{x})=g_{2}(\vec{x}).
		\label{kg2}
	\end{equation}
By uniqueness of the solution on $C(T)$ and as $\phi_{0}(x^{0},x^{1})$ satisfies \eqref{kg2}, then $G(x)=\phi_{0}(x^{0},x^{1})$ on $C(T)$.  In addition, $G$ comes from applying the causal propagator $\Delta$ to a function $g$ in $H$, which we show next. 

We know from appendix \ref{ap3}, we slight variation of   equation \eqref{candidato} changing time zero by an arbitrary time (just following the same steps),  that for any $s\in \mathbb{R}$,
\begin{align}
    G(x)&=\int_{\mathbb{R}^4}\Delta(x-y)\phi_{s}(y)dy \\
    \text{where }\phi_{s}(y)&=-\delta(y^0-s)\frac{\partial G}{\partial y^0}(s,\vec{y})-\delta'(y^0-s)G(s,\vec{y}).\label{hs}
\end{align}
Then by taking a $\chi\in C^{\infty}_{0}(\mathbb{R})$ satisfying $\int_{\mathbb{R}} \chi(s) ds=1$,  we can rewrite $G$ in a useful way
\begin{align*}
    G(x)&=\int_{\mathbb{R}} \chi(s)G(x)ds=\int_{\mathbb{R}} \chi(s) \left( \int_{\mathbb{R}^4}\Delta(x-y)\phi_{s}(y)dy \right) ds \\
    &=\int_{\mathbb{R}^4}\Delta(x-y)\left(\int_{\mathbb{R}} \chi(s)\phi_{s}(y)ds \right) dy=F_{g}(x),
\end{align*}
where 
\begin{equation}\label{g}
    g(y):=\int_{\mathbb{R}} \chi(s)\phi_{s}(y)ds.
\end{equation}
We claim that $g$ is in $H$, even more $g\in C^{\infty}_{0}(\mathbb{R}^{4})$. This follows from the fact that, since $g_1$ and $g_2$ have compact support, our solution $G$ is spacelike-compact (meaning that given $s\in \mathbb{R}$,  the supports of $G(s,\vec{y})$ and $\frac{\partial G}{\partial y^0}(s,\vec{y})$ are compact sets of $\mathbb{R}^{3}$, see \cite{Brunetti:2015vmh} for more details.). Let us see this explicitly,
\begin{align*}
    g(y)&=\int_{\mathbb{R}} \chi(s)\phi_{s}(y)ds \\
    &=-\int_{\mathbb{R}} \chi(s) \delta(y^0-s)\frac{\partial G}{\partial y^0}(s,\vec{y})+\chi(s)\delta'(y^0-s)G(s,\vec{y}) ds \\
    &=\chi'(y^{0})G(y)
\end{align*}
where in the last line we used the weak derivative. Written in this way, it is obvious that $g$ is a smooth function and also that has compact support since $\chi$ is compactly supported and $G$ is spacelike-compact.

 Previously we mentioned we are looking for a function such that does not belong to $\mathsf{S}_{S}(C(T))^{\perp}$ but belongs to $\mathsf{S}_{S}(B)^{\perp}=\mathsf{S}_{S}(C(T_{1})\cup C(T_{2}))^{\perp}$. We assert that this function is $\beta g$ because, given $f\in \mathsf{S}_{S}(C(T_{1})\cup C(T_{2}))$
\begin{equation*}
	(f,\beta g)_{H}=\int_{\mathbb{R}^{4}}f(x)F_{g}(x)dx=\int_{\mathbb{R}^{4}}f(x)G(x)dx=0,
\end{equation*}
where we used  equation \eqref{producto} and the fact that $G(x)$ vanishes on $C(T_{1})\cup C(T_{2})$. On the other hand, by a similar computation $\beta g\notin \mathsf{S}_{S}(C(T))^{\perp}$.

\section*{Acknowledgements}

We would like to thank the members of the HEPTH group at Universtiy of Buenos Aires for valuable discussions and specially an anonymous referee for a careful reading and providing constructive criticism. This work was partially supported by grants PIP and PICT from CONICET and ANPCyT. The work of G.P. is supported by the University of Buenos Aires.

\newpage

\appendix

\section{Creation and annihilation operators: a reminder} \label{ap1}

The inner product in the bosonic Fock space  $\mathfrak{H}_{T}(L)=\bigoplus_{n=0}^{\infty}\bigodot_{i=0}^{n}L$ is defined from the inner product in the one particle space $L$. If $\alpha=\odot_{i=1}^{n}\alpha_{i}$ and $\beta=\odot_{j=1}^{n}\beta_{j}$, then:
\begin{equation*}
	(\alpha,\beta)_{\mathfrak{H}_{T}(L)}=\frac{1}{n!}\sum_{\sigma\in \mathbb{S}_{n}}\prod_{i=1}^{n}(\alpha_{i},\beta_{\sigma(i)})_{L},
\end{equation*}
 and the inner product for elements of different degree vanish. Here $( , )_{L}$ is the inner product in $L$. By using this definition and the definitions \eqref{creayaniq}, we can easily see that the operators are mutually adjoint (over the dense set $\mathcal{D}=\cup_{N=0}^{\infty}\bigoplus_{n=0}^{N}\mathcal{F}^{(n)}$) 

\begin{align*}
	\big(\odot_{i=1}^{n}h_{i},a(f)(\odot_{j=1}^{n+1}g_{j})\big)_{\mathfrak{H}_{T}(L)}&=\big(\odot_{i=1}^{n}h_{i},\frac{1}{\sqrt{n+1}}\sum_{j=1}^{n+1}(f,g_{j})_{L} \odot_{r=1, r\neq j}^{n+1} g_{r} \big)_{\mathfrak{H}_{T}(L)} \\
	&=\frac{1}{\sqrt{n+1}}\sum_{j=1}^{n+1}(f,g_{j})_{L}\big(\odot_{i=1}^{n}h_{i},\odot_{r=1, r\neq j}^{n+1} g_{r} \big)_{\mathfrak{H}_{T}(L)} \\
	&=\frac{1}{n!\sqrt{n+1}}\sum_{j=1}^{n+1}(f,g_{j})_{L}\sum_{\sigma\in \mathbb{S}_{n}}\prod_{i=1}^{n}(h_{i},g_{\sigma(i)})_{L},
\end{align*}
where we think $S_{n}$ as bijective maps from $\{1,\cdots, n\}$ to $\{1, \cdots ,\hat{j}, \cdots n+1\}$ in each term. Calling $l_{1}=f$ y $l_{i+1}=h_{i}$ we can join all together in one sum
\begin{align*}
	 	\big(\odot_{i=1}^{n}h_{i},a(f)(\odot_{j=1}^{n+1}g_{j})\big)_{\mathfrak{H}_{T}(L)}&=\frac{1}{n!\sqrt{n+1}}\sum_{\sigma\in \mathbb{S}_{n+1}}\prod_{i=1}^{n+1}(l_{i},g_{\sigma(i)})_{L}. 
\end{align*}
Additionally,
\begin{align*}
		\big(a^{\ast}(f)(\odot_{i=1}^{n}h_{i}),\odot_{j=1}^{n+1}g_{j}\big)_{\mathfrak{H}_{T}(L)}&=(\sqrt{n+1}f\odot_{i=1}^{n}h_{i},\odot_{j=1}^{n+1}g_{j}\big)_{\mathfrak{H}_{T}(L)} \\
		&=\frac{\sqrt{n+1}}{(n+1)!}\sum_{\sigma \in \mathbb{S}_{n+1}}\prod_{i=1}^{n+1}(l_{i},g_{\sigma(i)})_{L}, 
\end{align*}
where we used the same notation as above. And as $\frac{\sqrt{n+1}}{(n+1)!}=\frac{1}{n!\sqrt{n+1}}$ we conclude $a^{\ast}(f)$ is the adjoint of $a(f)$ over $\mathcal{D}=\cup_{N=0}^{\infty}\bigoplus_{n=0}^{N}\mathcal{F}^{(n)}$.

Now we prove \eqref{conmut}. Let us consider $\odot_{i=1}^{n}h_{i}$ arbitrary in $\bigotimes_{i=1}^{n}L$,

\begin{align*}
	a(f)a^{\ast}(g)(\odot_{i=1}^{n}h_{i})&=a(f)\left[ \sqrt{n+1}g\odot_{i=1}^{n}h_{i} \right] \\
	&=\sqrt{n+1}a(f)\left( \odot_{i=1}^{n}l_{i} \right) \\
	&=\cancel{\sqrt{n+1}}\frac{1}{\cancel{\sqrt{n+1}}}\sum_{j=1}^{n+1}(f,l_{j})_{L}(l_{1}\odot\cdots\odot \hat{l_{j}}\odot \cdots \odot l_{n+1}  ) \\
	&=\sum_{j=1}^{n+1}(f,l_{j})_{L}(l_{1}\odot\cdots\odot \hat{l_{j}}\odot \cdots \odot l_{n+1} ),
\end{align*}
again using a simplifying notation, $l_{1}=g$ y $l_{i+1}=h_{i}$. On the other hand

\begin{align*}
	a^{\ast}(g)a(f)(\odot_{i=1}^{n}h_{i})&=a^{\ast}(g)\left[ \frac{1}{\sqrt{n}}\sum_{k=1}^{n}(f,h_{j})_{L} (h_{1}\odot\cdots\odot \hat{h_{j}}\odot \cdots \odot h_{n+1} ) \right] \\
        &=\sqrt{n}\sum_{k=1}^{n}(f,h_{j})_{L} a^{\ast}(g)(h_{1}\odot\cdots\odot \hat{h_{j}}\odot \cdots \odot h_{n+1} )    \\
        &=\frac{1}{\cancel{\sqrt{n}}}\sum_{k=1}^{n}(f,h_{j})_{L} \cancel{\sqrt{n}}(g \odot h_{1}\odot\cdots\odot \hat{h_{j}}\odot \cdots \odot h_{n+1} ).
\end{align*}
Substracting the last two expressions we taht all except the $j=1$ disappear, obtaining $(f,g)_{L}(h_{1}\odot \cdots \odot h_{n})$, that is \eqref{conmut}.

\section{An example of how  \texorpdfstring{$\beta$ acts}{}}\label{ejemplo}

As explained in the body of the article, we define can define $\beta$ on $H$ by its action on $H\cap \mathcal{S}(\mathbb{R}^{4},\mathbb{R})$ by $g(x)=(\beta f)(x)$ with,
\begin{equation*}
    \mathscr{F}(g)(p)=i\mathscr{F}(f)(p)\eta(p^0), 
\end{equation*}
where $\eta \in {C}^\infty(\mathbb{R})$ is odd and such that $\eta(p^0)=1$ for $|p^0|\geq m$. Now let us show with a simple example what $g$ looks like. Let us consider a Gaussian function $f(x^0,\vec{x})=e^{-(x^{0})^{2}-\vec{x}^2}$, then 
\begin{align*}
	g(x)=(\beta f)(x^0,\vec{x})&=-2\int_{0}^{\infty}\eta(p^0)\left[ \int_{-\infty}^{\infty} f(x^{0\prime},\vec{x})\sin(p^0(x^{0\prime}-x^0))dx^{0\prime} \right] dp^{0}\\
	&=2\int_{0}^{\infty}\eta(p^0)\sqrt{\pi}e^{-\frac{(p^0)^2}{4}-\vec{x}^2}\sin(p^0 x^0) dp^{0} \\
	&=2\sqrt{\pi}e^{-\vec{x}^2}\int_{0}^{\infty}\eta(p^0)e^{-\frac{(p^0)^2}{4}}\sin(p^0 x^0) dp^{0} \\
	&=2\sqrt{\pi}e^{-\vec{x}^2}\bigg[\int_{0}^{m}\eta(p^0)e^{-\frac{(p^0)^2}{4}}\sin(p^0 x^0) dp^{0}+\int_{m}^{\infty}e^{-\frac{(p^0)^2}{4}}\sin(p^0 x^0) dp^{0}\bigg].
\end{align*} 
In general $\eta$ is complicated enough so that the first integral cannot be cast in any familiar form. We can instead consider a limiting procedure in order to take $\eta$ as   $\eta(p^0)=\theta(p^0-m)$, for $p^0\geq 0$ and in this way getting rid of the first integral above. Of course, such $\eta$ is not smooth. So to overcome this issue we take a succession of smooth bounded functions $\eta_{n}(p^0)$ such that they vanish in the interval  $[0,m-\frac{1}{n}]$ and they are equal to $1$ for $p^0\geq m$ (and they are extended to the negative real line by the odd property). Note that for one of these functions,
\begin{equation*}
	\int_{0}^{m}\eta_{n}(p^0)e^{-\frac{(p^0)^2}{4}}\sin(p^0 x^0) dp^{0}=\int_{m-\frac{1}{n}}^{m}\eta_{n}(p^0)e^{-\frac{(p^0)^2}{4}}\sin(p^0 x^0) dp^{0},
\end{equation*}
which tends to zero because the integrand in bounded. Then, substituting these $\eta_n$ functions in the expression for $(\beta f)(x^0,x)$ above and taking the limit,
\begin{align}\label{betaf}
	(\beta f)(x^0,x)&=2\sqrt{\pi}e^{-x^2}\int_{m}^{\infty}e^{-\frac{(p^0)^2}{4}}\sin(p^0 x^0) dp^{0} \\
	&= i \pi e^{-((x^{0})^2+x^2)}\big[ erf\big( \frac{1}{2}(m-2ix^0) \big)-erf\big( \frac{1}{2}(m+2ix^0) \big)  \big],
\end{align}
where $erf$ is the error function. Due to the property $erf(z^{\ast})=erf(z)^{\ast}$, the result is indeed real, as expected.

\begin{figure}[ht]
	\centering
	\begin{subfigure}[b]{0.45\linewidth}
		\includegraphics[width=0.7\linewidth]{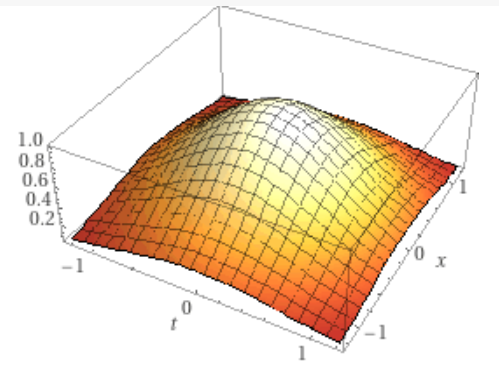}
	\end{subfigure}
	\begin{subfigure}[b]{0.45\linewidth}
		\includegraphics[width=0.8\linewidth]{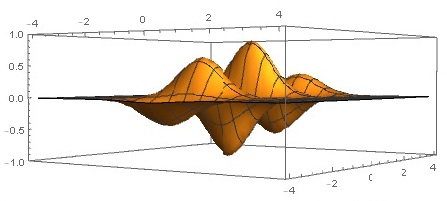}
		\end{subfigure}
	\caption{On the left, the Gaussian function $f(t,x)$. On the right, $g=\beta f$. }
	\label{f}
\end{figure}

In this example it is explicitly seen that starting from $f$ which is an even function of $x^0$, we obtain  $\beta f$ which is odd (see Figures \ref{f} and \ref{faraway}). This is actually a general feature as explained in Subsection \ref{ka}. Just to give more qualitative intuition, we show below in Figure \ref{faraway} a ``far away'' view of $\beta f$ where the rapid decay is evident consistent with $\mathcal{S}(\mathbb{R}^2,\mathbb{R})$, and a lateral view where the odd behavior in $x^0=t$ is also evident.

\begin{figure}[ht]
	\centering
	\begin{subfigure}[b]{0.45\linewidth}
		\includegraphics[width=0.75\linewidth]{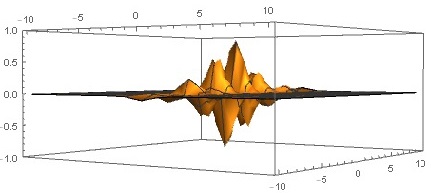}
				\end{subfigure}
	\begin{subfigure}[b]{0.45\linewidth}
		\includegraphics[width=0.75\linewidth]{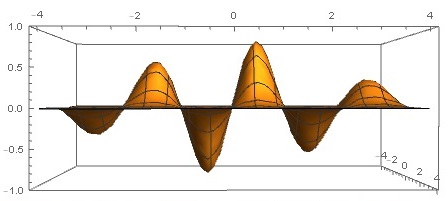}
		\label{impar}
	\end{subfigure}
\caption{On the left, $\beta f$ from  ``far away''. On the right, a  lateral view of $\beta f$ }
\label{faraway}
\end{figure}

\section{More on the initial conditions of the Klein-Gordon equation} \label{ap3}

Given  the real function $f\in C_{0}^{\infty}(\mathbb{R}^{3}) \subset \mathfrak{F}_{\pi}$ and $g\in  C_{0}^{\infty}(\mathbb{R}^{3}) \subset \mathfrak{F}_{\varphi}$, in what follows we are going to seek $h\in C^{\infty}(\mathbb{R}^4,\mathbb{R})$ such that $F_{E(h)}$ is  a smooth function that  solves the following system

\begin{equation}\label{kg}
 \text{KG initial-value problem:}\left\lbrace
	\begin{array}{ll}
		(\square+m^2)F_{E(h)}(x)=0\\
		F_{E(h)}(0,\vec{x})=f(\vec{x}) \\
		\frac{\partial F_{E(h)}}{\partial x^{0}}(0,\vec{x})=g(\vec{x})
	\end{array}
	\right.
\end{equation}
 
We follow  \cite{dimock1980algebras}, where in Corollary 1.1 is stated the relation between initial conditions and the solution. Suspecting that $E$ in that reference can only differ in a global sign with the operator $\Delta$ we have been using, we believe $h(x)$ should be $-\rho'_{0}(f)(x)+\rho'_{1}(g)(x)$. Before showing this is the solution, let us briefly describe the maps $\rho_{0}$ and $\rho_{1}$ and their corresponding pull-backs  $\rho'_{0}$ and $\rho'_{1}$.

Let $\rho_{0}:\mathcal{S}(\mathbb{R}^{4})\to \mathcal{S}(\mathbb{R}^{3})$ be the restriction to $x^{0}=0$ and $\rho_{1}:\mathcal{S}(\mathbb{R}^{4})\to \mathcal{S}(\mathbb{R}^{3})$ the derivative with respect to $x^{0}$ evaluated on the $x^{0}=0$ hypersurface. The pull-backs of these maps act by precomposition in the space of tempered distributions $\rho'_{0}:\mathcal{S}(\mathbb{R}^{3})'\to \mathcal{S}(\mathbb{R}^{4})'$, namely $\rho'_{0}(v)=v\circ\rho_{0}$ for any $v \in \mathcal{S}(\mathbb{R}^{3})'$. Similarly for $\rho_{1}'$. We can view the initial conditions as elements of the dual space $\mathcal{S}(\mathbb{R}^{3})'$, since $\mathcal{S}(\mathbb{R}^{3}) \subset \mathcal{S}(\mathbb{R}^{3})'$.

Given,  $h\in\mathcal{S}(\mathbb{R}^{4})$ and the initial condition $g$ in \eqref{kg}, 
\begin{equation*}
	(\rho_{0}'(g))(h)=(g\circ \rho_{0})(h)= g(h|_{x^{0}=0})=\int g(\vec{x}) h(0,\vec{x}) d^{3}x=\int h(x) \delta(x^{0})g(\vec{x})d^{4}x,
\end{equation*}
 from where we read  $\rho_{0}'(g)(x)=\delta(x^{0})g(\vec{x})$. Similarly, with the other initial condition and $\rho_{1}'$,
 \begin{align}
 	(\rho_{1}'(f))(h)=(f\circ \rho_{1})(h)= f\left(\frac{\partial h}{\partial x^{0}}(0,\vec{x})\right)=\int_{\mathbb{R}^{3}} f(\vec{x}) \frac{\partial h}{\partial x^{0}}(0,\vec{x}) d^{3}x,
 \end{align}
so we see that  $\rho_{1}'(f)=-\delta'(x^{0})f(\vec{x})$, where $'$ means weak derivative. Now we define the distribution 
\begin{equation}
	p(x)=-\delta(x^{0})g(\vec{x})-\delta'(x^{0})f(\vec{x}),
\end{equation}
and follow the steps of section \ref{contra}, constructing $p_{s}$ as in equation \eqref{hs} and defining an smooth function
\begin{equation}\label{candidato}
    h(x):=\int_{\mathbb{R}} \chi(s)p_{s}(x)ds,
\end{equation}
as in equation \eqref{g} in order to obtain our solution $F_{E(h)}(x)$, with $h$ a smooth function.

We can now show that $F_{E(h)}$ is the solution to \eqref{kg} given by \eqref{efe} with $h$ as in \eqref{candidato}. That it is a solution is immediate since we are propagating with the causal Klein-Gordon propagator $\Delta$. Let us then evaluate $F_{E(h)}$ at $x^{0}=0$,
\begin{align*}
	&F_{E(h)}(0,\vec{x})=\frac{-i}{(2\pi)^{\frac{3}{2}}}\int_{\mathbb{R}^{4}} e^{i\vec{p}.\vec{x}}\delta(p^2-m^2)\sgn(p^{0})\mathscr{F}(h)(p) d^{4}p \\
	&=\frac{-i}{(2\pi)^{\frac{3}{2}}}\int_{\mathbb{R}^{4}} e^{i\vec{p}.\vec{x}}\delta(p^2-m^2)\theta(p^{0})\mathscr{F}(h)(p)d^{4}p+\frac{i}{(2\pi)^{\frac{3}{2}}}\int_{\mathbb{R}^{4}} e^{i\vec{p}.\vec{x}}\delta(p^2-m^2)\theta(-p^{0})\mathscr{F}(h)(p)d^{4}p,\\
	&=\frac{-i}{(2\pi)^{\frac{3}{2}}}\int_{\mathbb{R}^{3}} e^{i\vec{p}.\vec{x}}\mathscr{F}(h)(\omega_p,\vec{p})\frac{d^{3}p}{2\omega_p}+\frac{i}{(2\pi)^{\frac{3}{2}}}\int_{\mathbb{R}^{3}} e^{i\vec{p}.\vec{x}}\mathscr{F}(h)(-\omega_p,\vec{p})\frac{d^{3}p}{2\omega_p} \\
	&=\frac{-i}{(2\pi)^{\frac{3}{2}}}\int_{\mathbb{R}^{3}} e^{i\vec{p}.\vec{x}}\frac{\mathscr{F}(h)(\omega_p,\vec{p})-\mathscr{F}(h)^{\ast}(\omega_p,-\vec{p})}{2}\frac{d^{3}p}{\omega_p} \\
	&=\frac{1}{(2\pi)^{\frac{3}{2}}}\int_{\mathbb{R}^{3}} e^{i\vec{p}.\vec{x}}\frac{1}{i\omega_p}\mathscr{F}(h_{-})(\omega_p,\vec{p})d^{3}p =\frac{1}{(2\pi)^{\frac{3}{2}}}\int_{\mathbb{R}^{3}} e^{i\vec{p}.\vec{x}}\mathscr{F}(\delta_{1}h_{-})(\vec{p})d^{3}p \\
	&=\mathscr{F}^{-1}\mathscr{F}(\delta_{1}h_{-})(\vec{x})=\delta_{1}h_{-}(\vec{x}). 
\end{align*}
And at the same time this should be equal to the initial condition $f(\vec{x})$. Just to check things work properly, let us show this explicitly,
\begin{align*}
	&F_{E(h)}(0,\vec{x})=\int_{\mathbb{R}^{4}} \frac{-1}{(2\pi)^{3}}\int_{\mathbb{R}^{3}} \sin(\omega_{p}(x^{0}-y^0)-\vec{p}.(\vec{x}-\vec{y}))\frac{d^3p}{\omega_{p}}h(y)d^4y\bigg|_{x^0=0} \\
	&=\frac{1}{(2\pi)^{3}}\int_{\mathbb{R}^{4}} \int_{\mathbb{R}^{3}} \sin(\omega_{p}y^{0})\cos(\vec{p}(\vec{x}-\vec{y}))\frac{h(y)}{\omega_p}d^{3}pd^4y\\
	&=\frac{1}{(2\pi)^{\frac{3}{2}}} \int_{\mathbb{R}^{4}} \int_{\mathbb{R}^{3}} \sin(\omega_{p}y^{0})\cos(\vec{p}(\vec{x}-\vec{y}))\frac{-\delta(y^{0})g(\vec{y})-\delta'(y^{0})f(\vec{y})}{\omega_p}d^{3}pd^4y \\
	&=\frac{-1}{(2\pi)^{3}} \int_{\mathbb{R}^{4}} \int_{\mathbb{R}^{3}} \sin(\omega_{p}y^{0})\cos(\vec{p}(\vec{x}-\vec{y}))\frac{\delta'(y^{0})f(\vec{y})}{\omega_p}d^{3}pd^4y \\
	&=\frac{1}{(2\pi)^{3}} \int_{\mathbb{R}^{4}} \int_{\mathbb{R}^{3}} \cancel{\omega_p}\cos(\omega_{p}y^{0})\cos(\vec{p}(\vec{x}-\vec{y}))\frac{\delta(y^{0})f(\vec{y})}{\cancel{\omega_p}}d^{3}pd^4y \\
	&=\frac{1}{(2\pi)^{3}} \int_{\mathbb{R}^{3}} \int_{\mathbb{R}^{3}} \cos(\vec{p}(\vec{x}-\vec{y}))f(\vec{y})d^{3}pd^{3}y \\
	&=\Re \left\{ \frac{1}{(2\pi)^{3}} \int_{\mathbb{R}^{3}} \int_{\mathbb{R}^{3}} e^{i\vec{p}(\vec{x}-\vec{y})}f(\vec{y})d^{3}pd^{3}y  \right\} \\
	&=\Re \left\{\mathscr{F}^{-1} \left[ \mathscr{F}(f) \right] (\vec{x})\right\}=\Re\left[f(\vec{x})\right]=f(\vec{x})
\end{align*}
In the second line we dropped one integral since it is odd in $\vec{p}$. This confirms that $\delta_1 h_-=f$. 
An identical computation shows that $\delta_0(- h_+)=g$.

\section{Generic position of subspaces}\label{ap4}

The aim of this Appendix is to show that the subspaces of $\mathfrak{F}_\varphi$ defined in the first quantization map \eqref{mapauno} are in generic position, namely that,
\begin{align*}
	 \mathsf{S}_{R}(B)\cap \mathsf{S}_{I}(B)&=\{0\}  \\
	 \mathsf{S}_{R}(B)\cap \mathsf{S}_{I}(B)^{\perp}&=\{0\}  \\
	 \mathsf{S}_{R}(B)^{\perp}\cap \mathsf{S}_{I}(B)&=\{0\}  \\ 
	 \mathsf{S}_{R}(B)^{\perp}\cap \mathsf{S}_{I}(B)^{\perp}&=\{0\}.
\end{align*}
In order to show the second and third lines, it is enough to prove 
$\mathsf{S}_{R}(B)^{\perp}\vee \mathsf{S}_{I}(B)=\mathfrak{F}_{\varphi}$ and $ \mathsf{S}_{R}(B)\vee \mathsf{S}_{I}(B)^{\perp}=\mathfrak{F}_{\varphi}$ (where $A\vee B=\overline{A+B}$). The former is deduced by first noticing that $\mathsf{S}_{R}(B)^{\perp}\vee \mathsf{S}_{I}(B)=\mathsf{S}_{I}(B^{c})\vee \mathsf{S}_{I}(B)$ thanks to  Theorem \ref{haagteoremauno},  and then by  definition of $\mathsf{S}_{I}$ one concludes $\mathsf{S}_{R}(B)^{\perp}\vee \mathsf{S}_{I}(B)=\bar{\beta}(j_{1}^{-1}(L^2(B^{c}))\vee j_{1}^{-1}(L^2(B)))=\bar{\beta}j_{1}^{-1}(L^2(\mathbb{R}^{3}))=\bar{\beta}\mathfrak{F}_{\pi}=\mathfrak{F}_{\varphi}$. As for $ \mathsf{S}_{R}(B)\vee \mathsf{S}_{I}(B)^{\perp}=\mathfrak{F}_{\varphi}$, it can be shown analogously by using the other relation in Theorem \ref{haagteoremauno}. 
Regarding the remaining two intersections above, we need the following result. 

\begin{lemma}[Lemma 5 in \cite{araki1964neumann}]\label{lemaap}
	If $f\in \mathsf{S}_{R}(B)$ in not zero, then $\bar{\beta}f$ never vanishes in the entire neighborhood of any point in $\left(supp(f)
	\right)^{c}$.
\end{lemma}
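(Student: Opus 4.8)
The plan is to work entirely in momentum space on the $x^0=0$ slice, where $\bar\beta$ acts as multiplication by $\omega_p^{-1}$ on $\mathfrak{F}_\varphi\cap\mathcal{S}(\mathbb{R}^3)$ (and its continuous extension), and to exploit the analyticity that $\omega_p^{-1}=(\vec p^2+m^2)^{-1/2}$ brings in. First I would recall that $\mathsf{S}_R(B)=\overline{j_2 L^2(B)}^{\|\cdot\|_\varphi}$, so a generic $f\in\mathsf{S}_R(B)$ is a limit of functions supported in $\overline B$; hence it is reasonable (and this is the content one expects from Araki's Lemma 5) to reduce to the case $f$ supported in $\overline B$ with $\mathscr{F}(f)$ controlled, and to understand $\bar\beta f$ via $\mathscr{F}(\bar\beta f)(\vec p)=\omega_p^{-1}\mathscr{F}(f)(\vec p)$. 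The function $\bar\beta f$ is then $\mathscr{F}^{-1}\big(\omega_p^{-1}\mathscr{F}(f)\big)$, i.e. the convolution of $f$ with the inverse Fourier transform of $\omega_p^{-1}$, which is a (rotationally invariant) distribution — essentially a Bessel-type kernel $G_1$ with the key property that it is \emph{real-analytic away from the origin} and nowhere vanishing there, decaying like a Yukawa potential.

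The heart of the argument: suppose for contradiction that $\bar\beta f$ vanishes on an entire neighbourhood $\mathcal{U}$ of some point $x_0\in(\mathrm{supp}\,f)^c$. Since $\bar\beta f=G_1 * f$ with $\mathrm{supp}\,f$ compact (after the reduction, or by approximation) and $G_1$ real-analytic on $\mathbb{R}^3\setminus\{0\}$, the function $x\mapsto (G_1*f)(x)$ is real-analytic on the open set $\big(\mathrm{supp}\,f\big)^c$. A real-analytic function that vanishes on a nonempty open subset of a connected open set vanishes on the whole connected component. I would therefore argue that $\big(\mathrm{supp}\,f\big)^c$ — or at least the unbounded component containing $x_0$ — forces $\bar\beta f\equiv 0$ on that component, and then, combining with the decay of $G_1$ and a Titchmarsh-support / uniqueness-of-analytic-continuation type argument, that $\mathscr{F}(\bar\beta f)=\omega_p^{-1}\mathscr{F}(f)$ must be such that $f$ itself is forced to vanish, contradicting $f\neq 0$. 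Concretely: $\omega_p^{-1}$ is nonvanishing and real-analytic in $\vec p$, so $\mathscr{F}(\bar\beta f)=0$ on any open set of momenta would already imply $\mathscr{F}(f)=0$ there; and the vanishing of $\bar\beta f$ on a spatial neighbourhood together with analyticity propagates to kill $\bar\beta f$ globally, hence $f$ globally.

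The step I expect to be the main obstacle is making the "real-analytic hence unique continuation" argument fully rigorous at the level of regularity actually available: a generic $f\in\mathsf{S}_R(B)$ is only an $\mathfrak{F}_\varphi$-limit of $L^2(B)$ functions, so $f$ need not be smooth or compactly supported in the naive sense, and $\bar\beta f$ lives in $\mathfrak{F}_\varphi$. One must either (i) first establish the statement for $f\in L^2(B)$ or $f\in C_0^\infty(B)$ and then pass to the limit carefully — noting that convergence in $\|\cdot\|_\varphi$ gives convergence of $\bar\beta f_n$ in $\|\cdot\|_\pi\subseteq L^2$, and that the smoothing by $G_1$ upgrades this to local uniform (indeed real-analytic-in-the-limit) control on $(\mathrm{supp}\,f)^c$ — or (ii) invoke ellipticity of $(\,-\Delta+m^2)^{1/2}$ / hypoellipticity of its powers to get that $\bar\beta f=(\,-\Delta+m^2)^{-1/2}f$ is automatically real-analytic wherever $f$ vanishes. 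Route (ii) is cleanest: $(-\Delta+m^2)\,(\bar\beta f) = (-\Delta+m^2)^{1/2} f$, and on $(\mathrm{supp}\,f)^c$ one in fact has $(-\Delta+m^2)^{1/2}f$ supported in $\mathrm{supp}\,f$... this requires care since $(-\Delta+m^2)^{1/2}$ is nonlocal — so one really falls back on the explicit Yukawa kernel $G_1*f$, whose analyticity on $(\mathrm{supp}\,f)^c$ is manifest. The remaining bookkeeping — identifying the inverse Fourier transform of $\omega_p^{-1}$, its analyticity and non-vanishing, and the limiting argument — is routine and I would relegate it; I would cite Araki's Lemma 5 in \cite{araki1964neumann} for the statement and record here only the analyticity-plus-unique-continuation skeleton.
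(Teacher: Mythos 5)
Your reduction of $\bar\beta$ to convolution with the Bessel--Yukawa kernel $G_1=\mathscr{F}^{-1}(\omega_p^{-1})$, and the observation that $G_1*f$ is real-analytic on $(\mathrm{supp}\,f)^{c}$, are both sound, but the argument does not close. Real-analyticity of $\bar\beta f$ holds only \emph{away from} $\mathrm{supp}\,f$, so unique continuation propagates the vanishing of $\bar\beta f$ only throughout the single connected component of $(\mathrm{supp}\,f)^{c}$ containing the given point; it does not ``kill $\bar\beta f$ globally'', and in particular says nothing about $\bar\beta f$ on $\mathrm{supp}\,f$ itself (where $\bar\beta f$ is not analytic, since $f$ is merely an $\mathfrak{F}_\varphi$-limit of $L^2(B)$ functions). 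The step from ``$f$ and $\omega^{-1}f$ both vanish on a common open set'' to ``$f\equiv 0$'' is precisely the anti-locality of $\omega^{-1}=(-\Delta+m^2)^{-1/2}$, which \emph{is} the content of the lemma; deferring it to ``routine bookkeeping'' or to a citation of Araki's Lemma 5 amounts to citing the statement being proven. Your parenthetical momentum-space remark (that $\mathscr{F}(\bar\beta f)=0$ on an open set of momenta would force $\mathscr{F}(f)=0$ there) is true but irrelevant: the hypothesis gives vanishing on an open set in \emph{position} space.

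The missing idea --- the one the paper uses --- is to pass to four dimensions. Form the positive-frequency solution $F(z)=(2\pi)^{-3/2}\int_{\mathbb{R}^3} e^{i(\vec p\cdot\vec z-\omega_p z^0)}\mathscr{F}(f)(\vec p)\,\omega_p^{-1}\,d^3p$, whose Cauchy data at $x^0=0$ are $F(0,\vec x)=(\bar\beta f)(\vec x)$ and $\partial_{x^0}F(0,\vec x)=-if(\vec x)$, and which is holomorphic on the whole tube of points $z$ with $\Im z$ in the open past light cone --- a single connected complex domain having all of $\mathbb{R}^4$ as boundary. If $\bar\beta f$ and $f$ both vanish on a ball around $\vec y$, finite propagation speed makes $F$ vanish on a real four-dimensional open set, and the identity theorem for boundary values of functions analytic in a tube then forces $F\equiv 0$, hence $f\equiv 0$. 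This \emph{global} tube analyticity is exactly what replaces the merely local analyticity of $G_1*f$ and is the ingredient your plan lacks. A purely three-dimensional proof of anti-locality does exist (via the branch points of $(\zeta^2+m^2)^{-1/2}$ on the complexified momentum variety, as in Segal--Goodman), but it is a substantive argument in its own right, not bookkeeping.
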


In order to prove this, we consider the function $\displaystyle{F(z)=\frac{1}{(2\pi)^{\frac{3}{2}}}\int_{\mathbb{R}^{3}} e^{i(\vec{p}\cdot\vec{z}-\omega_{p}z^{0})}\mathscr{F}(f)(\vec{p})\omega^{-1}_{p} d^{3}p}$. For  $x\in \mathbb{R}^{4}$ it is a solution of Klein-Gordon equation $(\square_{x}+m^{2})F(x)=0$ with initial conditions $F(0,\vec{x})=(\bar{\beta}f)(\vec{x})$ and $\displaystyle{\frac{\partial F}{\partial x^{0}}}(0,\vec{x})=-if(\vec{x})$. Moreover,  $F(z)$ is analytic for $\Im z$ in the past light cone, since the integrand is analytic and the integral converges. Now, let us consider a point $\vec{y}\in supp(f)^{c}$ and assume that $\bar{\beta}f$ vanishes in any real three-dimensional neighborhood of $\vec{y}$. Then, since $f$ also vanishes there, the solution to the Klein-Gordon equation $F(x)$ vanishes on the real four-dimensional nieghborhood of $(0,\vec{y})$. Then, by the identity theorem, $F(z)$ vanishes identically, implying $f(x)=0$.

We can now show that $\mathsf{S}_{R}(B)\cap \mathsf{S}_{I}(B)=\{0\}$. This is equivalent to $\bar{\beta}\mathsf{S}_{R}(B)\cap \bar{\beta}\mathsf{S}_{I}(B)=\{0\}$, since $\bar{\beta}$ is unitary. Let us consider then some $g\in \bar{\beta}\mathsf{S}_{R}(B)\cap \bar{\beta}\mathsf{S}_{I}(B)$. As $g\in \bar{\beta}\mathsf{S}_{I}(B)=j_1^{-1}L_{2}(B)$,  $g$ must vanish in all $B^{c}$, but at the same time, by the previous lemma and the fact that $g\in \bar{\beta}\mathsf{S}_{R}(B)$, $g$  cannot be zero in an entire neighborhood of any point in $B^{c}$ unless it vanishes identically. Therefore $\bar{\beta}\mathsf{S}_{R}(B)\cap \bar{\beta}\mathsf{S}_{I}(B)=\{0\}$. As for the remaining equality $ \mathsf{S}_{R}(B)^{\perp}\cap \mathsf{S}_{I}(B)^{\perp}=\{0\}$, we can apply Theorem \ref{haagteoremauno} in order to write it as  $\mathsf{S}_{I}(B^{c})\cap \mathsf{S}_{R}(B^{c})=\{0\}$ which is equivalent to the first one, which we  just proved, exchanging $B$ with $B^{c}$.

\bibliographystyle{toine}
\bibliography{biblio}{}

\end{document}